\newcounter{mynotes}
\newcommand{\mnote}[1]{\addtocounter{mynotes}{1}{{}}%
\todo[color=blue!20!white]{[\arabic{mynotes}] \scriptsize  {{\sf {#1}}}}}
\declaretheorem[within=section]{theorem}
\declaretheorem[sibling=theorem]{corollary}
\declaretheorem[sibling=theorem]{lemma}
\declaretheorem[sibling=theorem]{claim}
\declaretheorem[sibling=theorem]{definition}
\declaretheorem[sibling=theorem]{Lemma+Definition}
\declaretheorem[sibling=theorem]{remark}
\declaretheorem[sibling=theorem]{observation}
\declaretheorem[sibling=theorem]{conjecture}
\newenvironment{proofof}[1]{\noindent{\bf Proof
of #1:}}{\qed\bigskip}
\crefname{conjecture}{Conjecture}{Conjectures}
\crefname{claim}{Claim}{Claims}
\crefname{remark}{Remark}{Remarks}
\crefname{Lemma+Definition}{Lemma+Definition}{Lemma+Definition}
\newcounter{termcounter}
\renewcommand{\thetermcounter}{\Alph{termcounter}}
\crefname{term}{term}{terms}
\def\term{\@ifnextchar[\term@optarg\term@noarg}
\def\term@optarg[#1]#2{%
  \textup{(#1)}%
  \def\@currentlabel{#1}%
  \def\cref@currentlabel{[][2147483647][]#1}%
  \cref@label[term]{#2}}
\def\term@noarg#1{%
  \refstepcounter{termcounter}%
  \textup{(\thetermcounter)}%
  \cref@label[term]{#1}}
\newcommand{\mrm}[1]{\mathrm {#1}}
\newcommand{\mv}[1]{\mathbf {#1}}
\newcommand{\msf}[1]{\mathsf {#1}}
\newcommand{\mcl}[1]{\mathcal {#1}}
\newcommand{\nfrac}{\nicefrac}
\newcommand{\ignore}[1]{}
\newcommand{\bits}{\{0,1\}}
\newcommand{\Brac}[1]{\left[#1 \right]}
\newcommand{\set}[1]{\left\{#1\right\}}
\definecolor{DSred}{rgb}{1,0,0}
\renewcommand{\leq}{\leqslant}
\renewcommand{\geq}{\geqslant}
\renewcommand{\ge}{\geqslant}
\renewcommand{\le}{\leqslant}
\renewcommand{\epsilon}{\varepsilon}
\newcommand{\eps}{\epsilon}
\newcommand{\R}{\mathbb{R}}
\newcommand{\C}{\mathbb{C}}
\newcommand{\Z}{\mathbb{Z}}
\newcommand{\N}{\mathbb{N}}
\newcommand{\F}{\mathbb{F}}
\newcommand{\T}{\mathbb{T}}
\newcommand{\U}{\mathbb{U}}
\newcommand{\cA}{\mathcal A}
\newcommand{\cB}{\mathcal B}
\newcommand{\cC}{\mathcal C}
\newcommand{\cG}{\mathcal G}
\newcommand{\cL}{\mathcal L}
\newcommand{\cP}{\mathcal P}
\newcommand{\Esymb}{{\bf E}}
\newcommand{\Psymb}{{\bf Pr}}
\DeclareMathOperator*{\E}{\Esymb}
\DeclareMathOperator*{\ProbOp}{\Psymb}
\renewcommand{\Pr}{\ProbOp}
\newcommand{\Ex}[1]{\E\Brac{#1}}
\newcommand{\expo}[1]{{\mathsf{e}\left(#1\right)}}
\title{Every locally characterized affine-invariant property is testable}
\author{Arnab Bhattacharyya\thanks{DIMACS \& Rutgers
    University. Email: \texttt{arnabb@dimacs.rutgers.edu}.}
 \and Eldar Fischer\thanks{Technion. Email: \texttt{eldar@cs.technion.ac.il}.}
 \and Hamed Hatami\thanks{McGill University. Email: \texttt{hatami@cs.mcgill.ca}.}
 \and
  Pooya Hatami\thanks{University of Chicago. Email: \texttt{pooya@cs.uchicago.edu}.}
 \and Shachar Lovett\thanks{Institute for Advanced Study and
   University of California, San Diego. Email: \texttt{slovett@math.ias.edu}.}
}
\begin{document}
\maketitle

\begin{abstract}
Let $\F = \F_p$ for any fixed prime $p\geq 2$.
An affine-invariant property is a property of functions on $\F^n$ that
is closed under taking affine transformations of the domain. We prove that all affine-invariant properties that have local
characterizations are testable.  In fact, we give a proximity-oblivious
test for any such property $\cP$, meaning that given an input function $f$, we make a constant number of queries to
$f$, always accept if $f$ satisfies $\cP$, and otherwise reject with probability larger than a positive number
that depends only on the distance between $f$ and $\cP$. More generally, we show
that any affine-invariant property that is closed under taking
restrictions to subspaces and has bounded complexity is testable.

We also prove that any property that can be described as the property
of being decomposable into a known structure of low-degree polynomials is
locally characterized and is, hence, testable. For example, whether a
function is a product of two degree-$d$ polynomials, whether a
function splits into a product of $d$ linear polynomials, and whether
a function has low rank are all examples of degree-structural
properties and are therefore locally characterized.

Our results use a new Gowers inverse theorem by Tao and Ziegler
for low characteristic fields that decomposes any polynomial with
large Gowers norm into a function of a small number of low-degree {\em non-classical
  polynomials}. We establish a new equidistribution result for high rank
non-classical polynomials that drives the proofs of both the testability results and the local characterization of degree-structural properties.
\end{abstract}

\section{Introduction}\label{sec:intro}
The field of property testing, as initiated by \cite{BLR,BFL} and
defined formally by \cite{RS,GGR}, is the study of algorithms that
query their input a very small number of times and with high
probability decide correctly whether their input satisfies a given property or
is ``far'' from satisfying that property.  A property is called {\em
testable}, or sometimes {\em strongly testable} or {\em locally
testable}, if the number of queries can be made independent of the
size of the object without affecting the correctness probability.
Perhaps surprisingly, it has been found that a large number of
natural properties satisfy this strong requirement; see e.g.\
the surveys \cite{FischerSurvey, RubinfeldICM,
RonSurvey09, SudanSurvey} for a general overview.

The focus of our work is on testing properties of multivariate
functions over finite fields. Fix a prime $p \geq 2$ and an integer $R
\geq 2$ throughout. Let $\F = \F_p$. We consider properties of
functions $f : \F^n \to \set{1, \dots, R}$. Our main result shows that any such
property that is invariant with respect to affine transformations on
$\F^n$ and
that is locally characterized is testable.  Furthermore, we show that
a large class of natural algebraic properties whose query complexity
had not been previously studied are locally characterized
affine-invariant properties and are, hence, testable. Our results
constitute an exact characterization of proximity-obliviously testable
properties, the most common notion of testability considered for
algebraic properties. In the rest of this section, we motivate and
describe our results in more detail.

\subsection{Testability and  Invariances}
Let $[R]$ denote the set $\set{1, \dots, R}$.
Given a property $\cP$ of functions in $\{\F^n \to [R] \ | \ n \in \Z_{\ge 0}\}$,
we say that $f : \F^n \to [R]$ is {\em $\eps$-far} from $\cP$ if
$$\min_{g \in \cP} \Pr_{x \in \F^n}[f(x) \neq g(x)] > \eps,$$
and we say that it is {\em $\eps$-close} otherwise.

\begin{definition}[Testability]\label{testable}
A property $\cP$ is said to be {\em testable} (with one-sided error)
if there are functions $q: (0,1) \to \Z_{> 0}$, $\delta: (0,1) \to (0,1)$,
and an algorithm $T$ that, given as input a parameter $\eps > 0$ and oracle
access to a function $f: \F^n \to [R]$, makes at most $q(\eps)$
queries to the oracle for $f$, always accepts if $f \in \cP$ and
rejects with probability at least $\delta(\eps)$ if $f$ is $\eps$-far
from $\cP$. If, furthermore, $q$ is a constant function, then $\cP$ is
said to be {\em proximity-obliviously testable (PO testable)}.
\end{definition}

The term proximity-oblivious testing is coined by Goldreich and Ron in~\cite{MR2792371}.
As an example of a testable (in fact, PO testable) property, let us recall the famous result by Blum, Luby and Rubinfeld
\cite{BLR} which initiated this line of research. They showed
that linearity of a function $f: \F^n \to \F$ is testable by a test
which makes $3$ queries. This test accepts if $f$ is linear and rejects with
probability $\Omega(\eps)$ if $f$ is $\eps$-far from linear.

Linearity, in addition to being testable, is also an example of a
{\em linear-invariant} property. We say that a property $\cP
\subseteq \{\F^n \to [R]\}$ is linear-invariant if it is the case
that for any $f \in \cP$ and for any linear transformation $L:
\F^n \to \F^n$, it holds that $f\circ L \in \cP$.  Similarly, an
{\em affine-invariant} property is closed under composition with
affine transformations $A: \F^n \to \F^n$ (an affine transformation
$A$ is of the form $L+c$ where $L$ is linear and $c$ is a constant).
The property of a function $f: \F^n \to \F$ being affine is testable
by a simple reduction to \cite{BLR}, and is itself affine-invariant.
 Other well-studied
examples of affine-invariant (and hence, linear-invariant) properties
include Reed-Muller codes (in other words, bounded degree
polynomials)
\cite{BFL, BFLS,FGLSS,RS,AKKLR}
and Fourier sparsity
\cite{GOSSW}.
In fact, affine invariance seems to be a common feature of most interesting
properties that one would classify as ``algebraic''.  Kaufman and
Sudan in
\cite{KS08}
made
explicit note of this phenomenon and initiated a general study of the testability of
affine-invariant properties (see also~\cite{MR2863292}). In particular, they asked for necessary
and sufficient conditions for the testability of affine-invariant
properties.
{
\subsection{Locally Characterized Properties}\label{sec:degstructintro}

The result summarized in the title of this paper gives a necessary and
sufficient condition for affine-invariant
properties to be PO testable. Let us first see why {``local characterization''} is a
necessary condition for PO testability.

For a PO testable property $\cP$, if a function
$f$ does not satisfy $\cP$, then by \cref{testable}, the tester rejects $f$ with positive
probability. Since the test always accepts functions with the property, there must be $q$ points $x_1, \dots, x_q \in
\F^n$ that form a witness for non-membership in $\cP$. These
are the queries that cause the tester to reject. Thus, denoting
$\sigma = (f(x_1), \dots, f(x_q)) \in [R]^q$, we say that $\cC = (x_1,
x_2, \dots, x_q; \sigma)$ forms a {\em $q$-local constraint} for
$\cP$. This means that whenever the constraint is violated by a
function $g$, i.e., $(g(x_1), \dots, g(x_q)) = \sigma$, we know that $g$ is not in $\cP$.   A
property $\cP$ is {\em $q$-locally characterized} if there
exists a collection of $q$-local constraints $\cC_1, \dots, \cC_m$ such that
$g \in \cP$ if and only if none of the constraints  $\cC_1, \dots,
\cC_m$ are violated.  It follows from the above discussion that if $\cP$ is
PO testable with $q$ queries, then $\cP$ is
$q$-locally characterized. We say $\cP$ is {\em locally characterized}
if it is $q$-locally characterized for some constant $q$.

We now give some examples of locally characterized
affine-invariant properties. Consider the property
of being affine. It is $4$-locally characterized because a
function $f$ is affine if and only if $f(x)-f(x+y)-f(x+z)+f(x+y+z) = 0$
for every $x,y,z\in \F^n$. Note that this characterization
automatically suggests a $4$-query test: pick random $x, y, z \in
\F^n$ and check whether the identity holds or not for that choice of
$x, y, z$.
More generally, consider the property of being a
polynomial of degree at most $d$, for some fixed integer $d > 0$.
The property is known to be PO testable due to
independent work
of \cite{KR06,  JPRZ}, and their test is based upon a $p^{\lceil
\frac{d+1}{p-1}\rceil}$-local characterization. Again, the test is
simply to pick a random constraint and check if it is violated.

Indeed, for any $q$-locally characterized property $\cP$ defined by
constraints $\cC_1, \dots, \cC_m$, one can design the following
$q$-query test: choose a constraint $\cC_i$ uniformly at random and
reject only if the input function violates $\cC_i$. Clearly, if the
input function $f$ is in $\cP$, the test always accepts. The
question is the probability with which a function $\eps$-far from
$\cP$ is rejected. We show that for affine-invariant properties, this test always rejects with
probability bounded away from zero for every constant $\eps > 0$.
\begin{theorem}\label{thm:main}
Every $q$-locally characterized affine-invariant property is
proximity-obliviously testable   with $q$ queries.
\end{theorem}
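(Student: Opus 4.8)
The test is the natural one: replace the given family of $q$-local constraints by its closure under the affine group of $\F^n$ (which does not change the set of functions satisfying all constraints, by affine-invariance of $\cP$), and then reject iff a uniformly random constraint is violated. For fixed $q$ there are only finitely many affine-equivalence classes (``shapes'') of $q$-point tuples $(x_1,\dots,x_q)$ — the number depends only on $p$ and $q$, since a shape records just which affine combinations of the $x_i$ coincide — so the closed family is described by a finite list of shapes $S_1,\dots,S_k$ together with a forbidden set $F_{S_j}\subseteq[R]^q$ of value-patterns for each $S_j$; in particular $\cP$ is determined by constant-size data, i.e.\ has bounded complexity. The test draws a uniformly random shape, a uniformly random affine copy of it, and rejects iff $f$'s pattern on that copy lies in the corresponding forbidden set; one-sided error is immediate. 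So \cref{thm:main} reduces to a removal statement: there is $\delta=\delta(\eps)>0$ such that if, for every shape $S_j$, a random affine copy of $S_j$ is a violated constraint with probability $<\delta$, then $f$ differs from some function in $\cP$ on at most an $\eps$-fraction of $\F^n$.

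To prove this I would run the polynomial regularity lemma (a consequence of the Tao--Ziegler inverse theorem) on $f$ with target error $\eps/2$ and a rapidly growing rank function $r(\cdot)$ chosen below: this yields non-classical polynomials $P_1,\dots,P_C$ of degree less than some $d=d(q)$ — with $d$ large enough that the only linear identities satisfied by $(Q(x_1),\dots,Q(x_q))$ for all degree-$<d$ polynomials $Q$ and all configurations $(x_1,\dots,x_q)$ of shape $S$ are the ``universal'' ones coming from derivative/Gowers identities — of complexity $C\le C_0(\eps)$ and rank at least $r(C)$, together with a function $\Gamma$ so that $f=\Gamma(P_1,\dots,P_C)$ off an exceptional set $E$ with $\abs{E}<(\eps/2)p^n$. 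The engine is then the new equidistribution theorem for high-rank non-classical polynomials: since $r(C)$ is enormous relative to $C$, a uniformly random affine copy of any shape $S$, pushed through $(P_1,\dots,P_C)$, induces a distribution on value-configurations that is $o(1)$-close to the uniform distribution on the subspace $\cL_S$ of the relevant torus cut out by exactly the universal degree-$<d$ relations, with no proper sub-collection of the $q$ atoms overrepresented and the conditional distribution of the points near-independent within their atoms.

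Given this, I would argue in the style of a graph removal lemma. If some configuration $c\in\cL_S$ has $\Gamma^{\otimes q}(c)\in F_S$ and no atom occurring in $c$ is ``heavy'' — occupying at least half of its atom with $E$ — then a random affine copy of $S$ realizes $c$ with probability $\ge 1/\abs{\cL_S}-o(1)$ and lands entirely outside $E$ with conditional probability $\ge 2^{-q}-o(1)$, hence is an actual violated constraint of $f$ with probability $\ge 2^{-q}/\abs{\cL_S}-o(1)$. Choosing $\delta(\eps)$ below $2^{-q}/\bigl(2\max_j\abs{\cL_{S_j}}\bigr)$ — a positive quantity depending only on $\eps$, since the $\abs{\cL_{S_j}}$ are bounded by a function of $C_0(\eps)$ — this contradicts the hypothesis, so every forbidden configuration of every shape must involve a heavy atom; heavy atoms have total density at most $\eps$ since $E$ does. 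It then remains to reassign $\Gamma$ on the (bounded collection of) heavy atoms so that the resulting function $g$ has no forbidden configuration of any shape whatsoever, giving $g\in\cP$ with $\Pr_x[f(x)\ne g(x)]\le\eps$; this is a bounded constraint-satisfaction problem whose solvability I would establish using the structure of $\cP$ together with another appeal to high rank (pairing each heavy atom with an ``indistinguishable'' non-heavy one).

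The hard part is exactly this removal/extension step and the equidistribution theorem behind it. Because the inverse theorem in low characteristic produces non-classical polynomials, ``degree $<d$'', the relations defining $\cL_S$, and the equidistribution statement must all be phrased over $\R/\Z$ with care for the depths of the $P_i$, and one must check that an arbitrary $q$-local constraint is ``seen'' at degree $<d(q)$, so that the genuine constraints of $\cP$ indeed map into the subspaces $\cL_S$. A secondary but essential point is the quantifier order: the rank-growth function $r(\cdot)$ must be fixed before the complexity $C$ is revealed, so that $r(C)$ dominates $\max_j\abs{\cL_{S_j}}$ and hence $1/\delta(\eps)$ — which is precisely what a rank-growth parameter in the regularity lemma provides.
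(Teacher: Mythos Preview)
Your high-level plan --- regularize, use equidistribution over atoms, and run a removal argument --- matches the paper's strategy, but the decomposition you invoke does not exist, and this is the crux of the matter.

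You write that the regularity lemma yields $f=\Gamma(P_1,\dots,P_C)$ off an exceptional set $E$ of density $<\eps/2$. It does not. What the decomposition theorems give (see \cref{thm:strongdecomp}) is $f^{(i)}=f_1^{(i)}+f_2^{(i)}+f_3^{(i)}$ where $f_1^{(i)}=\E[f^{(i)}|\cB]$ is $\cB$-measurable, $\|f_3^{(i)}\|_2$ is small, and $\|f_2^{(i)}\|_{U^{d+1}}$ is small. The function $f_2^{(i)}$ has range $[-1,1]$ and is \emph{not} pointwise small on any large set; it can only be discarded \emph{inside the expectation} via the counting lemma (\cref{gowerscount}), which is exactly where the Cauchy-Schwarz complexity $d$ of the affine forms enters. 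Your argument ``lands entirely outside $E$ with conditional probability $\geq 2^{-q}-o(1)$, hence is an actual violated constraint'' implicitly assumes $f$ equals the structured part on $\F^n\setminus E$, which is false.

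There is a second, subtler gap. Even accepting a three-term decomposition, the $L^2$-bound on $f_3$ can be made small in $\eps$ but \emph{not} as a function of the complexity $C$ of the factor; the paper says this explicitly. This is why the paper needs the two-level Subatom Selection theorem (\cref{thm:subatom2}): one works with a coarse factor $\cB$ and a fine factor $\cB'\succeq_{syn}\cB$, with $f_1$ measurable with respect to $\cB'$ but $\|f_3\|_2\le\delta(|\cB|)$ depending only on the coarse complexity, and with a selected subatom $(c,s)$ in each coarse atom $c$ on which $f_3$ is $L^2$-small and which well-represents $c$. The cleanup function $F$ and the main analysis are built on this two-scale structure; your single-factor picture cannot make the quantifiers close.

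Finally, your last paragraph --- ``reassign $\Gamma$ on the heavy atoms so that the resulting function $g$ has no forbidden configuration'' --- is not a step one can wave through. The paper avoids this constraint-satisfaction problem entirely by arguing the contrapositive: build an explicit $F$ that is $\eps/2$-close to $f$, observe $F\notin\cP$ hence $F_\cB$ partially induces some constraint, and then count copies of \emph{that} constraint in $f$ directly using equidistribution for the structured term, the counting lemma for $f_2$, and the subatom $L^2$-control for $f_3$.
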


\subsection{Subspace Hereditary Properties}
Just as a necessary condition for PO testability is local
characterization, one can formulate a natural condition that is
(almost) necessary for testability in general. In the context of
affine-invariant properties, the condition can be succinctly stated as
follows:
\begin{definition}[Subspace hereditary properties]
An affine-invariant property $\cP$ is said to be {\em (affine) subspace
  hereditary} if for any  $f: \F^n \to [R]$ satisfying $
\cP$, the restriction of $f$ to any affine subspace of $\F^n$ also
satisfies $\cP$.
\end{definition}
In \cite{BGS10}, it is shown that {every} affine-invariant property
testable by a ``natural'' tester is very ``close'' to a
subspace hereditary property\footnote{We omit the technical definitions
of ``natural'' and ``close'', since they are unimportant
here. Informally, the behavior of a ``natural'' tester is independent of the
size of the domain and ``close'' means that the property deviates from
an actual affine subspace hereditary property on functions over a
finite domain. See \cite{BGS10} for details, or \cite{AS08} for
the analogous definitions in a graph-theoretic
context.}. Thus, if we gloss over some technicalities,  subspace
hereditariness is a necessary condition for testability. In the
opposite direction, \cite{BGS10} conjectures the following:
\begin{conjecture}[\cite{BGS10}]\label{conj:main}
Every subspace hereditary property is testable.
\end{conjecture}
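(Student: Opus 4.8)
The plan is to analyze the natural ``random restriction'' tester and reduce the conjecture to an arithmetic removal lemma. Given oracle access to $f\colon\F^n\to[R]$ and a proximity parameter $\eps>0$, the tester picks a uniformly random affine subspace $W\subseteq\F^n$ of dimension $d=d(\eps)$, reads $f$ on all $p^{d}$ points of $W$, and rejects if and only if the restriction $f|_W$, viewed as a function on $\F^{d}$, fails $\cP$. Since $\cP$ is subspace hereditary, $f\in\cP$ forces every restriction into $\cP$, so the tester has one-sided error and makes $p^{d(\eps)}$ queries. The entire content is therefore the removal statement: there is $\delta(\eps)>0$ so that if $f$ is $\eps$-far from $\cP$ then $\Pr_W[\,f|_W\notin\cP\,]\ge\delta(\eps)$. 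I will prove the contrapositive: if $f|_W\in\cP$ with probability at least $1-\delta$ over a random $d$-dimensional $W$, then, for a suitable choice of $\delta=\delta(\eps)$ and $d=d(\eps)$, the function $f$ is $\eps$-close to $\cP$.

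First I would regularize $f$ by higher-order Fourier analysis. Using the Tao--Ziegler inverse theorem together with the polynomial regularity lemma, I obtain a polynomial factor $\cB$ of bounded degree $k=k(\eps)$ generated by non-classical polynomials $P_1,\dots,P_C$ of rank as high as desired, so that $f=f_1+f_2$ with $f_1=\E[f\mid\cB]$ and $\|f_2\|_{U^{k+1}}$ negligible; rounding each atom of $\cB$ to its majority $[R]$-value yields $\widetilde f=\Gamma\circ(P_1,\dots,P_C)$ with $\Pr_x[f(x)\neq\widetilde f(x)]$ small. The point of the high rank is the new equidistribution theorem for high-rank non-classical polynomials established in this paper: for a random affine subspace $W$ of dimension $d$, with probability $1-o_d(1)$ the restrictions $P_1|_W,\dots,P_C|_W$ are again of high rank and the joint distribution of $\big(P_1|_W(y),\dots,P_C|_W(y)\big)_{y\in W}$ is $o_d(1)$-close to its distribution on $\F^n$. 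Hence $\widetilde f|_W=\Gamma\circ(P_1|_W,\dots,P_C|_W)$ carries essentially the same atom statistics as $\widetilde f$, and for almost every $W$ the restriction $f|_W$ is close to $\widetilde f|_W$.

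The crux is to pass from ``$f|_W\in\cP$ for almost every $d$-dimensional $W$'' to ``a function built from the factor $\cB$ lies in $\cP$ on $\F^n$'', whence one rounds back to an honest $g\in\cP$ within distance $\eps$ of $f$. The mechanism is that $\cP$ is affine-invariant and subspace hereditary, so membership of $\widetilde f$ in $\cP$ is ``locally detectable'' through $\cB$: the behaviour of $\widetilde f$ on an affine subspace is determined, up to the equidistribution error, by which atoms of $\cB$ are hit and in what proportions, and equidistribution lets one realize on $\F^n$ (indeed on a random large subspace) any atom-configuration observed on a $d$-dimensional $W$. When $\cP$ has bounded complexity this closes cleanly, because bounded complexity forces $\cP$ to be $q$-locally characterized for a bounded $q$ --- one argues, exactly as for the degree-$d$ property, that constraints of bounded arity suffice --- and then \cref{thm:main} applies directly. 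This is the route giving the bounded-complexity case promised in the abstract.

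The main obstacle is the \emph{unbounded-complexity} regime, where no single factor describes $\cP$ across all $n$ and no finite local characterization is available, so the reconstruction step above has nothing finite to anchor to. To handle it I would argue by compactness. One builds the space of ``function limits'' over the inverse limit of polynomial factors --- a nilspace-type object --- topologized by convergence of all finite-dimensional subspace-restriction patterns; since for each dimension there are only finitely many such patterns, a diagonal argument makes this space compact. A subspace hereditary $\cP$ cuts out a closed subset $\overline\cP$, namely the limit objects all of whose finite restrictions satisfy $\cP$. If the removal statement failed, there would be $f_n\colon\F^{k_n}\to[R]$, each $\eps$-far from $\cP$ but with rejection probability tending to $0$; a convergent subsequence has a limit $F\in\overline\cP$; and a reconstruction step --- the arithmetic analogue of recovering an honest object from its regularity data and pattern counts, powered by affine-invariance and the high-rank equidistribution theorem --- would produce genuine $g_n\in\cP$ on $\F^{k_n}$ with $\Pr_x[f_n(x)\neq g_n(x)]\to0$, contradicting $\eps$-farness. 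I expect this last reconstruction step --- showing that membership in the closed hull $\overline\cP$ can be approximately realized by honest $\cP$-functions on finite domains, without covertly reintroducing a complexity bound --- to be where essentially all of the difficulty lies.
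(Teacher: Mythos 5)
The statement you are trying to prove is Conjecture~\ref{conj:main}, which the paper explicitly leaves open: the authors state they are ``not yet able to confirm or refute'' it and prove testability only under the additional assumption that the defining constraints have bounded Cauchy--Schwarz complexity (Theorems~\ref{thm:main2} and~\ref{thm:main3}). Your proposal does not close this gap either. The entire difficulty sits in your last paragraph, and you say so yourself: the ``reconstruction step'' that would turn membership of a limit object in $\overline{\cP}$ into honest functions $g_n\in\cP$ close to $f_n$ is exactly the open content of the conjecture, and no argument is given for it. A compactness/diagonalization framework by itself cannot supply it, because the analytic machinery you invoke earlier has nothing to anchor to in the unbounded-complexity regime: the Counting Lemma (\cref{gowerscount}) controls the count of a constraint pattern by $\|f_2\|_{U^{d+1}}$ only when the Cauchy--Schwarz complexity of the forms is at most $d$, so for a general subspace hereditary property there is no fixed $k=k(\eps)$ for which a decomposition $f=f_1+f_2$ with $\|f_2\|_{U^{k+1}}$ small lets you transfer subspace-restriction statistics back to $\F^n$. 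This is precisely why the paper restricts to collections $\cA$ of complexity $\leq d$.

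A secondary but genuine error: your claim that bounded complexity ``forces $\cP$ to be $q$-locally characterized for a bounded $q$,'' so that \cref{thm:main} applies directly, is false. A bounded-complexity collection $\cA$ may contain infinitely many constraints on unboundedly many variables (so no finite locality $K$ exists); the paper handles this case not through local characterization but through the compactness function $\Psi_\cA$ of \cref{def:psi} inside the proof of \cref{mainthm2}, together with the decomposition theorems of \cite{BFL12} and the equidistribution results of \cref{sec:equid}, and the resulting query complexity $q_\cA(\eps)$ genuinely depends on $\eps$ unless $\cA$ is finite. So even the portion of your argument that is supposed to recover the paper's actual theorem takes a shortcut that does not work, and the portion aimed at the full conjecture is a program rather than a proof.
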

Resolving Conjecture \ref{conj:main} would yield a
combinatorial {\em characterization} of the (natural) one-sided testable
affine-invariant properties, similar to the characterization for
testable dense graph properties \cite{AS08}.
Although we are not yet able to confirm or refute the full
\cref{conj:main}, we can show testability if we make an additional
assumption of ``bounded complexity'', defined formally in Section~\ref{sec:complexity}.
\begin{theorem}[Informal]\label{thm:main2}
Every subspace hereditary property of ``bounded complexity'' is testable.
\end{theorem}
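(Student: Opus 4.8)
\textbf{Proof proposal for \cref{thm:main2}.}

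The plan is to reduce \cref{thm:main2} to \cref{thm:main}, i.e., to the case of locally characterized properties, by exploiting the ``bounded complexity'' hypothesis together with an appropriate compactness/regularity argument. The intuition is that a subspace hereditary property, even if it is not literally cut out by finitely many local constraints, should be \emph{approximated arbitrarily well} by local constraints once complexity is bounded: the set of ``short'' violations one can witness inside small-dimensional subspaces is governed by the complexity parameter, and any far-from-$\cP$ function ought to exhibit such a short violation with noticeable probability. Concretely, I would first show that for a bounded-complexity subspace hereditary property $\cP$, the collection of all $q$-local constraints it satisfies (over all $q$ up to some bound depending on the complexity and on $\eps$) ``locally characterizes'' a property $\cP_\eps \supseteq \cP$ that agrees with $\cP$ up to distance $\eps$ on every domain $\F^n$; then \cref{thm:main} applied to $\cP_\eps$, or rather its finite sub-characterization, gives a proximity-oblivious test for $\cP_\eps$, and composing the tests over a suitable sequence of scales yields a (two-sided-query-budget, one-sided-error) tester for $\cP$ with query count $q(\eps)$.

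The key steps, in order, are: (1) make precise the notion of ``bounded complexity'' from Section~\ref{sec:complexity} and record that it bounds, uniformly in $n$, the Gowers-norm / rank-based complexity of the polynomial factors needed to describe membership in $\cP$ on any subspace; (2) use the new equidistribution result for high-rank non-classical polynomials (advertised in the abstract) to argue that membership in $\cP$ of a function $f:\F^n\to[R]$ is, up to an $\eps$-fraction of inputs, determined by the ``type'' of $f$ with respect to a bounded-complexity polynomial factor, and that this type is detectable by constantly many queries; (3) using subspace hereditariness, promote a local violation seen on a random bounded-dimension subspace to a genuine witness of non-membership, so that the naive ``pick a random local constraint and check it'' tester has rejection probability bounded below by some $\delta(\eps)>0$; (4) assemble these into the tester $T$ and verify the one-sided error guarantee (always accept $f\in\cP$ — here subspace hereditariness is exactly what ensures no false rejection) and the query bound $q(\eps)$.

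The main obstacle I expect is Step (2)–(3): controlling how well finitely many local constraints approximate $\cP$, \emph{uniformly in the dimension $n$}. This is precisely where ``bounded complexity'' must do its work — without it, the property could genuinely require unboundedly many constraints as $n\to\infty$ (this is morally why \cref{conj:main} remains open). The technical heart will be a regularization step: given $f$, decompose it via the Tao–Ziegler inverse theorem and regularity method into a bounded-complexity structured part plus a pseudorandom part, show (via the new equidistribution lemma) that the structured part alone determines membership in $\cP$ up to $\eps$, and then observe that the structured part is ``visible'' inside a random low-dimensional subspace, on which $\cP$'s restriction (a subspace hereditary, hence bounded-complexity, property on a \emph{fixed-size} domain, hence trivially locally characterized) must fail if $f$ is far from $\cP$. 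Quantifying the rejection probability $\delta(\eps)$ through this chain of approximations — keeping every error term independent of $n$ — is the delicate part; the rest is bookkeeping and an appeal to \cref{thm:main}.
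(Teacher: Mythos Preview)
Your proposal has the right ingredients (decomposition via the Tao--Ziegler inverse theorem, equidistribution of high-rank factors, a compactness argument) but the organization and several details diverge from the paper in ways that matter.

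First, the paper does \emph{not} reduce \cref{thm:main2} to \cref{thm:main}. Instead it proves the formal version, \cref{thm:main3}, directly for an arbitrary (possibly infinite) collection $\cA$ of induced affine constraints of bounded Cauchy--Schwarz complexity; both \cref{thm:main} and \cref{thm:main2} then fall out as special cases (finite $\cA$ giving a constant query bound). The compactness step is not performed by truncating $\cA$ to a finite sub-characterization $\cP_\eps$ \emph{before} the analysis. Rather, one first applies the iterated decomposition (the ``Subatom Selection'' theorem) to the indicators $f^{(i)}$, obtaining regular factors $\cB'\succeq_{syn}\cB$; then builds a cleanup $F$ that is $\eps/2$-close to $f$; and only \emph{then} uses a compactness function $\Psi_\cA(|\cB|,d)$ on the finite object $F_\cB$ (the big-picture function) to extract a single constraint $(A,\sigma)\in\cA$ of bounded size that is partially induced. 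The final count of genuine copies of $(A,\sigma)$ in $f$ is obtained by expanding the product and controlling the $f_2$ terms by the Gowers counting lemma and the $f_3$ terms by a Cauchy--Schwarz plus equidistribution argument on the selected subatoms.

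Second, your Step~(1) misidentifies ``bounded complexity'': in Section~\ref{sec:complexity} this is the Cauchy--Schwarz complexity of the linear forms in each constraint of $\cA$, not any rank or polynomial-factor complexity of $f$. Its sole role is to fix $d$ so that $\|\cdot\|_{U^{d+1}}$ controls all the averages appearing via \cref{gowerscount}. Third, your Step~(3) runs the implication the wrong way: subspace hereditariness says restrictions of $f\in\cP$ stay in $\cP$ (giving one-sided error for free), not that a local violation can be ``promoted'' to global non-membership. The real work is showing that $\eps$-far $f$ already has \emph{many} copies of some bounded-size constraint---this is exactly what the decomposition/cleanup/equidistribution pipeline delivers, without ever needing to certify global non-membership from a single restriction.
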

We will formally define the notion of complexity later on in Section~\ref{sec:complexity}, but for now, it suffices
to know that it is an integer that we will associate with each
property (independent of $n$). Also, $q$-locally characterized
properties are of complexity at most $q$.
All natural affine-invariant properties
that we know of have bounded complexity (in fact, most are
locally characterized). So, the subspace
hereditary properties not covered by \cref{thm:main2} seem to be mainly
of theoretical interest.

\subsection{Degree-structural Properties}\label{sec:introdegstruct}

The conditions required in \cref{thm:main} and \cref{thm:main2} are very general, and so, we expect that they are
satisfied by many interesting algebraic properties. This, in fact, turns out to be the case. We show that
a class of properties that we call {\em degree-structural} are all locally characterized and are, hence, testable by \cref{thm:main}.
We give the definition below in~\cref{def:weakdegstruct}. First let us list some examples of degree-structural properties. Let
$d$ be a fixed positive integer. Each of the following definitions defines a degree-structural property.
\begin{itemize}
\item
\textbf{Degree $\leq d$:} The degree of the function $F: \F^n \to \F$ as a polynomial is at most $d$;
\item
\textbf{Splitting:} A function $F: \F^n \to \F$ {\em splits} if it
can be written as a product of at most $d$ linear functions;
\item
\textbf{Factorization:} A function $F: \F^n \to \F$ {\em factors} if
$F = GH$ for polynomials $G, H: \F^n \to \F$ such that
$\deg(G) \leq d-1$ and $\deg(H) \leq d-1$;
\item
\textbf{Sum of two products:} A function $F: \F^n \to \F$ is a
{\em sum of two products} if there are polynomials $G_1, G_2, G_3, G_4$ such
that $F=G_1G_2+G_3G_4$ and $\deg(G_i) \leq d-1$ for $i \in \set{1,2,3,4}$;
\item
\textbf{Having square root:} A function $F: \F^n \to \F$ {\em has a square
root} if $F = G^2$ for a polynomial $G$ with $\deg(G) \leq d/2$;

\item
\textbf{Low $d$-rank:} for a fixed integer $r > 0$, a function $F:
\F^n \to \F$ {\em has $d$-rank at most $r$} if there exist polynomials $G_1,
\dots, G_r: \F^n \to \F$ of degree $\leq d-1$ and a function $\Gamma:
\F^r \to \F$ such that $F = \Gamma(G_1, \dots, G_r)$.
\end{itemize}

In fact, roughly speaking, any property that can be described as the
property of decomposing into a known structure of low-degree
polynomials is degree-structural.

\begin{definition}[Degree-structural property] \label{def:weakdegstruct}
Given an integer $c > 0$, a vector of non-negative integers
$\mathbf{d} = (d_1,\dots,d_c) \in \Z_{\geq 0}^c$,  and a function
$\Gamma : \F^c \to \F$,  define the {\em $(c,\mathbf{d},\Gamma)$-structured
  property} to be the collection of functions $F: \F^n \to \F$ for which
there exist polynomials  $P_1,\dots,P_c: \F^n \to \F$ satisfying
$F(x) = \Gamma(P_1(x), \dots, P_c(x))$ for all $x \in \F^n$ and $\deg(P_i)
\leq d_i$ for all $i \in [c]$.

We say a property $\mcl{P}$ is {\em degree-structural} if there exist
integers $\sigma, \Delta >0$ and a set of tuples $S \subset \set{(c,\mv{d},\Gamma)
  \mid c \in [\sigma], \mv{d}\in [0,\Delta]^c, \Gamma: \F^c \to \F}$, such
that a function $F: \F^n \to \F$ satisfies $\mcl{P}$ if
and only if $F$ is $(c,\mv{d},\Gamma)$-structured for some
$(c,\mv{d},\Gamma) \in S$. We call $\sigma$ the {\em scope} and
$\Delta$ the {\em max-degree} of the degree-structural property $\mcl{P}$.
\end{definition}

It is straightforward to see that the examples above satisfy this definition. Our main
result for degree-structural properties is the following:

\begin{theorem}\label{thm:degstruct}
Every degree-structural property with bounded scope and max-degree is
a locally characterized affine-invariant property.
\end{theorem}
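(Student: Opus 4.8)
\textbf{Proof proposal for \cref{thm:degstruct}.}
The plan is to show that each $(c,\mv{d},\Gamma)$-structured property is locally characterized, and then argue that a finite union of locally characterized affine-invariant properties is again locally characterized affine-invariant. Affine invariance is immediate: if $F = \Gamma(P_1,\dots,P_c)$ with $\deg P_i \le d_i$, then for any affine map $A$ we have $F\circ A = \Gamma(P_1\circ A,\dots,P_c\circ A)$ and $\deg(P_i\circ A)\le \deg P_i$, so the property is closed under affine transformations; a union over the fixed finite set $S$ preserves this. The whole content is the local characterization of a single structured property, and then one must be careful that ``locally characterized'' is closed under finite unions --- this is not entirely formal, since the intersection of the constraint systems must still exactly cut out the union, but it follows because a function fails to be in $\bigcup_{i} \cP_i$ iff for \emph{every} $i$ some constraint of $\cP_i$ is violated, and one can take as constraints for the union all ``tuples of one constraint per $\cP_i$'', of bounded total arity since $|S|$ is bounded.

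For a single $(c,\mv{d},\Gamma)$-structured property $\cP$, the strategy is the standard one for showing algebraic properties are locally testable: combine a regularity/decomposition step with an equidistribution (near-orthogonality) step. First I would invoke the polynomial regularity lemma (in the non-classical setting, as the paper emphasizes): any function $F:\F^n\to\F$ of degree $\le \Delta$ can be written as $F = \Gamma'(Q_1,\dots,Q_C)$ where $Q_1,\dots,Q_C$ is a collection of non-classical polynomials of degree $\le \Delta$ that is of \emph{high rank} (arbitrarily high, as a function of a growth parameter we get to choose), and $C$ is bounded in terms of $\Delta$ and the rank parameter only. Then, using the new equidistribution result for high-rank non-classical polynomials promised in the abstract, the atoms of the partition induced by $(Q_1,\dots,Q_C)$ all have essentially equal measure $p^{-\sum |Q_i|}$ (up to the depths of the non-classical polynomials), and moreover the joint distribution of $(Q_1,\dots,Q_C)$ evaluated at the vertices of any bounded ``pattern'' of linear forms is close to what independence would predict. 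The key deduction is: whether $F$ is $(c,\mv{d},\Gamma)$-structured is determined by the \emph{syntactic} data $(\Gamma', (|Q_i|, \deg Q_i)_i)$, i.e.\ there is a purely finite criterion, checkable by a bounded computation, on this bounded-size description that holds iff $F\in\cP$; this is because a representation $F=\Gamma(P_1,\dots,P_c)$ can, after another regularization refining the $Q_i$, be assumed to have all $P_j$ be functions of the (now high-rank, refined) factor, and high rank forces the representation to be ``consistent'' atom-by-atom.

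The next step is to turn this finite criterion into $q$-local constraints. Here I would use a Ramsey-type / compactness-style argument to fix a \emph{single} bound $q$ (independent of $n$): since the property is affine-invariant and has the finite-description characterization above, one argues that if $F\notin\cP$ then already the restriction of $F$ to some bounded-dimensional affine subspace $V$ witnesses this --- i.e.\ $F|_V$ is not $(c,\mv{d},\Gamma)$-structured as a function on $V$ --- where $\dim V$ depends only on $c$, $\mv{d}$, $\Gamma$ and $p$ (not on $n$). Given such a bounded-dimension witnessing statement, one takes as the defining collection of constraints all pairs $(\text{embedding of }\F^k\text{ into }\F^n,\ \sigma)$ where $\sigma:\F^k\to\F$ is \emph{not} $(c,\mv{d},\Gamma)$-structured; each such constraint has arity $q = p^k$, the number of points in $V$, and a function violates one of them iff some restriction of it to a $k$-dimensional subspace fails the property, iff (by the witnessing statement and affine invariance) $F\notin\cP$. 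This gives the $q$-local characterization, and then the union over $S$ is handled as in the first paragraph.

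The main obstacle I expect is the ``bounded-dimension witness'' step --- proving that non-membership is always detectable on a subspace of bounded dimension. Naively one only knows that membership is equivalent to the existence of a global decomposition, and the obstruction to such a decomposition need not be local a priori. The way through is to run the argument in the contrapositive using the equidistribution result: if \emph{every} bounded-dimensional restriction of $F$ is structured, then the local statistics of $F$ (the distribution of its values on bounded patterns of affine points) match those of some genuinely structured function; and since whether a degree-$\le\Delta$ function is $(c,\mv{d},\Gamma)$-structured is, after regularization, a function only of these bounded local statistics of the high-rank factor, $F$ itself must be structured. Making this precise requires (i) the quantitative strong form of the equidistribution theorem --- that high rank implies near-joint-independence of the non-classical polynomials on \emph{patterns}, not just single points --- and (ii) carefully propagating the rank parameters through the two successive regularizations so that the ``description'' of $F$ and the atom-by-atom consistency check are genuinely finite and $n$-independent. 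The inverse theorem of Tao and Ziegler cited in the abstract is exactly what licenses the regularity lemma in the non-classical, low-characteristic regime, so I would expect that, together with the paper's new equidistribution lemma, these are the two real engines of the proof.
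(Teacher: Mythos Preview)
Your high-level instincts are right --- regularize $F$ into a high-rank factor, use equidistribution --- but the key mechanism for the ``bounded-dimension witness'' step is missing, and the one you sketch (``local statistics match those of some structured function'') is not how the argument goes and would not close the gap on its own.

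The concrete mechanism in the paper is a \emph{restrict-then-lift} argument, and it requires two ingredients you do not mention. After writing $F=\Sigma(R_1,\dots,R_C)$ with the $R_i$ high-rank non-classical polynomials, one chooses a \emph{single} hyperplane $H$ on which each $R_i'=R_i|_H$ has the \emph{same degree and depth} as $R_i$ (and is still high rank, by \cref{rankrestrict}). Since $F|_H\in\cP$ by hypothesis, there is a structured decomposition $\Sigma(R_1',\dots,R_C')=\Gamma(P_1,\dots,P_c)$ on $H$. Now one further regularizes $\{R_1',\dots,R_C',P_1,\dots,P_c\}$, keeping the $R_i'$ intact, so each $P_j=\Gamma_j(R',S)$ for new polynomials $S_1,\dots,S_D$. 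Equidistribution (just \cref{atomsize}, not pattern statistics) says $(R',S)$ is surjective onto its range, so the identity $\Sigma(a)=\Gamma(\Gamma_1(a,b),\dots,\Gamma_c(a,b))$ holds for \emph{all} $(a,b)$ in the range. Because $\deg R_i=\deg R_i'$ and $\mathrm{depth}\,R_i=\mathrm{depth}\,R_i'$, one may now substitute $R_i$ for $R_i'$ and $0$ for $S_j$ to obtain $F=\Gamma(Q_1,\dots,Q_c)$ on all of $\F^n$, where $Q_j=\Gamma_j(R_1,\dots,R_C,0,\dots,0)$.

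The second missing ingredient is that this last step does not automatically give $\deg Q_j\le d_j$: you have replaced the $R_i'$ by the $R_i$ inside an arbitrary function $\Gamma_j$, and a priori the degree could jump. The paper proves a separate result (\cref{comp}) saying that for a \emph{high-rank} collection $(R_i)$, the degree of $\Gamma_j(R_1,\dots,R_C,0,\dots,0)$ is at most that of $\Gamma_j(R_1',\dots,R_C',S_1,\dots,S_D)$ whenever degrees and depths match; this is another application of near-orthogonality and is genuinely needed. One also has to check the resulting $Q_j$ are classical (land in $\iota(\F)$), which again uses that $(R',S)$ hits every value in its range. Finally, your reduction to individual $(c,\mv{d},\Gamma)$ properties and then taking a union is unnecessary --- the argument above handles the whole $\cP$ at once, since the hyperplane restriction only needs to lie in $\cP$, not in a fixed component.
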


Combining \cref{thm:degstruct} with \cref{thm:main} implies PO testability for all degree-structural properties.

\subsection{Formal version of the Main Result}
In this section, we describe our main result, \cref{thm:main2},
rigorously. \cref{thm:main} follows as a corollary.
We first need to set up some notions. Just as a locally characterized
property can be described by a list of constraints, subspace
hereditary properties can also be described similarly, but here, the size of the list can
be infinite. For affine-invariant properties, we can represent the
constraints in a very special form, as ``induced affine constraints''. We
first describe these, then define the notion of complexity, and
finally state the theorem.

\subsubsection{Affine constraints}

A {\em linear form on $k$ variables} is a vector $L = (w_1, w_2,
\dots, w_k) \in \F^k$ that is interpreted as a function from
$(\F^n)^k$ to $\F^n$ via the map $(x_1, \dots, x_k) \mapsto w_1 x_1 +
w_2 x_2 + \cdots + w_k x_k$. A linear form $L = (w_1, w_2, \dots,
w_k)$ is said to be {\em affine} if $w_1 =1$. From now, linear forms
will always be assumed to be affine.

We specify a partial order $\preceq$ among affine forms. We say $(w_1,
\dots, w_k) \preceq (w_1',\dots, w_k')$ if $|w_i| \leq |w_i'|$ for all
$i \in [k]$, where $|\cdot|$ is the obvious map from $\F$ to $\set{0,
  1, \dots, p-1}$. An affine constraint is a collection of affine forms, with the added
technical restriction of being downward-closed with respect to $\preceq$. For future references
we state this as the following definition.

\begin{definition}[Affine constraints]\label{defaffine}
An {\em affine constraint of size $m$ on $k$ variables} is a tuple
$A = (L_1, \dots, L_m)$ of $m$ affine forms $L_1, \dots, L_m$ over
$\F$ on $k$ variables, where:
\begin{enumerate*}
\item[(i)]
$L_1(x_1, \dots, x_k) = x_1$;

\item[(ii)]
If $L$ belongs to $A$ and $L' \preceq L$, then $L'$ also belongs to $A$.
\end{enumerate*}
\end{definition}

Any subspace hereditary property can be described using affine
constraints and forbidden patterns, in the following way.
\begin{definition}[Properties defined by induced affine constraints]
\
\begin{itemize}
\item
An {\em induced affine constraint of size $m$ on $\ell$ variables} is
a pair $(A,\sigma)$ where $A$ is an affine constraint of size $m$ on
$\ell$ variables and $\sigma \in [R]^m$.
\item
Given such an induced affine constraint $(A,\sigma)$, a function $f:
\F^n \to [R]$ is said to be {\em $(A,\sigma)$-free} if there exist no
$x_1, \dots, x_\ell \in \F^n$ such that $(f(L_1(x_1, \dots, x_\ell)),
\dots, f(L_m(x_1,\dots,x_\ell))) = \sigma$. On the other hand, if such
$x_1, \dots, x_\ell$ exist, we say that {\em $f$ induces $(A,\sigma)$ at
  $x_1, \dots, x_\ell$}.
\item
Given a (possibly infinite) collection $\cA = \{(A^1,\sigma^1),
(A^2, \sigma^2), \dots, (A^i,\sigma^i),\dots\}$ of induced affine constraints, a function $f: \F^n
\to [R]$ is said to be {\em $\cA$-free} if it is
$(A^i,\sigma^i)$-free for every $i \geq 1$.
\end{itemize}
\end{definition}

As an example consider the property of having degree at most $1$ as a polynomial, for function $F: \F^n \to \F$. It is easy to see that $F$ satisfies this property if and only if $F(x_1)-F(x_1+x_2)-F(x_1+x_3)+F(x_1+x_2+x_3)=0$ for all $x_1,x_2,x_3 \in \F$. Consequently the property can be defined by the set of induced affine constraints that forbid any values for $F(x_1),F(x_1+x_2),F(x_1+x_3),F(x_1+x_2+x_3)$ that do not satisfy the identity $F(x_1)-F(x_1+x_2)-F(x_1+x_3)+F(x_1+x_2+x_3)=0$.

The connection between affine subspace hereditariness and affine
constraints is given by the following simple observation.
\begin{observation}
An affine-invariant property $\cP$ is subspace hereditary if
and only if it is equivalent to the property of $\cA$-freeness for
some fixed collection $\cA$ of induced affine constraints.
\end{observation}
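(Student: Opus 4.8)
The plan is to prove both implications directly: for the ``if'' direction one checks that $\cA$-freeness is always affine-invariant and subspace hereditary, and for the ``only if'' direction one builds an explicit collection $\cA$ from the complement of $\cP$. Both steps rest on a single elementary fact about how affine maps act on affine forms. An affine map $T = M + c$ on $\F^n$ (with $M$ linear) acts on tuples of points by sending the first coordinate through $T$ and the remaining ones through its linear part $M$; concretely, for every affine form $L = (1, w_2, \dots, w_k)$ one has $T(L(x_1, \dots, x_k)) = L(T(x_1),\, M x_2, \dots, M x_k)$, since $T(x_1 + \sum_{i\ge 2} w_i x_i) = M x_1 + c + \sum_{i\ge 2} w_i M x_i$. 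The same identity holds when $T = \phi$ is an affine parametrization $\phi \colon \F^k \to W$ of an affine subspace $W \subseteq \F^n$, with $M$ its linear part.

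For the ``if'' direction, suppose $\cP$ is the property of $\cA$-freeness for a fixed collection $\cA$ of induced affine constraints. If $f \colon \F^n \to [R]$ is $\cA$-free and $T$ is any affine map on $\F^n$, then whenever $f \circ T$ induces some $(A,\sigma) \in \cA$ at points $x_1, \dots, x_\ell$, the identity above shows $f$ itself induces $(A,\sigma)$ at $T(x_1), M x_2, \dots, M x_\ell$, contradicting $\cA$-freeness of $f$; hence $f \circ T$ is $\cA$-free and $\cP$ is affine-invariant. Similarly, writing the restriction of $f$ to an affine subspace $W$ as $f \circ \phi$ for an affine parametrization $\phi$, any inducing configuration for $f \circ \phi$ pushes forward through $\phi$ to an inducing configuration for $f$ in $\F^n$; so $f|_W$ is $\cA$-free and $\cP$ is subspace hereditary.

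For the ``only if'' direction, assume $\cP$ is affine-invariant and subspace hereditary, and let $\cA$ be the collection of \emph{all} induced affine constraints $(A,\sigma)$ that are induced by \emph{no} function satisfying $\cP$ (over any domain). One inclusion is immediate: if $f$ satisfies $\cP$, then no member of $\cA$ is induced by $f$, so $f$ is $\cA$-free. For the reverse inclusion, take an $\cA$-free $f \colon \F^n \to [R]$ and form the affine constraint $A^\star$ on $n+1$ variables consisting of \emph{all} affine forms $(1, w_2, \dots, w_{n+1})$ with $(w_2,\dots,w_{n+1}) \in \F^n$; this set contains $L_1 = x_1$ and is trivially downward closed under $\preceq$, so it is a legitimate affine constraint. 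Let $\sigma^\star \in [R]^{p^n}$ be obtained by evaluating these forms at the configuration $(0, e_1, \dots, e_n)$; since the form $(1, w_2, \dots, w_{n+1})$ sends this configuration to $(w_2, \dots, w_{n+1})$, the entries of $\sigma^\star$ are precisely the values of $f$, and $f$ induces $(A^\star,\sigma^\star)$ at $(0,e_1,\dots,e_n)$. As $f$ is $\cA$-free, $(A^\star,\sigma^\star) \notin \cA$, so some $g$ satisfying $\cP$ on some $\F^m$ induces $(A^\star,\sigma^\star)$ at a configuration $(z_1, z_2, \dots, z_{n+1})$; unwinding the definitions, $f = g \circ \psi$ for the affine map $\psi \colon \F^n \to \F^m$, $(v_1,\dots,v_n) \mapsto z_1 + v_1 z_2 + \dots + v_n z_{n+1}$. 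Its image is an affine subspace $V$, and $g|_V$ satisfies $\cP$ by subspace hereditariness; transporting $g|_V$ along an affine coordinatization of $V$ (using affine invariance) then yields $f \in \cP$, and the two characterizations agree.

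The step I expect to be the main obstacle is this final one in the ``only if'' direction when $\psi$ is \emph{not} injective (equivalently, when $z_2,\dots,z_{n+1}$ are linearly dependent, which forces $f$ to be constant along some nonzero direction): then $\dim V < n$, and passing from $g|_V \in \cP$ to $f \in \cP$ requires that affine subspace hereditary properties are stable not only under affine coordinate changes but under composition with arbitrary, possibly non-injective, affine maps $\F^n \to \F^m$ — that is, under introducing dummy coordinates. I would handle this by observing that affine invariance already supplies such maps: composing any member of $\cP$ with the idempotent affine self-map that retains only a fixed subset of the coordinates shows $\cP$ is closed under first restricting to a coordinate subspace and then re-inflating with dummy coordinates, and combining this with subspace hereditariness gives exactly the closure needed to push the non-injective case through (alternatively, one first quotients $\F^n$ by the subspace of periods of $f$ so that the realizing map $\psi$ is forced to be injective).
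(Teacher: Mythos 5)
Your construction is essentially the paper's: your ``full'' constraint $A^\star$ with all $p^n$ affine forms on $n+1$ variables, evaluated at $(0,e_1,\dots,e_n)$ so that $\sigma^\star$ is just $f$, is exactly the paper's $(A_f,\sigma_f)$, and your collection $\cA$ (all constraints induced by no member of $\cP$) is a cosmetic variant of the paper's (one constraint per non-member). The ``if'' direction, which the paper dismisses as trivial, you prove correctly via the identity $T(L(x_1,\dots,x_k))=L(T(x_1),Mx_2,\dots,Mx_k)$. So the route is the same; the difference is only in how explicitly the final gluing step is treated. The paper compresses that step into one sentence (``follows from $\cP$ being affine-invariant and hereditary''), whereas you isolate it: from $f=g\circ\psi$ with $g\in\cP$ on $\F^m$ and $\psi\colon\F^n\to\F^m$ affine, conclude $f\in\cP$.

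On that step your diagnosis is right and your patch works whenever $m\geq n$: compose $g$ with the (possibly non-invertible) affine self-map of $\F^m$ given by $y\mapsto\psi(y_1,\dots,y_n)$, which is legitimate since the paper's definition of affine invariance does not require invertibility, and then restrict to the coordinate subspace $\F^n\times\{0\}^{m-n}$ using hereditariness. But your fix does not cover the residual sub-case $m<n$ (e.g.\ $f$ constant on $\F^n$ witnessed by some $g\in\cP$ on $\F^1$), and neither does the quotient-by-periods variant: after quotienting you still must pass from $\bar f\in\cP$ on a lower-dimensional space back up to $f$ on $\F^n$, and neither affine invariance of self-maps nor hereditariness (which only goes down in dimension) can raise dimension. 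Indeed, under the literal definitions the statement needs the tacit convention that $\cP$ is closed under adding dummy coordinates: the property ``all functions on $\F^n$ for $n\leq 10$, none for $n>10$'' is affine-invariant and subspace hereditary yet is not $\cA$-freeness for any $\cA$ (a constant function on $\F^{11}$ would have to induce an all-constant pattern, which constant functions on $\F^{10}$ also induce). So the gap you flag is real, but it is identical to the one glossed in the paper's own one-line justification; modulo that standard convention, your argument is complete and, if anything, more careful than the paper's.
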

\begin{proof}
Given an affine invariant property $\cP$, a simple (though
inefficient) way to obtain the set $\cA$ is to let it be the
following: For every $n$ and a function $f:\F^n \to [R]$ that is not in $\cP$,
we include in $\cA$ the constraint $(A_f,\sigma_f)$, where $A_f$ is
indexed by members of $\F^n$ and contains
$\{L_z(X_1,\ldots,X_{n+1})=X_1+\sum_{i=1}^nz_iX_{i+1}:z=(z_1,\ldots,z_n)\in\F^n\}$,
and $\sigma_f$ is just set to $f$.

Setting $X_1=0$ and $X_{i+1}$ to  the $i$th standard vector $e_i$ for every $i \in [n]$ shows that $f$ is not $(A_f,\sigma_f)$-free.
Hence the property defined by $\cA$ is contained in $\cP$. The containment in the other direction follows from $\cP$ being affine-invariant and hereditary.

The other direction of the observation is trivial.
\end{proof}

\subsubsection{Complexity of linear forms \label{sec:complexity}}
Green and Tao, in their seminal work on arithmetic progressions in
primes, introduced the following notion of complexity of linear forms.

\begin{definition}[Cauchy-Schwarz complexity, \cite{GT06}]\label{def:cplx}
Let $\mathcal{L} = \{L_1,\dots,L_m\}$ be a set of linear forms. The
{\em (Cauchy-Schwarz) complexity of $\mathcal{L}$} is the minimal $d$
such that the following holds. For every $i \in [m]$, we can partition
$\{L_j\}_{j \in [m]\setminus \{i\}}$ into $d+1$ subsets such that
$L_i$ does not belong to the linear span of any subset.
\end{definition}

If  $\cL = \{L_1,\dots,L_m\}$ contains two linear forms that are multiples of each other (that is $L_i = \lambda L_j$ for $i\neq j$ and $\lambda \in \F$), then the complexity of $\cL$ is infinity. Otherwise its complexity is at most $|\cL|-2$. Note that sets of affine linear forms are always of finite complexity. The following lemma can be proved using iterated applications of the classical Cauchy-Schwarz inequality. It explains the term ``Cauchy-Schwarz complexity'', and illustrates its importance. 

\begin{lemma}[Counting Lemma, \cite{GT06}]\label{gowerscount}
Let $f_1, \dots, f_m : \F^n \to [-1,1]$. Let $\mathcal{L} =
\{L_1,\dots,L_m\}$ be a system of $m$ linear forms in $\ell$ variables
of complexity $d$. Then:
$$\left|\E_{x_1,\dots,x_\ell \in \F^n} \left[ \prod_{i=1}^m
    f_i(L_i(x_1,\dots,x_\ell))\right] \right| \leq \min_{i \in [m]}\|f_i\|_{U^{d+1}}.$$
\end{lemma}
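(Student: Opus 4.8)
The plan is to prove the Counting Lemma by iterated Cauchy-Schwarz, exactly as indicated by the name ``Cauchy-Schwarz complexity.'' Let $\mathcal{L} = \{L_1, \dots, L_m\}$ have complexity $d$, and fix the index $i \in [m]$ that attains the minimum on the right-hand side; without loss of generality take $i = m$. By the definition of complexity, we may partition $\{L_1, \dots, L_{m-1}\}$ into $d+1$ classes $\mathcal{C}_1, \dots, \mathcal{C}_{d+1}$ such that $L_m$ lies in the linear span of none of them. Equivalently, for each $j \in [d+1]$ there is a variable $x_{t_j}$ (among $x_1, \dots, x_\ell$) such that $L_m$ has a nonzero coefficient on $x_{t_j}$ while every form in $\mathcal{C}_j$ has coefficient zero on $x_{t_j}$ — this is the standard reformulation of ``$L_m$ not in the span of $\mathcal{C}_j$'' once one is allowed to pass to a suitable change of variables / add dummy variables. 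I would first record this reformulation as the combinatorial heart of the argument.

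Next I would run the Cauchy-Schwarz iteration, one step per class $\mathcal{C}_1, \dots, \mathcal{C}_{d+1}$. Write $E = \E_{x_1,\dots,x_\ell \in \F^n}\big[\prod_{i=1}^m f_i(L_i(x))\big]$. In the first step, isolate the variable $x_{t_1}$: the forms in $\mathcal{C}_1$ do not involve $x_{t_1}$, so grouping the product as (product over $\mathcal{C}_1$ terms) $\times$ (the rest, which does involve $x_{t_1}$), applying Cauchy-Schwarz in the $x_{t_1}$-averaging and duplicating $x_{t_1}$ into a fresh copy $x_{t_1}'$, yields that $|E|^2$ is bounded by an expectation of a product of $2^{?}$ shifted copies of the $f_i$'s, with the key feature that the forms originally in $\mathcal{C}_1$ no longer appear (their factors were pulled out and, being bounded by $1$ in absolute value, bounded away by $\le 1$), while $f_m$ now appears as a multiplicative $\partial$-type difference along direction controlled by $x_{t_1}$. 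Iterating this for $j = 2, \dots, d+1$, each step uses the variable $x_{t_j}$ which is absent from $\mathcal{C}_j$ and present in (the current avatar of) $L_m$, and after $d+1$ steps every form except those derived from $L_m$ has been eliminated, and the surviving object is exactly $|E|^{2^{d+1}} \le \E[\,\prod \Delta_{h_1,\dots,h_{d+1}} f_m\,]$, which by the standard identity for the Gowers norm equals $\|f_m\|_{U^{d+1}}^{2^{d+1}}$. Taking $2^{d+1}$-th roots gives $|E| \le \|f_m\|_{U^{d+1}}$, and since $m$ was the minimizing index this is the claimed bound. I would cite \cref{def:cplx} for the partition property and the definition of $\|\cdot\|_{U^{d+1}}$ for the final identification.

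There are two points that need care. First, the bookkeeping: at step $j$ the functions $f_i$ for $L_i \in \mathcal{C}_j$ must genuinely be free of $x_{t_j}$ so they can be extracted before Cauchy-Schwarz — this is why the partition condition is exactly what is needed, and one has to make sure that forms extracted at earlier steps stay extracted (they do, since Cauchy-Schwarz only introduces new variables and difference operators, never reintroduces old dependencies). Second, one must verify that after the substitution making $L_m$ genuinely depend on each chosen variable, the $(d+1)$-fold iterated difference operator applied to $f_m$ is being averaged over $d+1$ independent shift directions plus a base point, matching the definition $\|g\|_{U^{d+1}}^{2^{d+1}} = \E_{x, h_1,\dots,h_{d+1}}\prod_{S \subseteq [d+1]} g(x + \sum_{i\in S} h_i)$; a change of variables absorbing the coefficients of $L_m$ (which are invertible scalars in $\F_p$) handles this. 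The main obstacle — really the only non-routine part — is setting up the variable substitution cleanly so that ``$L_m \notin \mathrm{span}(\mathcal{C}_j)$'' becomes ``$\exists$ a coordinate separating $L_m$ from all of $\mathcal{C}_j$'' simultaneously across all $j$, after which the $d+1$ Cauchy-Schwarz steps are mechanical. I would handle this by working in the quotient spaces $\F^\ell / \mathrm{span}(\mathcal{C}_j)$, where $L_m$ has nonzero image, and pulling back a coordinate functional.
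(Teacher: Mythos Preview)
Your approach is correct and is exactly the standard Green--Tao argument that the paper alludes to. Note that the paper does not actually prove this lemma: it is stated with a citation to \cite{GT06}, and the surrounding text merely remarks that it ``can be proved using iterated applications of the classical Cauchy-Schwarz inequality.'' So there is no paper-proof to compare against beyond that one sentence, and your proposal fleshes out precisely that sentence.

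A couple of small comments on execution. First, your instinct to introduce dummy variables is the right way to make the reformulation clean: given $v_j \in \F^\ell$ with $\langle L, v_j\rangle = 0$ for $L \in \mathcal{C}_j$ and $\langle L_m, v_j\rangle \neq 0$, the substitution $x \mapsto x + \sum_j y_j v_j$ (with $y_j \in \F^n$ new) preserves the expectation exactly (by translation invariance in $x$) and gives you $d+1$ genuinely fresh variables $y_1,\dots,y_{d+1}$ each separating $L_m$ from one class. This sidesteps any worry about the $\phi_j$ being linearly dependent. Second, at step $j$ the forms from classes $\mathcal{C}_k$ with $k > j$ may also depend on $y_j$ and hence get duplicated rather than eliminated; this is fine, since they remain bounded by $1$ and still have zero $y_k$-coefficient, so they are swept away at their own step $k$. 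After all $d+1$ steps only the $2^{d+1}$ copies of $f_m$ survive, and since $f_m$ is real-valued the conjugations are harmless, giving exactly $\|f_m\|_{U^{d+1}}^{2^{d+1}}$.
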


Finally, given a collection $\cA = \set{(A^1,\sigma^1), (A^2, \sigma^2),
  \dots, (A^i,\sigma^i)}$ of induced affine constraints, we say that $\cA$
is of {\em complexity $\leq d$} if for each $i$, the collection of
affine forms $A^i$ is of complexity $\leq d$ according to
\cref{def:cplx}.

\subsubsection{Statement of the main result}

\begin{theorem}[Main theorem]\label{thm:main3}
For any integer $d>0$ and (possibly infinite) fixed
collection $\cA = \{(A^1,\sigma^1),$ $(A^2, \sigma^2),$
$\dots,$ $(A^i,\sigma^i), \dots \}$ of induced affine constraints, each of
complexity $\leq d$, there are functions
$q_\cA:(0,1)\to \Z^+$, $\delta_\cA: (0,1) \to (0,1)$ and  a tester $T$
which, for every $\eps > 0$, makes $q_\cA(\eps)$ queries, accepts
$\cA$-free functions and rejects functions $\eps$-far from $\cA$-free
with probability at least $\delta_\cA(\eps)$. Moreover, $q_\cA$ is a
constant if $\cA$ is of finite size.
\end{theorem}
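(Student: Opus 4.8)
The plan is to derive \cref{thm:main3} from an \emph{affine removal lemma}: for every $\eps>0$ there are $\delta(\eps)>0$ and $M(\eps)\in\Z^+$ so that if $f:\F^n\to[R]$ induces every member of $\cA$ of size at most $M(\eps)$ at fewer than a $\delta(\eps)$ fraction of tuples, then $f$ is $\eps$-close to being $\cA$-free. Granting this, the tester is immediate: given $\eps$, sample a uniformly random constraint $(A,\sigma)\in\cA$ of size $\le M(\eps)$ — by \cref{defaffine} and bounded complexity there are, up to affine changes of variable, only finitely many such constraints, their number depending only on $M(\eps),p,R,d$ — together with a uniformly random tuple, and reject iff $(A,\sigma)$ is induced there. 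One-sided error is clear since $\cA$-free functions induce nothing, soundness is the removal lemma, and $q_\cA(\eps)=M(\eps)$, which is the constant $\max_{(A,\sigma)\in\cA}|A|$ when $\cA$ is finite.

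The core of the removal lemma is a higher-order Fourier-analytic regularization. Applying the Tao--Ziegler inverse theorem for the $U^{d+1}$ Gowers norm in low characteristic, iterated with a rank-refinement step, we obtain for suitably chosen parameters a polynomial factor $\cB$ generated by non-classical polynomials of degree $\le d$ such that: (i) $\cB$ has rank at least an arbitrarily large prescribed growth function of its complexity; (ii) for each $a\in[R]$ we may write $1_{f=a}=g_a+h_a$ where $g_a=\E[1_{f=a}\mid\cB]$ is $\cB$-measurable and $\|h_a\|_{U^{d+1}}\le\eta$; and (iii) the number $K$ of atoms of $\cB$ is bounded in terms of $\eta$ and the growth function alone, independently of $n$.

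Next we count inductions along a fixed $(A,\sigma)=((L_1,\dots,L_m),\sigma)\in\cA$, which has complexity $\le d$. Expanding $\E_{\vec x}\prod_{j}1_{f=\sigma_j}(L_j(\vec x))=\E_{\vec x}\prod_j(g_{\sigma_j}+h_{\sigma_j})(L_j(\vec x))$ into $2^m$ terms, each term containing an $h$-factor has absolute value at most $\eta$ by the Counting Lemma (\cref{gowerscount}), since the system has complexity $\le d$; this is exactly where the hypothesis on $\cA$ enters. For the main term $\E_{\vec x}\prod_j g_{\sigma_j}(L_j(\vec x))$, which depends only on the joint distribution of $\cB$'s atoms along $L_1,\dots,L_m$, we invoke the new equidistribution theorem for high-rank non-classical polynomials along complexity-$\le d$ systems: up to an error tending to $0$ with the rank, this joint distribution is the generic one compatible with the linear-algebraic relations among $L_1,\dots,L_m$, so every atom-tuple $(b_1,\dots,b_m)$ that is \emph{feasible} for the system is realized at density bounded below by a positive quantity, and the density of tuples at which $f$ induces $(A,\sigma)$ equals, up to $O(2^m\eta)+o_{\mathrm{rank}}(1)$, $\sum_{(b_j)\text{ feasible}}(\text{atom-tuple density})\prod_j g_{\sigma_j}(b_j)$. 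Assuming now $f$ induces no size-$\le M(\eps)$ constraint at density $\ge\delta(\eps)$ — with $\gamma=\gamma(\eps)$, then $M(\eps)$, then $\eta$ and the rank chosen in that order — this identity forces, for each such constraint and each feasible atom-tuple, some coordinate $j$ with $g_{\sigma_j}(b_j)<\gamma$ (else the constraint would be induced at density $\ge c\gamma^m-O(2^m\eta)-o_{\mathrm{rank}}(1)\ge\delta(\eps)$). Define $f^*$ by reassigning, on each atom $b$, every $x$ with $f(x)=a$ and $g_a(b)<\gamma$ to the plurality value of $f$ on $b$; this changes at most an $R\gamma$ fraction of the domain (the $L_2$-error of the decomposition is absorbed likewise), which is $<\eps$ for $\gamma$ small. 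On each atom $b$, $f^*$ takes values only in $S_b:=\{a:g_a(b)\ge\gamma\}$, hence $f^*$ induces no constraint of $\cA$ of size $\le M(\eps)$; and because $f^*$ is rigid — determined by the bounded data $(S_b)_b$ over only $K$ atoms, with inducibility governed by the feasibility relation, which by the equidistribution theorem is controlled by bounded-size sub-systems — choosing $M(\eps)$ large in terms of $K$ makes any constraint $f^*$ induces already force an induced configuration of size $\le M(\eps)$. So $f^*$ is $\cA$-free and $\eps$-close to $f$, proving the removal lemma.

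I expect the principal difficulties to be two. First, proving the equidistribution statement for high-rank non-classical polynomials along complexity-$\le d$ systems with the precise feasibility / linear-algebraic description the counting argument needs: this is the genuinely new analytic input, and it is what converts the hypothesis ``complexity $\le d$'' into control over the joint atom distribution. Second, the quantifier bookkeeping $\eps\mapsto(\gamma,M)\mapsto(\eta,\mathrm{rank})\mapsto n$ in the regularization, together with the compression step reducing the possibly infinite family $\cA$ to an $\eps$-dependent finite subfamily; in particular, upgrading ``$f^*$ is free of small constraints'' to ``$f^*$ is $\cA$-free'' is the delicate point, and it is precisely where the rigidity of the structured approximant, and the fact that $\cB$ has only boundedly many atoms, is essential.
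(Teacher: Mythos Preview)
Your high-level architecture matches the paper's: reduce \cref{thm:main3} to a removal lemma, prove the removal lemma by decomposing the indicators $1_{f=a}$ against a high-rank polynomial factor, use \cref{gowerscount} to discard the pseudorandom part, use equidistribution of regular factors (the content of \cref{sec:equid}) to analyze the structured part, and use a compactness function $\Psi_\cA$ to pass from ``free of small constraints'' to ``$\cA$-free''. All of this is right, and you correctly identify equidistribution of high-rank non-classical polynomials as the main new analytic input.

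There is, however, a genuine gap in your decomposition step. The two-term decomposition you claim in (ii), namely $1_{f=a}=g_a+h_a$ with $g_a=\E[1_{f=a}\mid\cB]$, $\|h_a\|_{U^{d+1}}\le\eta$, and $\cB$ simultaneously of bounded complexity \emph{and} $r$-regular, is not available. The energy-increment argument only yields the three-term form of \cref{thm:strongdecomp}: $f=f_1+f_2+f_3$ with $\|f_2\|_{U^{d+1}}\le\eta(|\cB|)$ (this part can indeed be made an arbitrary function of $|\cB|$) and $\|f_3\|_2\le\delta_0$ for a \emph{constant} $\delta_0$ that cannot be taken to decay with $|\cB|$. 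Your parenthetical ``the $L_2$-error of the decomposition is absorbed likewise'' does not suffice: absorbing $f_3$ in the cleanup costs only $\delta_0$ in distance, but in the \emph{counting} step the $f_3$-terms must be beaten by the main term, which is of order $\gamma^m\cdot(\text{atom-tuple density})$. Since the constraint size $m$ you locate via compactness satisfies $m\le\Psi_\cA(|\cB|,d)$, and $|\cB|$ itself depends on $\delta_0$ through \cref{thm:strongdecomp}, you face a genuine circularity: you would need $\delta_0$ small as a function of a quantity that depends on $\delta_0$.

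The paper breaks this circularity with a \emph{two}-level decomposition (\cref{thm:superdecomp} and the Subatom Selection \cref{thm:subatom2}): one obtains $\cB'\succeq_{syn}\cB$ so that $f_1=\E[f\mid\cB']$, $\|f_3\|_2\le\delta(|\cB|)$ is now allowed to be an arbitrary function of the \emph{coarse} complexity $|\cB|$, and moreover there is a fixed choice of subatom $s$ such that $\E[|f_3|^2\mid(c,s)]<\delta(|\cB|)^2$ on every subatom $(c,s)$. The compactness/cleanup is carried out at the level of $\cB$ (so $m\le\Psi_\cA(|\cB|,d)$), while the counting is carried out inside the selected subatoms of $\cB'$; the $f_3$-contribution is then controlled by $\delta(|\cB|)$, which one is free to choose after $|\cB|$ but before $|\cB'|$. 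Your proposed quantifier ordering ``$\gamma$, then $M(\eps)$, then $\eta$ and rank'' cannot close without this extra layer: $M(\eps)$ has to be $\Psi_\cA(K,d)$, $K$ depends on the decomposition parameters, and the $f_3$-bound must in turn be small in $M(\eps)$. So the missing ingredient is precisely the two-factor strong regularity, and this is not merely bookkeeping but the device that resolves the dependency you flag as your second anticipated difficulty.
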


We do not have any explicit bounds on $\delta_\cA$ because the analysis depends on previous work based on
ergodic theory. It would of course be interesting to have explicit
bounds for some of the properties described in \ref{sec:degstructintro}.

Let us finally note that Theorem \ref{thm:main} is quite nontrivial
even  if $\cA$ consists only of a single induced affine constraint of complexity
greater than $1$. Indeed, previously it was not known how to show
testability in this case. A more detailed account of previous work is given in \cref{sec:past}.

\subsection{Overview of Proofs}

\subsubsection{Testability}
Let us now give an overview of our proof of Theorem
\ref{thm:main3}. For simplicity of exposition, assume for now that
$\cA$ consists only of a single induced affine constraint $(A,\sigma)$
where $A$ is the tuple of affine linear forms $(L_1,\dots,L_m)$, each over
$\ell$ variables, and $\sigma \in [R]^m$. Let $d$ be the complexity of
the constraint. For $i \in [R]$, let
$f^{(i)}:\F^n \to \bits$ be the indicator function for the set
$f^{-1}(\{i\})$. Our goal will  be
to show that, when $f$ is $\eps$-far from $(A,\sigma)$-free, then:
\begin{equation}\label{eqn:avg}
\E_{x_1,\dots,x_\ell}\left[f^{(\sigma_1)}(L_1(x_1,\dots,x_\ell)) \cdot
f^{(\sigma_2)}(L_2(x_1,\dots,x_\ell)) \cdot \ldots \cdot
f^{(\sigma_m)}(L_m(x_1,\dots,x_\ell))  \right] \geq \delta(\epsilon),
\end{equation}
for some function $\delta: (0,1)\to (0,1)$. If \cref{eqn:avg} is true, then a valid test would be to
simply pick $\ell$ points uniformly at random and reject only if
$f(L_1(x_1, \dots, x_\ell)) = \sigma_1, \dots, f(L_m(x_1, \dots,x_\ell)) = \sigma_m$.

Studying averages of products, as in (\ref{eqn:avg}), has been crucial
to a wide range of problems in additive combinatorics and analytic
number theory. Szemer\'edi's theorem about the density of arithmetic
progressions in subsets of the integers is a classic
example. Szemer\'edi's work \cite{Szem75} arguably initiated such
questions in additive combinatorics, but the major development which
led to a more systematic understanding of these averages was Gowers'
definition of a new notion of uniformity in a Fourier-analytic proof
for Szemer\'edi's theorem \cite{Gow01}. In particular, Gowers introduced
the {\em Gowers norm} $\|\cdot\|_{U^{d+1}}$,
which allows us to say the following about (\ref{eqn:avg}): If
$\|f_1\|_{U^{d+1}} < \eps$, $f_2,\dots,f_m$ are arbitrary functions
that are bounded inside $[-1,1]$, and $L_1,\dots,L_m$ are
linear forms of complexity at most $d$, then
$$\left|\E_{x_1,\dots,x_\ell \in \F^n} \left[ \prod_{i=1}^m f_i(L_i(x_1,\dots,x_\ell))\right]\right| \le \eps.$$

This observation leads to the study of {\em decomposition theorems},
that express an arbitrary function $f$ as a sum of two functions $g$
and $h$, where $g$ is ``structured'' in a sense we describe soon and $h$ has low $(d+1)$-th order Gowers norm.  Decomposing each $f^{(\sigma_i)}$ in
this way into $g^{(\sigma_i)}$ and $h^{(\sigma_i)}$, substituting into
\cref{eqn:avg} and expanding, we get inside the expectation a sum of
$2^m$ terms. All these terms, except one, contain some $h^{(\sigma_i)}$ in the
product and  can be bounded by the above mentioned property of the
Gowers norm. In fact, we can make the Gowers norm small enough that we can
effectively discard all these terms inside the expectation. The term
remaining is the product of the ``structured'' functions,
\begin{equation}\label{eqn:avg2}
\E_{x_1,\dots,x_\ell}\left[g^{(\sigma_1)}(L_1(x_1,\dots,x_\ell))
g^{(\sigma_2)}(L_2(x_1,\dots,x_\ell)) \ldots
g^{(\sigma_m)}(L_m(x_1,\dots,x_\ell))  \right],
\end{equation}
 and the goal is to lower-bound  this expectation.

To describe the structure of $g$, let us go over how the
decomposition into $g$ and $h$ is obtained. Given an arbitrary function $f$,
if $\|f\|_{U^{d+1}}$ is small, then we are already done. Otherwise, we
repeatedly apply the {\em Gowers inverse theorem} to find a finite
collection of polynomials $P_1, \dots, P_C$ of degree $\leq d$ such
that $f = \Gamma(P_1, \dots, P_C) + h$, where $\|h\|_{U^{d+1}}$ is
small and $\Gamma$ is some function. But there is a catch in this
nice-looking structural theorem! If $p>d$, $P_1,\dots, P_C$ are indeed
``classical'' degree-$d$ $\F$-valued polynomials over $\F^n$. However,
in our setting, where $p$ is a fixed small constant, such a
decomposition may no longer hold. Indeed,  \cite{GT07, LMS} proved
that if $f$ equals the symmetric degree-4 polynomial and $d=3$, we
have an explicit counterexample to such a claim. Fortunately,
Bergelson, Tao and
Ziegler \cite{BTZ10, TZ, TZ11} showed that it is possible to salvage the
decomposition theorem by replacing ``classical''
$\F$-valued polynomials by ``non-classical'' polynomials. These polynomials  may take values
in $\Z_{p^k}$ for some integer $k$.
More precisely, a non-classical polynomial of degree $d$ is a function
$P$ from $\F^n$ to $\Z_{p^k}$ such that the $(d+1)$-th
order derivative of $P$ is zero. The integer $k-1$ is called the
``depth'' of $P$. Classical polynomials have depth $0$.

We use the result of \cite{TZ11} to obtain non-classical polynomials
$P_1, \dots, P_C$ of degree $\leq d$ such that each $g^{(\sigma_i)} = \Gamma_i(P_1,
\dots, P_C)$ for some function $\Gamma_i$. We return now to the goal
of lower-bounding \cref{eqn:avg2}. By a sequence of steps already
introduced in \cite{BGS10} and \cite{BFL12} (inspired by similar
techniques on graph property testing in \cite{AFKS,AS08a, AS08}), we reduce
to the problem of lower-bounding the probability
$$\Pr_{x_1, \dots, x_\ell}\left[\bigwedge_{i \in [C], j \in [m]}
  P_i(L_j(x_1, \dots, x_\ell)) = b_{i,j}\right]
$$
where each $b_{i,j}$ is an arbitrary fixed element in the range of
$P_i$. That is, we want to show that the polynomials $\{P_i \circ L_j
\mid i \in [C], j\in [m]\}$ behave like independent random variables
distributed nearly uniformly on their range. Of course, this cannot be
completely true. For example, if $P_i$ is linear, $P_i(x_1 + x_2 + x_3) -
P_i(x_1 + x_2) - P_i(x_1 + x_3) + P_i(x_1)$ is identically zero and
so, $\{P_i(x_1 + x_2 + x_3), P_i(x_1 + x_2), P_i(x_1 + x_3),
P_i(x_1)\}$  are correlated. Moreover, because the polynomials are
non-classical, $pP$ is a non-constant polynomial of lower degree than
$P$ and satisfies other identities not satisfied by $P$ itself.
What we show is that if the collection of polynomials $P_1, \dots,
P_C$ is of high {\em rank}, then besides
correlations which are forced by the degree and depth of the
polynomials, there are no other dependencies. This equidistribution
result for high rank non-classical polynomials is the technical crux
of our work. Our proof technique is very different from the similar
equidistribution claim in \cite{HL11,HL11b} for classical polynomials, since
that proof uses the monomial structure of classical polynomials.

Let us briefly describe what we mean by a high rank collection of non-classical polynomials
$P_1, \dots, P_C$. We say that the rank of the
collection is $r$ if there exist integers $\lambda_1, \dots,
\lambda_C$ such that
$\lambda_1 P_1, \dots, \lambda_CP_C$ are not all identically zero but
$\sum_{i=1}^C\lambda_i P_i = \Gamma(Q_1, \dots,
Q_r)$ for some $r$ polynomials $Q_1, \dots, Q_r$ each of degree $< \max_i \deg(\lambda_i
P_i)$ and some function $\Gamma$. So, if the rank of a collection of
polynomials is high, that means that no linear combination of the
polynomials, unless it is trivially zero, has an explanation
in terms of a small number of lower degree polynomials. Intuitively, a high rank
collection of degree $d$ polynomials is like a random or generic
collection of degree $d$ polynomials. It does not have unexpected low-degree
correlations, and it is robust to common operations such as taking
projections or multiplying by constants or taking derivatives.

This finishes the high-level overview of the proof, although
there are some additional issues that we have swept under the rug.
One problem is that the decomposition theorem actually decomposes a
given function $f$ to a sum of  three functions $f_1, f_2, f_3$, not
into two functions $g$ and $h$ as in the description above. The functions $f_1$
and $f_2$ correspond to $g$ and $h$, respectively, and $f_3$ is an
additional function that has low $L^2$-norm. Now, the closeness to
equidistribution of the non-classical polynomials $P_1,\dots, P_C$
describing $f_1$ and the smallness of the Gowers norm for $f_2$
can be made arbitrarily small as a function of  $C$
and are thus, essentially negligible for the purposes of the proof. On
the other hand, the bound on the $L^2$-norm for $f_3$ is only
moderately small and cannot be made to decrease as a function of the complexity
of the decomposition. To get around this issue, we use a sequence of
two decompositions, and make the norm of $f_3$
decrease as a function of the size of the first decomposition.
We hope that these iterated decomposition theorems (proved in a
prequel \cite{BFL12} to this paper) are of independent interest.

\subsubsection{Degree-Structural Properties}
Next, we give an overview of our proof of \cref{thm:degstruct}. For the sake of concreteness, let us focus on a
particular degree-structural property, say, the property $\cP$ of having a
square root, as defined in \cref{sec:introdegstruct}. To show that $\cP$ is locally characterized, we find a
constant $K = K(\cP)$ such that if a function $F: \F^n \to \F$ does
not have a square root, then there must exist a subspace $H$ of dimension $K$
such that $F$ restricted to $H$ also does not have a square root.

So, suppose we are given a function $F: \F^n \to \F$ such that $n \geq
K$ and every hyperplane restriction has a square root of degree $\leq d/2$. For large enough
constant $K$, this automatically implies $\deg(F) \leq d$.
We first {\em  regularize} $f$, meaning that we find polynomials $P_1,
\dots, P_C$ of degree $\leq d$ such that $P_1, \dots, P_C$ are of high
rank and $F = \Gamma(P_1, \dots, P_C)$ for some function
$\Gamma$. Note that here, just as in the proof of the testability
result, we need to allow $P_1, \dots, P_C$ to be non-classical
polynomials. Now, for some $i$ such that $F|_{x_i=0}$ has a square
root, let $P_1', \dots, P_C'$ be the restrictions of $P_1, \dots, P_C$
to $x_i = 0$. So, $\Gamma(P_1',\dots, P_C') = G^2$ for some polynomial
$G$. The polynomials $P_1',\dots, P_C'$ can be shown to be of high rank also. This implies that we can
extend the collection of polynomials $P_1',\dots, P_C'$ to
$P_1',\dots, P_C', Q_1, \dots
Q_D$ such that the new collection is also of high rank and $G
= \Delta(P_1',\dots, P_C', Q_1,\dots, Q_D)$ for some function
$\Delta$. Hence
$$\Gamma(P_1', \dots, P_C') = (\Delta(P_1',\dots, P_C', Q_1, \dots, Q_D))^2.$$
Because of the high rank of the collection $\set{P_1',\dots, P_C',
  Q_1, \dots, Q_D}$, the equidistribution result described in the last
section allows us to conclude that in fact:
$$\Gamma(x_1, \dots, x_C) = (\Delta(x_1, \dots, x_C, y_1, \dots
y_D))^2$$
for all $x_1,\dots, x_C, y_1, \dots, y_D$ in the ranges of $P_1',\dots,
P_C', Q_1, \dots, Q_D$, respectively. Therefore, if we set $\tilde{G} =
\Delta(P_1, \dots, P_C, 0, \dots, 0)$, then $F = \tilde{G}^2$. It is
immediate\footnote{For other degree-structural properties, the degree
  bound may not be immediate at this last step, and we need to argue it separately,
  again using the equidistribution result for high-rank non-classical
  polynomials.} that $\deg(\tilde{G})\leq d/2$, and so, $F$ has a
square root.

It is curious that our proof of \cref{thm:degstruct}, which is
entirely about classical polynomials, requires the use of
non-classical polynomials. Also, as we mentioned earlier, there are no
effective bounds on $K(\cP)$ that arise from our argument. It would be
interesting to obtain better bounds (both upper and lower)
for the locality of degree-structural properties.

\subsection{Comparison with Previous Work}\label{sec:past}
This work is part of, and culminates  a sequence of works investigating the
relationship between affine-invariance and testability. As
described, Kaufman and Sudan \cite{KS08} initiated the
program. Subsequently, Bhattacharyya, Chen, Sudan, and Xie \cite{BCSX09}
investigated {\em monotone} linear-invariant properties of functions
$f:\F_2^n \to \bits$, where a property $\cP$ is monotone if it
satisfies the condition that for any function $g \in \cP$, modifying
$g$ by changing some outputs from $1$ to $0$ does not make it violate
$\cP$. Kr\'al, Serra and Vena \cite{KSV12} and, independently,
Shapira \cite{Sha09} showed testability for any monotone linear-invariant
property characterized by a finite number of linear constraints
(of arbitrary complexity).
For general non-monotone properties, Bhattacharyya, Grigorescu, and
Shapira proved in \cite{BGS10} that affine-invariant properties of
functions in $\{\F_2^n \to \bits\}$ are testable if the complexity of
the property is $1$. Earlier this year, Bhattacharrya, Fischer and
Lovett in \cite{BFL12} generalized \cite{BGS10} to show that
affine-invariant properties of complexity $< p$ are testable.  In this paper, we only have to
restrict the complexity to be bounded, but the bound can be
independent of $p$.

In terms of techniques, the general framework of the proof for
testability here is very much the same as in \cite{BGS10} or
\cite{BFL12}. However, the main difference here is that we work with
collections of non-classical polynomials, rather than classical
ones. Because the degrees of non-classical polynomials can change when
multiplied by constants, the notions of rank and regularity are much
more subtle. We need to establish a new version of a ``polynomial regularity lemma''
which allows us to decompose a given polynomial collection into a high rank
collection of non-classical polynomials. Also, as discussed earlier,
we establish a new equidistribution theorem for non-classical
polynomials. We expect that these results will be of independent
interest.

\ignore{
Higher-order Fourier analysis began with the work of Gowers
\cite{Gow98} and parallel ergodic-theoretic work by Host and Kra
\cite{HK05}. Applications to analytic number theory inspired much more
study by Gowers, Green, Tao, Wolf, and Ziegler among others. A book in
preparation by Tao \cite{Tao11} surveys the current theory of
higher-order Fourier analysis. Our work in this paper relies on
decomposition theorems over finite fields of the type first explicitly
described by Green in \cite{Gre07}. We also heavily use decomposition
results by Hatami and Lovett \cite{HL11}, as described in the
previous section. }

At a high level, the argument to prove our main theorem mirrors ideas
used in a sequence of works \cite{AFKS, AS08a, AS08, graphestim,
  AFNS06, BCLSSV06} to characterize the testable graph properties.  In
particular, the technique of simultaneously decomposing the domain
into a coarse partition and a fine partition with very strong
regularity properties is due to \cite{AFKS}, and the compactness
argument used to handle infinitely many constraints is due to
\cite{AS08a}.

\subsection{Organization}
In \cref{sec:prelude}, we assemble all the technical components and
establish some basic notions such as non-classical polynomials, rank
and regularity. In \cref{sec:equid}, we show the equidistribution
result for non-classical polynomials. In \cref{sec:degstruct}, we use
the results established thus far to prove
\cref{thm:degstruct}. \cref{sec:testing} is devoted to proving
\cref{thm:main3}.

}
\section{Preparation}\label{sec:prelude}

\subsection{Notation}
For integers $a,b$, we let $[a]$ denote the set $\set{1,2,\dots,a}$
and $[a,b]$ denote the set $\set{a, a+1, \dots, b}$. For a set $S$, $\cP(S)$ denotes the
power set of $S$,

Fix a prime field $\F = \F_p$ for a prime $p \geq 2$. As we defined earlier $|\cdot|$
denotes the standard map from $\F$ to $\set{0,1,\dots,p-1} \subset
\Z$.

We use the shorthand $x = a \pm \eps$ to mean $a-\eps \leq x \leq a +
\eps$.

\subsection{Locality}
In the context of affine-invariant properties, we can define the
notion of local characterization in a more algebraic way than we did
in the introduction.  Recall that a {\em hyperplane} is an affine subspace of codimension $1$.

\begin{definition}[Locally characterized properties]
An affine-invariant property $\mcl{P} \subset \{\F^n \to [R] : n \geq 0\}$ is said to
be  {\em locally characterized} if
both of the following hold:
\begin{itemize}
\item
For every function $f: \F^n \to [R]$ in $\mcl{P}$ and every
hyperplane $A \leq \F^n$, $f|_A \in \mcl{P}$.
\item
There exists a constant $K \geq 1$ such that if $f: \F^n \to [R]$ does
not belong to $\mcl{P}$ and $n > K$, then there exists a
hyperplane $B
\leq \F^n$ such that $f|_B \not \in \mcl{P}$.
\end{itemize}

The constant $K$ is said to be the {\em locality} of $\mcl{P}$.
\end{definition}

The following observation shows that an affine-invariant property
is locally characterized if and only if it can be described using a
bounded number of induced affine constraints from the previous
section, and hence, is locally characterized in the sense of the
introduction.
\begin{lemma}
If $\mcl{P} \subseteq \{\F^n \to [R]: n \ge 0 \}$ is a locally characterized
affine-invariant property with locality $K$, then $\mcl{P}$ is
equivalent to $\cA$-freeness, where $\cA$ is a finite collection of
induced affine constraints, with each constraint of size $p^K$ on
$K+1$ variables. On the other hand, if $\mcl{P}$ is equivalent to
$\cA$-freeness, where $\cA$ is a collection of induced affine
constraints with each constraint on $\leq K+1$ variables, then
$\mcl{P}$ has locality at most $K$.
\end{lemma}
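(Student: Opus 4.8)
The plan is to prove the two directions separately; both reduce to careful bookkeeping about which affine subspace a configuration of points actually occupies, using that every form $L$ in an affine constraint has leading coefficient $1$. Concretely, I would first record a \emph{transport principle}: for points $x_1,\dots,x_\ell\in\F^N$ the values $L_1(x_1,\dots,x_\ell),\dots,L_m(x_1,\dots,x_\ell)$ all lie in the affine subspace $V:=x_1+\mathrm{span}(x_2,\dots,x_\ell)$ (since $L_1=X_1$ and every other $L_j$ has the form $X_1+\sum_{i\ge 2}w_iX_i$), so $\dim V\le\ell-1$; and for any affine subspace $W\supseteq V$ and any affine isomorphism $\beta:\F^{\dim W}\xrightarrow{\ \cong\ }W$, one can relabel the $x_i$ to points $x_i'\in\F^{\dim W}$ with $\beta\bigl(L_j(x_1',\dots,x_\ell')\bigr)=L_j(x_1,\dots,x_\ell)$ for all $j$ — and symmetrically one can push a configuration down from $W$ into any affine subspace containing it. The coefficient $1$ on $X_1$ is what makes these identities go through even though affine forms are not linear; the coefficients of $X_2,\dots,X_\ell$ are arbitrary and are simply carried along by the coordinate change. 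The upshot is that a function induces a fixed affine constraint somewhere iff its restriction to \emph{some} affine subspace of dimension $\le\ell-1$ induces it.

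For the direction ``$\cA$-freeness with all constraints on $\le K+1$ variables $\Rightarrow$ locality $\le K$'': $\cA$-freeness is hereditary because, by the transport principle, a violation of a constraint occurring inside a hyperplane $B\le\F^n$ transports to a violation in $\F^n$, so $f|_B\notin\mcl P\Rightarrow f\notin\mcl P$. And if $f:\F^n\to[R]$ with $n>K$ is not $\cA$-free, a violating tuple for a constraint on $\ell\le K+1$ variables sits inside an affine subspace $V$ with $\dim V\le\ell-1\le K<n$; choosing any hyperplane $B$ with $V\subseteq B$ and pushing the violation down into $B$, we get $f|_B$ not $\cA$-free, i.e.\ $f|_B\notin\mcl P$. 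Together these give both bullets of the locality definition, with locality $\le K$.

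For the direction ``locality $K$ $\Rightarrow$ finite $\cA$'': iterating the hyperplane step of the locality definition while the restriction stays outside $\mcl P$, and using hereditariness so that enlarging a bad subspace keeps it bad, yields: $f\notin\mcl P$ iff $f$ has a restriction to an affine subspace of dimension $\min(n,K)$ lying outside $\mcl P$; passing through an affine surjection $\F^K\twoheadrightarrow\F^{\min(n,K)}$ and a section of it (which is injective affine, so exhibits its target as a restriction) upgrades this to: $f\notin\mcl P$ iff $f\circ\phi\notin\mcl P$ for some affine map $\phi:\F^K\to\F^n$. I then define $\cA$ to consist, for each of the $\le R^{p^K}$ functions $g:\F^K\to[R]$ with $g\notin\mcl P$, of the induced constraint $(A_g,g)$ where $A_g=(L_t)_{t\in\F^K}$ with $L_t=(1,t_1,\dots,t_K)$ a form on $K+1$ variables, listed so that $L_{\mathbf 0}=X_1$ comes first (downward-closedness is automatic since all of $\F^K$ appears), and the forbidden pattern is $g$ itself read off as an element of $[R]^{p^K}$. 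Since $(y_1,\dots,y_{K+1})\mapsto\bigl(t\mapsto y_1+\sum_i t_iy_{i+1}\bigr)$ ranges over all affine maps $\F^K\to\F^n$, ``$f$ induces $(A_g,g)$ somewhere'' is exactly ``$f\circ\phi=g$ for some affine $\phi$'', so $f$ is $\cA$-free iff $f\circ\phi\in\mcl P$ for every affine $\phi:\F^K\to\F^n$, which by the displayed characterization is equivalent to $f\in\mcl P$. Each constraint has size $p^K$ on $K+1$ variables and $\cA$ is finite, as required; the reverse statement of the lemma was handled above.

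The step I expect to need the most care is the equivalence ``$f\notin\mcl P\iff\exists$ affine $\phi:\F^K\to\F^n$ with $f\circ\phi\notin\mcl P$'' when $\phi$ is \emph{not} injective. For injective $\phi$ both implications are immediate from affine-invariance and hereditariness. For non-injective $\phi$, the implication $f\circ\phi\notin\mcl P\Rightarrow f\notin\mcl P$ follows from the section argument above, but the converse — that a function in $\mcl P$ cannot induce a forbidden pattern under a degenerate substitution — amounts to $\mcl P$ being stable under introducing dummy coordinates, which has to be teased out of hereditariness (together with the locality bound, in the range of $n$ that matters). Pinning down exactly which affine subspace each configuration occupies, and verifying the coordinate-change identities for the leading-coefficient-$1$ forms, is essentially the only genuine computation in the proof.
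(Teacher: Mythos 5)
The paper states this lemma without proof, and your overall route is the intended one: your transport principle for coefficient-$1$ affine forms gives the second direction (constraints on $\le K+1$ variables $\Rightarrow$ locality $\le K$) completely and correctly, and your construction of $\cA$ from the $\le R^{p^K}$ bad functions $g:\F^K\to[R]$, with the $p^K$ forms $X_1+\sum_i t_iX_{i+1}$ on $K+1$ variables, is exactly the bounded-dimension version of the construction the paper does spell out when proving that subspace hereditary properties coincide with $\cA$-freeness. The iteration of the locality definition down to dimension $\min(n,K)$, and the surjection-plus-section argument for $n<K$, are also fine.

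However, the step you flagged is a genuine gap, and it cannot be closed the way you propose. The statement ``$f\in\mcl{P}$ and $\phi:\F^K\to\F^n$ affine but non-injective $\Rightarrow f\circ\phi\in\mcl{P}$'' does \emph{not} follow from hyperplane-hereditariness, the locality bound, and invariance under invertible affine maps. For a counterexample with $p=2$, $K=2$: let $\mcl{P}$ contain every function in dimensions $0$ and $1$, in dimension $2$ every function except the (balanced) functions of the form $h\circ\lambda$ with $\lambda:\F_2^2\to\F_2$ affine surjective and $h$ nonconstant, and in dimensions $n\ge 3$ exactly those functions all of whose hyperplane restrictions lie in $\mcl{P}$. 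This satisfies both bullets of the locality definition with $K=2$ and is invariant under invertible affine maps, yet $f(x,y)=xy$ lies in $\mcl{P}$ while $f(u,u)$, viewed on $\F_2^2$, does not; since your own transport computation (which never uses injectivity) shows $\cA$-freeness is always preserved under precomposition with arbitrary affine maps, no collection $\cA$ can capture this $\mcl{P}$. The missing ingredient is the paper's definition of affine invariance, which requires closure under \emph{all} affine maps $\F^n\to\F^n$, singular ones included: with that, for $n\ge K$ you extend $\phi$ to the (possibly singular) self-map $\phi\circ p$ for an affine retraction $p:\F^n\to\F^K$, apply invariance, and then restrict to an affine section of $p$ to recover $f\circ\phi\in\mcl{P}$. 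For $n<K$ the pullback raises the dimension and the equivalence can genuinely fail even under this stronger invariance (e.g.\ the property that is everything in dimensions $\le 1$ and the constants in dimensions $\ge 2$, which has locality $2$), so the first assertion of the lemma should be read as an equivalence for $n\ge K$ --- the only regime the paper ever uses. Finally, a small logical slip: your section argument proves ``$f\circ\pi\in\mcl{P}\Rightarrow f\in\mcl{P}$'' for a surjection $\pi$ (what you need in the forward direction when $n<K$), not the non-injective case of the reverse direction; the two sentences in your last paragraph discuss contrapositives of one and the same statement, namely the dummy-variable stability just addressed.
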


Finally, we also make formal note of the observation in the
introduction that if a property is testable, then it must be locally
characterized.
\begin{remark}
If $K$ is a fixed integer and $\mcl{P} \subset \set{\F^n \to [R]}$ is
an affine-invariant property that is testable with $K$ queries, then
$\mcl{P}$ is a locally characterized property with locality $K$.
\end{remark}
So, we can view our main result as a  converse statement.

\subsection{Derivatives and Polynomials}

\begin{definition}[Multiplicative Derivative]\label{multderiv}
Given a function $f: \F^n \to \C$ and an element $h \in \F^n$, define
the {\em multiplicative derivative in direction $h$} of $f$ to be the
function $\Delta_hf: \F^n \to \C$ satisfying $\Delta_hf(x) =
f(x+h)\overline{f(x)}$ for all $x \in \F^n$.
\end{definition}
The {Gowers norm} of order $d$ for a function $f$ is the
expected multiplicative derivative of $f$ in $d$ random directions at
a random point.
\begin{definition}[Gowers norm]\label{gowers}
Given a function $f: \F^n \to \C$ and an integer $d \geq 1$, the {\em
  Gowers norm of order $d$}  for $f$ is given by
$$\|f\|_{U^d} = \left|\E_{h_1,\dots,h_d,x \in \F^n} \left[(\Delta_{h_1}
\Delta_{h_2} \cdots \Delta_{h_d}f)(x)\right]\right|^{1/2^d}.$$
\end{definition}
Note that as $\|f\|_{U^1}= |\Ex{f}|$ the Gowers norm of order $1$ is only a semi-norm. However for $d>1$, it is not difficult to show that
$\|\cdot\|_{U^d}$ is indeed a norm.

If $f = e^{2\pi i P/p}$ where $P: \F^n \to \F$ is a polynomial
of degree $< d$, then $\|f\|_{U^d} = 1$. If $d < p$ and $\|f\|_\infty \le 1$, then in fact, the
converse holds, meaning that any  function $f: \F^n \to \C$ satisfying
$\|f\|_\infty \le 1$ and $\|f\|_{U^d} = 1$ is of this form. But when $d \geq p$, the converse
is no longer true. In order to characterize functions $f : \F^n \to
 \C$ with $\|f\|_\infty \le 1$ and $\|f\|_{U^d}=1$, we define the notion of {\em
  non-classical polynomials}.

Non-classical polynomials might not be necessarily $\F$-valued. We need to
introduce some notation.
Let $\T$ denote the circle group $\R/\Z$. This is an abelian group
with group operation denoted $+$. For an integer $k \geq 0$, let $\U_k$
denote $\frac{1}{p^k} \Z/\Z$, a subgroup of $\T$.
Let $\iota: \F \to \U_1$ be the injection $x \mapsto \frac{|x|}{p} \mod 1$, where $|x|$ is the standard map from $\F$ to
$\set{0,1,\dots,p-1}$. Let $\msf{e}: \T \to \C$ denote the character
$\expo{x} = e^{2\pi i x}$.
\begin{definition}[Additive Derivative]\label{addderiv}
Given a function\footnote{We try to adhere to the following convention: upper-case letters (e.g. $F$ and
  $P$) to denote functions mapping from $\F^n$ to $\T$ or to $\F$,
  lower-case   letters (e.g. $f$ and $g$) to denote functions mapping
  from $\F^n$ to $\C$, and upper-case Greek letters (e.g. $\Gamma$ and
$\Sigma$) to denote functions mapping  $\T^C$ to $\T$. By  abuse of notation, we sometimes conflate $\F$ and $\iota(\F)$.} $P:
\F^n \to \T$ and an element $h \in \F^n$, define
the {\em additive derivative in direction $h$} of $f$ to be the
function $D_hP: \F^n \to \T$ satisfying $D_hP(x) = P(x+h) - P(x)$
for all $x \in \F^n$.
\end{definition}
\begin{definition}[Non-classical polynomials]\label{poly}
For an integer $d \geq 0$, a function $P: \F^n \to \T$ is said to be a
{\em non-classical polynomial of degree $\leq d$} (or simply a
{\em polynomial of degree $\leq d$}) if for all $h_1,
\dots, h_{d+1}, x \in \F^n$, it holds that
\begin{equation}\label{eqn:poly}
(D_{h_1}\cdots D_{h_{d+1}} P)(x) = 0.
\end{equation}
The {\em degree} of $P$ is the smallest $d$ for which the above holds.
A function $P : \F^n \to \T$ is said to be a {\em classical polynomial of degree
$\leq d$} if it is a non-classical polynomial of degree $\leq d$
whose image is contained in $\iota(\F)$.
\end{definition}

It is a direct consequence that a function $f :
\F^n \to \C$ with $\|f\|_\infty \leq 1$ satisfies $\|f\|_{U^{d+1}} =
1$ if and only if $f = \expo{P}$ for a
(non-classical) polynomial $P: \F^n \to \T$ of degree $\leq d$.

The following lemma of Tao and Ziegler shows that a classical
polynomial $P$ of degree $d$ must always be of the form $\iota \circ
Q$, where $Q : \F^n \to \F$ is a polynomial (in the usual sense) of
degree $d$. It also characterizes the structure of non-classical
polynomials.
\begin{lemma}[Part of Lemma 1.7 in \cite{TZ11}]\label{struct}
Let $d \geq 1$ be an integer.
\begin{enumerate}
\item[(i)]
A function $P: \F^n \to \T$ is a polynomial of degree $\leq d+1$ if and only if
$D_hP$ is a polynomial of degree $\leq d$ for all $h \in \F^n$.

\item[(ii)]
A function $P: \F^n \to \T$ is a classical polynomial of degree $\leq
d$ if $P = \iota \circ Q$, where $Q: \F^n \to \F$ has a representation
of the form
$$Q(x_1,\dots,x_n) = \sum_{0\leq d_1, \dots, d_n < p:\atop {\sum_i d_i
    \leq d}}c_{d_1,\dots,d_n}x_1^{d_1}\cdots x_n^{d_n},
$$
for a unique choice of coefficients $c_{d_1,\dots,d_n} \in \F$.

\item[(iii)]
A function $P: \F^n \to \T$ is a polynomial of degree $\leq d$ if and
only if $P$ can be represented as
$$P(x_1,\dots,x_n) = \alpha + \sum_{0\leq d_1,\dots,d_n< p; k \geq 0:
  \atop {0 < \sum_i d_i \leq d - k(p-1)}} \frac{ c_{d_1,\dots, d_n,
  k} |x_1|^{d_1}\cdots |x_n|^{d_n}}{p^{k+1}} \mod 1,
$$
for a unique choice of $c_{d_1,\dots,d_n,k} \in \set{0,1,\dots,p-1}$
and $\alpha \in \T$.  The element $\alpha$ is called the {\em
  shift} of $P$, and the largest integer $k$ such that there
exist $d_1,\dots,d_n$ for which $c_{d_1,\dots,d_n,k} \neq 0$ is called
the {\em depth} of $P$. Classical polynomials correspond to
polynomials with $0$ shift and $0$ depth.

\item[(iv)]
If $P: \F^n \to \T$ is a polynomial of  depth $k$, then it takes
values in a coset of the subgroup $\U_{k+1}$. In particular, a
polynomial of degree $\leq d$ takes on at most
$p^{\lfloor \frac{d-1}{p-1}\rfloor +1}$ distinct values.
\end{enumerate}
\end{lemma}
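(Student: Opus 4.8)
The plan is to prove the four parts in the order stated; (i) and (iv) are short, while (ii) and especially (iii) carry the weight. Part (i) is immediate from the fact that the finite-difference operators commute: $D_{h_1}\cdots D_{h_{d+2}}P = (D_{h_2}\cdots D_{h_{d+2}})(D_{h_1}P)$, so this vanishes for all $h_1,\dots,h_{d+2},x$ precisely when $D_{h_1}P$ has degree $\le d$ for every $h_1$, which is the definition of $P$ having degree $\le d+1$.

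For part (ii), the key observation is that $\iota$ intertwines differences: for $a,b\in\F$ one has $\iota(a)-\iota(b)=\iota(a-b)$ in $\T$, since $|a|-|b|\equiv|a-b|\pmod p$. Hence for $P=\iota\circ Q$ with $Q:\F^n\to\F$ we get $D_hP=\iota\circ(D_h^\F Q)$, where $D_h^\F Q(x)=Q(x+h)-Q(x)$ is the ordinary additive derivative of $\F$-valued functions, and by iteration $D_{h_1}\cdots D_{h_{d+1}}P=\iota\circ(D_{h_1}^\F\cdots D_{h_{d+1}}^\F Q)$. Thus $P$ is a classical polynomial of degree $\le d$ iff $Q$ is annihilated by all $(d+1)$-fold additive derivatives, and it remains to invoke the classical fact that, in the unique representation of $Q$ with all individual exponents $<p$, this happens exactly when $Q$ has total degree $\le d$: one direction is because each $D_h^\F$ strictly lowers total degree, and for the converse one picks a monomial $x_1^{d_1}\cdots x_n^{d_n}$ of maximal total degree in $Q$ and applies $d_i$ differences in the direction $e_i$ for each $i$, observing that every other monomial is killed while this one contributes a nonzero constant (a multiple of $d_1!\cdots d_n!$, which is nonzero mod $p$ since each $d_i<p$).

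Part (iii) is the heart of the lemma, and I would prove it by induction on $d$. The base case $d=0$ says $D_hP\equiv 0$ for all $h$, so $P$ is a constant $\alpha$. For the inductive step, take $P$ of degree $\le d$; by (i) each $D_hP$ has degree $\le d-1$, so by the induction hypothesis it has the claimed representation, with shift and coefficients depending on $h$. Feeding this into the cocycle identity $D_{h+h'}P(x)=D_hP(x)+D_{h'}P(x)+D_hD_{h'}P(x)$ and using that $D_hD_{h'}P$ has degree $\le d-2$ forces the coefficients of the top-degree monomials of $D_hP$, viewed as functions of $h$, to be themselves non-classical polynomials in $|h|$ of the allowed shape. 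From this one reads off finitely many candidate monomials $\frac{|x_1|^{d_1}\cdots|x_n|^{d_n}}{p^{k+1}}\bmod 1$ with $0<\sum_i d_i\le d-k(p-1)$ whose derivatives reproduce the top part of $D_hP$; the crucial sub-step is to verify that such a monomial is a non-classical polynomial of degree exactly $\sum_i d_i+k(p-1)$, which reduces to the single-variable carry identity $|x+h|=|x|+|h|-p\cdot\mathbf{1}[|x|+|h|\ge p]$: differentiating $\frac{|x|^{j}}{p^{k+1}}$ produces a depth-$(k-1)$ term built from the carry indicator, which is a function of $|x|$ of integer-degree $\le p-1$, and this simultaneously lowers the depth and raises the exponent, accounting for the $k(p-1)$ adjustment and for why the denominators grow. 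Subtracting a suitable integer combination $P_0$ of these monomials, the function $P-P_0$ has all its derivatives of degree $<d-1$, hence $\deg(P-P_0)<d$ and the induction hypothesis applies; uniqueness of the $c_{d_1,\dots,d_n,k}$ and $\alpha$ follows by evaluating at $x=0$ and then reducing the resulting identity modulo successive powers of $p$, using uniqueness of polynomial representations over $\F$ at each stage.

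Part (iv) is a corollary of (iii): each monomial term at depth $k'\le k$ lies in $\U_{k'+1}\subseteq\U_{k+1}$, so $P-\alpha$ takes values in $\U_{k+1}$, i.e.\ $P$ lies in the coset $\alpha+\U_{k+1}$; and a monomial with positive exponent sum can occur at depth $k'$ only if $1\le\sum_i d_i\le d-k'(p-1)$, forcing $k'\le\lfloor(d-1)/(p-1)\rfloor$, so the image of $P$ has size at most $|\U_{\lfloor(d-1)/(p-1)\rfloor+1}|=p^{\lfloor(d-1)/(p-1)\rfloor+1}$. I expect the main obstacle to be the inductive step of (iii): both the bookkeeping showing that the coefficients of $D_hP$ vary polynomially in $h$ in the prescribed way, and — more fundamentally — the carry analysis that pins down the exact degree of the non-classical monomials and thereby explains the appearance of the denominators $p^{k+1}$ and the constraint $\sum_i d_i\le d-k(p-1)$.
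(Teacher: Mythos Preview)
The paper does not prove this lemma at all: it is quoted verbatim as ``Part of Lemma 1.7 in \cite{TZ11}'' and used as a black box, so there is no in-paper argument to compare your proposal against. Your write-up is therefore an attempt to supply a proof the authors deliberately outsourced.

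On the substance: parts (i), (ii), and (iv) are fine and essentially complete as you wrote them. For part (iii), your inductive strategy is the same one Tao and Ziegler use, and the single-variable carry identity $|x+h|=|x|+|h|-p\cdot\mathbf{1}[|x|+|h|\ge p]$ is indeed the mechanism that links depth to degree via the $k(p-1)$ shift. However, the passage from ``the cocycle identity forces the coefficients of the top-degree monomials of $D_hP$ to be non-classical polynomials in $|h|$ of the allowed shape'' to ``one reads off finitely many candidate monomials'' is where the real work lies, and as written it is a plan rather than a proof. Concretely, you need an explicit ``integration'' step: given the inductive representation of $D_hP$, exhibit a specific $P_0$ of the asserted form with $D_hP_0$ matching $D_hP$ in its top-degree part for \emph{every} $h$ simultaneously, and this requires tracking how the coefficients $c_{d_1,\dots,d_n,k}(h)$ of $D_hP$ depend on $h$ (in Tao--Ziegler this is handled via a careful analysis of the single-variable case $\F\to\T$ first, then a tensor-product/multilinearization argument). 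Your sketch names the right ingredients but does not yet show that the $h$-dependence is of the form that permits integration; you yourself flag this as the expected obstacle, and it is a genuine one.
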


Note that \cref{struct}~(iii) immediately implies the following important
observation\footnote{Recall that $\T$ is an additive
  group. If $n \in  \Z$  and $x \in \T$, then $nx$ is shorthand for
  $\underbrace{x + \cdots   + x}_{n\text{ terms}}$ if $n \geq 0$ and
  $\underbrace{-x - \cdots   - x}_{-n\text{ terms}}$ otherwise.} :
\begin{remark}
If $Q: \F^n \to \T$ is a polynomial of degree $d$ and depth $k$, then
$pQ$ is a polynomial of degree $\max(d-p+1, 0)$ and depth $k-1$. In
other words, if $Q$ is classical, then $pQ$ vanishes, and otherwise,
its degree decreases by $p-1$ and its depth by $1$. Also, if $\lambda
\in [1, p-1]$ is an integer, then $\deg(\lambda Q) = d$ and
$\mrm{depth}(\lambda Q) = k$.
\end{remark}
Also, for convenience of exposition, we will assume throughout this
paper that the shifts of all polynomials are zero. This can be done
without affecting any of the results in this work. Hence, all
polynomials of depth $k$ take values in $\U_{k+1}$.


\subsection{Inverse Theorem}

There is a tight connection between polynomials and  Gowers
norms. In one direction, it is a straightforward consequence of the
monotonicity of the Gowers norm ($\|f\|_{U^d} \leq \|f\|_{U^{d+1}}$)
and invariance of the Gowers norm with respect to modulation by lower
degree polynomials ($\|f\|_{U^{d+1}} = \|f\cdot \expo{P}\|_{U^{d+1}}$ for
polynomials $P$ of degree $\leq d$) that if $f$ is
{\em $\delta$-correlated} with a polynomial $P$ of degree $\leq d$,
meaning $$|\E_{x} f(x) \expo{-P(x)}| \geq
\delta$$ for some $\delta > 0$,
then $$\|f\|_{U^{d+1}} \geq \delta.$$

In the other direction, we have the following ``Inverse
theorem for the Gowers norm''.
\begin{theorem}[Theorem 1.11 of \cite{TZ11}]\label{inverse}
Suppose $\delta > 0$ and $d \geq 1$ is an integer. There exists an
$\eps = \eps_{\ref{inverse}}(\delta,d)$ such that the following
holds. For every function $f: \F^n \to \C$ with $\|f\|_\infty \leq
1$ and $\|f\|_{U^{d+1}} \geq \delta$, there exists a polynomial $P:
\F^n \to \T$ of degree $\leq d$ that is $\eps$-correlated with $f$,
meaning
$$\left|\E_{x \in \F^n} f(x) \expo{-P(x)} \right| \geq \eps.$$
\end{theorem}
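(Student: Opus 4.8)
Since this statement is the Gowers inverse theorem of Tao and Ziegler, whose full proof is a paper in itself, I will only outline the structure of an argument; the plan is to induct on the order $d$. For the base case $d = 1$, I would use Fourier analysis: over $\F^n$ one has $\|f\|_{U^2}^4 = \sum_\xi |\hat f(\xi)|^4$, while Parseval and $\|f\|_\infty \le 1$ give $\sum_\xi |\hat f(\xi)|^2 = \E_x |f(x)|^2 \le 1$, so $\delta^4 \le \|f\|_{U^2}^4 \le \big(\max_\xi |\hat f(\xi)|\big)^2 \sum_\xi |\hat f(\xi)|^2 \le \big(\max_\xi |\hat f(\xi)|\big)^2$. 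Hence some $\xi$ has $|\hat f(\xi)| \ge \delta^2$, and since $x \mapsto \iota(\iprod{\xi,x})$ is a classical polynomial of degree $\le 1$ this gives the conclusion with $\eps_{\ref{inverse}}(\delta,1) = \delta^2$.

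For the inductive step, assuming the theorem for $d-1$, I would unpack \cref{gowers} to write $\|f\|_{U^{d+1}}^{2^{d+1}} = \E_{h \in \F^n} \|\Delta_h f\|_{U^d}^{2^d}$; since each term is at most $1$, a popularity argument produces a set $H$ of density $\ge \delta^{O_d(1)}$ with $\|\Delta_h f\|_{U^d} \ge \delta^{O_d(1)}$ for $h \in H$. Applying the inductive hypothesis to each $\Delta_h f$ then yields, for every $h \in H$, a polynomial $P_h : \F^n \to \T$ of degree $\le d-1$ with $\big|\E_x \Delta_h f(x)\, \expo{-P_h(x)}\big| \ge \eps'$ where $\eps' = \eps_{\ref{inverse}}(\delta^{O_d(1)}, d-1)$. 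The remaining --- and hardest --- task is to synthesize the family $\{P_h\}_{h \in H}$ of ``approximate derivatives'' into a \emph{single} polynomial $P : \F^n \to \T$ of degree $\le d$ with $\expo{P}$ correlated with $f$. I would proceed in three steps: (i) double counting / Cauchy--Schwarz to show that for a positive-density set of additive quadruples $h_1 + h_2 = h_3 + h_4$ in $H$ an approximate cocycle identity $P_{h_1} + P_{h_2} \approx P_{h_3} + P_{h_4}$ holds modulo polynomials of degree $< d-1$; (ii) the finite-field Balog--Szemer\'edi--Gowers and Freiman-type structure theorems to upgrade ``many quadruples'' to the statement that a projection of $h \mapsto P_h$ agrees on a large set with an honest homomorphism into a group of degree-$(d-1)$ polynomials; (iii) ``integration in $h$'', where a linear-in-$h$ family of degree-$(d-1)$ polynomials is recognized as the derivative data of some polynomial of degree $\le d$, and \cref{struct}~(iii) certifies that the resulting object is a (possibly non-classical) polynomial of degree $\le d$. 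A final Gowers-norm manipulation turns the recovered structure back into genuine correlation $\big|\E_x f(x)\, \expo{-P(x)}\big| \ge \eps_{\ref{inverse}}(\delta,d)$.

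\textbf{The hard part} is step (iii), and more broadly the fact that non-classical polynomials are genuinely needed here. When $p > d$ the antiderivative of an $\F$-valued degree-$(d-1)$ polynomial stays $\F$-valued, so one never leaves classical polynomials; but for $d \ge p$ the base-$p$ carries force the antiderivative into $\U_{k+1} = \frac{1}{p^{k+1}}\Z/\Z$, and the symmetric degree-$4$ polynomial over $\F_2$ (with $d=3$) is a genuine counterexample to the classical version \cite{GT07}. The cleanest way I know to carry out (i)--(iii) while tracking these depth increases is a Furstenberg-type correspondence principle: transfer to an ergodic $\bigoplus_\N \F_p$-system, apply the Host--Kra--Ziegler structure theory of characteristic factors --- whose description in this setting is precisely in terms of non-classical polynomial phases, as worked out by Bergelson, Tao and Ziegler \cite{BTZ10} --- and transfer the structural conclusion back to $\F^n$. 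It is this soft compactness step that makes $\eps_{\ref{inverse}}(\delta,d)$ ineffective, which is exactly the source of the absence of explicit bounds noted after \cref{thm:main3}.
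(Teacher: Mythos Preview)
The paper does not prove this theorem; it is quoted verbatim as Theorem~1.11 of \cite{TZ11} and used as a black box, so there is nothing in the paper to compare your argument against. Your recognition that a full proof is out of scope, and your decision to give only an outline, is exactly right.

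Your sketch is a reasonable high-level account, but it conflates two distinct proof strategies. Steps (i)--(iii) --- cocycle identity, Balog--Szemer\'edi--Gowers, and symmetry/integration --- are the \emph{combinatorial} approach originating with Gowers and pushed through for small $d$ by Green--Tao and Samorodnitsky; that approach has not been made to work for arbitrary $d$ over $\F_p$ in the low-characteristic regime. The actual proof in \cite{TZ11} does not ``carry out (i)--(iii) via a correspondence principle'' as you suggest; rather, it bypasses that template entirely: one passes to an ergodic $\F_p^\omega$-system, invokes the structure theorem of \cite{BTZ10} identifying the Host--Kra factors with (non-classical) Abramov systems, and then transfers back. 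So the correspondence principle is not a tool for executing the combinatorial steps but an alternative route altogether. Your remarks on why non-classical polynomials are forced, and on the resulting ineffectivity of $\eps_{\ref{inverse}}(\delta,d)$, are accurate.
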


\subsection{Rank}
We will often need to study  Gowers norms of exponentials of polynomials. As we describe below if this analytic
quantity is non-negligible, then there is an algebraic explanation for this: it is possible to  decompose the polynomial as a function of a constant number of low-degree polynomials.  To state this rigorously, let us define the notion of {\em rank} of a polynomial.

\begin{definition}[Rank of a polynomial]
Given a polynomial $P : \F^n \to \T$ and an integer $d > 1$, the {\em $d$-rank} of
$P$, denoted $\msf{rank}_d(P)$, is defined to be the smallest integer
$r$ such that there exist polynomials $Q_1,\dots,Q_r:\F^n \to \T$ of
degree $\leq d-1$ and a function $\Gamma: \T^r \to \T$ satisfying
$P(x) = \Gamma(Q_1(x),\dots, Q_r(x))$. If $d=1$, then
$1$-rank is defined to be $\infty$ if $P$ is non-constant and $0$
otherwise.

The {\em rank} of a polynomial $P: \F^n \to \T$ is its $\deg(P)$-rank.
\end{definition}

Note that for integer $\lambda \in [1, p-1]$,  $\mrm{rank}(P) =
\mrm{rank}(\lambda P)$. The following theorem of Tao and Ziegler shows that high rank polynomials  have small
Gowers norms.

\begin{theorem}[Theorem 1.20 of \cite{TZ11}]\label{arank}
For any $\eps > 0$ and integer $d > 0$, there exists an
integer $r = r_{\ref{arank}}(d,\eps)$ such that the following is
true. For any polynomial $P : \F^n \to \T$ of degree $\leq d$, if
$\|\expo{P}\|_{U^d} \geq \eps$, then $\msf{rank}_{d}(P) \leq r$.
\end{theorem}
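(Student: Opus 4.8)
The plan is to prove the contrapositive: a polynomial of degree $\le d$ whose $d$-rank is large has small $U^d$-norm. The first move is to repackage $\|\expo{P}\|_{U^d}$ as the \emph{bias} of the top-order multilinear form of $P$. Writing out \cref{gowers} for $f=\expo{P}$ and using $\Delta_h\expo{P}=\expo{D_hP}$ gives
\[
\|\expo{P}\|_{U^d}^{2^d}=\E_{h_1,\dots,h_d,x}\expo{(D_{h_1}\cdots D_{h_d}P)(x)}.
\]
Since $\deg P\le d$, one further derivative vanishes, so by the iterated form of \cref{struct}(i) the quantity $M(h_1,\dots,h_d):=(D_{h_1}\cdots D_{h_d}P)(x)$ does not depend on $x$; that same vanishing, together with the operator identity $D_{h+h'}=D_h+D_{h'}+D_hD_{h'}$, shows that $M$ is additive — hence, being $p$-torsion, $\F$-linear — in each coordinate, and it is manifestly symmetric. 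Thus $M=\iota\circ\bar{M}$ for a symmetric $\F$-multilinear form $\bar{M}\colon(\F^n)^d\to\F$, the ``top form'' of $P$, and $\|\expo{P}\|_{U^d}^{2^d}$ is exactly the bias $\bigl|\E_{h_1,\dots,h_d}\expo{\iota(\bar{M}(h_1,\dots,h_d))}\bigr|$. The hypothesis now reads: $\bar M$ has bias $\ge\eps^{2^d}$. It suffices to prove two statements: (a) a symmetric $d$-linear form with bias $\ge\delta$ has partition rank at most some $r_0=r_0(d,p,\delta)$, i.e.\ $\bar M=\sum_{j\le r_0}A_jB_j$ with $A_j,B_j$ multilinear forms in complementary nonempty blocks of the coordinates; and (b) if the top form $\bar M$ of a degree-$d$ polynomial $P$ has partition rank $\le r_0$, then $\mrm{rank}_d(P)$ is bounded in terms of $r_0,d,p$. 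Chaining the contrapositives of (a) and (b) yields the theorem, with $r=r(\eps,d,p)$.

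For (a) I would induct on the arity $d$; the case $d=1$ is trivial, since a nonzero linear form has zero bias. For the step, regard $\bar M$ as linear in $h_d$ with $(d-1)$-linear coefficients; then $\E_{h_d}\expo{\iota(\bar M(h_1,\dots,h_d))}$ is the indicator that this linear form in $h_d$ is identically $0$, so the bias of $\bar M$ equals $\Pr_{h_1,\dots,h_{d-1}}[\bar M(h_1,\dots,h_{d-1},\cdot)\equiv 0]$. Applying Cauchy--Schwarz to this probability — a single extra differencing of the $(d-1)$-linear slices — forces, on a positive-density set of tuples, a nontrivial $\F$-linear dependence among the slices; extracting one rank-one term $A\cdot B$ from such a dependence and peeling it off reduces to the same statement for a form of lower effective complexity, and a bounded number of iterations produces $r_0$. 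This is the analytic-rank versus partition-rank comparison in the range we need, and I would quote a quantitative version of it rather than chase the constants.

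For (b), a partition-rank decomposition $\bar M=\sum_{j\le r_0}A_jB_j$, with $A_j$ an $a_j$-linear and $B_j$ a $(d-a_j)$-linear form ($1\le a_j\le d-1$), should be lifted to polynomials $\widetilde A_j,\widetilde B_j$ of degrees $a_j$ and $d-a_j$ (both $\le d-1$) whose ``product'' has top form $A_jB_j$; then $P-\iota\bigl(\sum_j\widetilde A_j\widetilde B_j\bigr)$ has degree $\le d-1$, so taking the $2r_0$ factors together with this one low-degree remainder exhibits $P$ as a function of at most $2r_0+1$ polynomials of degree $\le d-1$, i.e.\ $\mrm{rank}_d(P)\le 2r_0+1$. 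The genuinely delicate point here is that in small characteristic ($p\le d$) the classical ``diagonal'' construction $x\mapsto\frac1{a!}A(x,\dots,x)$ breaks down — this is precisely the degree-$4$ symmetric counterexample cited in the introduction — so the lifted $\widetilde A_j,\widetilde B_j$ must in general be \emph{non-classical} polynomials, and one must make sense of the ``product'' and of ``top form'' for them; this is where \cref{struct}(iii) and the structure theory of non-classical polynomials have to be used, and where the depth bookkeeping (tracking $pP, p^2P,\dots$) enters.

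I expect the main obstacle to be step (a) — the bias-to-partition-rank bound for $d$-linear forms, with an explicit (if tower-type) dependence on $d,p,\delta$ — closely followed by the non-classical lifting in step (b). An alternative route inducts on $d$ directly via $\|\expo{P}\|_{U^d}^{2^d}=\E_h\|\expo{D_hP}\|_{U^{d-1}}^{2^{d-1}}$ — a positive fraction of the derivatives $D_hP$ then have bounded $(d-1)$-rank by the induction hypothesis — but ``integrating'' that per-direction information back to a global rank bound for $P$ (using a sumset/Bogolyubov-type argument on the density-$\ge\eps^{2^d}$ set of good directions $h$) is the same difficulty wearing a different hat.
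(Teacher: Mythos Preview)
The paper does not prove this statement: it is quoted as Theorem~1.20 of \cite{TZ11} and used as a black box throughout. So there is no ``paper's proof'' to compare to, and your proposal has to stand on its own. It does not.

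Your reduction of $\|\expo{P}\|_{U^d}^{2^d}$ to the bias of the symmetric $d$-linear top form $\bar M$ is correct and standard. The difficulties are all in your steps (a) and (b).

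Step (a) is the whole theorem in disguise, and your sketch is not a proof. You correctly rewrite the bias as $\Pr_{h_1,\dots,h_{d-1}}[\bar M(h_1,\dots,h_{d-1},\cdot)\equiv 0]$, but the sentence ``extract one rank-one term from such a dependence and peel it off'' has no potential function attached: removing a rank-one summand does not increase the bias of the remainder, so you have given no reason the iteration terminates in $O_\delta(1)$ steps. What you are asserting is precisely the comparison ``bounded analytic rank $\Rightarrow$ bounded partition rank'' for $d$-tensors over $\F_p$. That is a genuine theorem, but for general $d$ and small $p$ the first proofs went \emph{through} the Tao--Ziegler machinery (regularity plus the non-classical inverse theorem) that the present statement is part of; the elementary, inverse-theorem-free proofs are considerably more recent and more intricate than a paragraph. ``I would quote a quantitative version'' is therefore either circular or an appeal to a harder theorem.

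Step (b) is also a gap, not merely a delicacy. The summand $A_jB_j$ in a partition-rank decomposition lives on a fixed split of the $d$ slots; $A_j$ is multilinear in its block but is \emph{not} symmetric, so it is not the top form of any polynomial in a single variable $x$, classical or non-classical, and there is nothing to ``lift $A_j$ to $\widetilde A_j$''. What you would actually need is a decomposition of $\bar M$ as a short sum of products of \emph{symmetric} lower-arity forms, which is strictly stronger than bounded partition rank; and even then, in characteristic $p\le d$ a symmetric $a$-linear form with nonzero diagonal (e.g.\ $M(h_1,h_2)=h_{1,1}h_{2,1}$ over $\F_2$) is not the top form of any classical degree-$a$ polynomial, so your ``product'' $\widetilde A_j\widetilde B_j$ has no obvious meaning. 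The honest version of step~(b) requires exactly the non-classical structure theory and the careful degree/depth bookkeeping that Tao and Ziegler develop --- it is not a wrapper around a multilinear identity.

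For orientation: the actual argument in \cite{TZ11} does not go through the top multilinear form at all. It proceeds by induction on $d$ using the non-classical inverse theorem (their Theorem~1.11, quoted here as \cref{inverse}) together with a regularity lemma for non-classical polynomial factors; the alternative you mention at the end --- pass to derivatives $D_hP$, apply induction, then ``integrate back'' --- is much closer in spirit to what they do, and the ``integration'' step is indeed where the non-classical polynomials and the hard work enter.
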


For future use, we also record here a simple
lemma stating that restrictions of high rank polynomials to
hyperplanes generally preserve degree and high rank.

\begin{lemma}\label{rankrestrict}
Suppose $P: \F^n \to \T$ is a polynomial of degree $d$ and rank
$\geq r$, where $r > p+1$. Let $A$ be a hyperplane in $\F^n$, and
denote by $P'$ the restriction of $P$ to $A$. Then,
$P'$ is a polynomial of degree $d$ and  rank $\geq r-p$, unless $d=1$
and $P$ is constant on $A$.
\end{lemma}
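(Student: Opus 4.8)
The plan is to prove the two assertions—preservation of degree and of rank—separately, both by contrapositive-style arguments that exploit the structure theory from the earlier sections. Fix a hyperplane $A \leq \F^n$. After an affine change of coordinates (which preserves both degree and rank, since rank is defined via the existence of a decomposition into lower-degree polynomials and affine substitution sends polynomials of degree $\leq e$ to polynomials of degree $\leq e$), we may assume $A = \{x : x_n = 0\}$, so that $P' = P|_{x_n = 0}$ is a polynomial on $\F^{n-1}$. That $P'$ has degree $\leq d$ is immediate from \cref{struct}~(iii), since restriction only kills monomials. So the content is the lower bound $\deg(P') = d$ together with $\mrm{rank}(P') \geq r - p$.

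**Degree preservation.** First I would handle the degree. Suppose for contradiction that $\deg(P') = d' < d$. Consider the decomposition $P = P' + x_n \cdot R$ that one gets by grouping monomials of $Q$ (in the representation of \cref{struct}~(iii)) according to whether $|x_n|$ appears—more carefully, write $P$ restricted to the two "halves" $x_n = 0$ and $x_n = a$. For each fixed $a \in \F$, the restriction $P|_{x_n = a}$ is a polynomial on $\F^{n-1}$ of degree $\leq d$, and $\{P|_{x_n=a}\}_{a \in \F}$ are $p$ polynomials of degree $\leq d - 1$ more than... Actually the cleanest route: observe that $P$ is a function of the $p$ polynomials $P|_{x_n=0}, P|_{x_n=1}, \dots, P|_{x_n=p-1}$ pulled back to $\F^n$ (since $x_n$ takes only $p$ values, $P(x) = P|_{x_n = |x_n|}(x_1, \dots, x_{n-1})$, and each coordinate function $x_i$ for $i < n$ together with the indicator of $x_n = a$ can be absorbed). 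More precisely: the polynomials $Q_a(x) := P|_{x_n = a}(x_1,\dots,x_{n-1})$, viewed as functions on $\F^n$ (ignoring the last coordinate), have degree $\leq d$, and $P(x) = \sum_{a \in \F} \mathbb{1}[x_n = a] Q_a(x)$. Since $\mathbb{1}[x_n = a]$ is a classical polynomial of degree $\leq p-1$ in $x_n$, if every $Q_a$ had degree $\leq d - 1$ then $P$ would be a function (indeed a sum of products) of polynomials of degree $\leq d-1$ and the classical low-degree indicators, forcing $\mrm{rank}(P) \leq p$ (we could take the $Q_a$ and the $p$ coefficient polynomials $|x_n|^j$ as the witnessing low-degree collection), contradicting $r > p + 1$. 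Hence some $Q_a$, and in particular $P|_{x_n = a}$ for some $a$, has degree exactly $d$. For $a \neq 0$ one reduces to the case $a = 0$ by translating $x_n \mapsto x_n + a$ (an affine map fixing the hyperplane structure), so $\deg(P') = d$ unless $P$ is constant on $A$, and by \cref{struct}~(iii) again $\deg(P') = d$ forces $d=1$ to be the only case where constancy can sneak in—matching the stated exception.

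**Rank preservation and the main obstacle.** For the rank bound, suppose $\mrm{rank}_d(P') < r - p$, so $P'(x) = \Gamma(S_1(x), \dots, S_t(x))$ with $t < r - p$ and each $\deg(S_j) \leq d - 1$, all on $\F^{n-1}$. Now I want to build a small-rank representation of $P$ itself. The idea is to combine this with the degree-reduction of the "difference" in the $x_n$ direction: writing $D_{e_n} P = P(x + e_n) - P(x)$, \cref{struct}~(i) gives $\deg(D_{e_n} P) \leq d - 1$. Iterating, the $p$ functions $D_{e_n}^{(j)} P$ for $j = 1, \dots, p-1$ all have degree $\leq d-1$, and $P|_{x_n = a}$ is recovered from $P|_{x_n = 0}$ and these derivatives via discrete Taylor expansion; hence lifting everything to $\F^n$, $P(x) = \Gamma(S_1, \dots, S_t) + (\text{function of } D_{e_n}P, \dots, D_{e_n}^{(p-1)}P, |x_n|^1, \dots, |x_n|^{p-1})$, expressible as a function of at most $t + (p-1) + (p-1)$ polynomials of degree $\leq d-1$—but I need to be careful to keep the count at $t + p$, which is where the constant $p$ in the statement comes from, and this is the delicate bookkeeping step. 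Each $S_j$ and each derivative $D_{e_n}^{(j)}P$ is a genuine polynomial of degree $\leq d-1$ on $\F^n$; the indicator-type functions $|x_n|^j$ are themselves polynomials of degree $\leq p - 1 \leq d - 1$ when $d \geq p$, and when $d < p$ the depth/degree arithmetic needs a separate glance. The upshot is $\mrm{rank}_d(P) \leq t + p < r$, contradicting the hypothesis. I expect the main obstacle to be exactly this counting: pinning down that the recombination uses only $p$ extra low-degree polynomials (not $2(p-1)$ or more), which likely requires observing that the "reconstruction" of $P|_{x_n=a}$ from $P|_{x_n=0}$ and its $x_n$-derivatives, together with the coordinate $x_n$, is already a function of $p$ inputs total (the $p$ restrictions $\{P|_{x_n=a}\}$, with one of them being $P'$), so that the new low-degree polynomials beyond those describing $P'$ number exactly $p - 1$ restrictions plus nothing else—or, symmetrically, exactly the $p$ derivative/indicator data—and handling the boundary case $d = 1$ (where $1$-rank is $\infty$ unless constant) by hand, which is precisely the exception carved out in the statement.
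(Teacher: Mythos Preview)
Your approach is essentially the paper's: write $P$ as a function of $P' \circ \pi$, the coordinate $\iota(x_n)$, and $p-1$ ``difference'' polynomials of degree $\leq d-1$ coming from derivatives in the $x_n$-direction. Two points deserve cleanup.

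\textbf{The degree argument is both flawed and unnecessary.} Your translation $x_n \mapsto x_n + a$ sends the hyperplane $\{x_n = 0\}$ to $\{x_n = -a\}$; it does not reduce ``some $Q_a$ has degree $d$'' to ``$Q_0$ has degree $d$'' for the \emph{fixed} hyperplane $A$ in the statement. Fortunately the separate degree step is redundant: once you have $P = \Gamma\bigl(\iota(x_n),\, P'\!\circ\!\pi,\, R_1,\dots,R_{p-1}\bigr)$ with each $R_j$ of degree $\leq d-1$ (this is the decomposition from your rank paragraph, with $R_j = (D_{e_n}^{(j)}P)|_A \circ \pi$), the degree claim falls out immediately. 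If $\deg(P') \leq d-1$ then \emph{all} $p+1$ inputs to $\Gamma$ have degree $\leq d-1$, so $\msf{rank}_d(P) \leq p+1 < r$, a contradiction. The paper does exactly this: one decomposition, then two one-line contradictions (one for degree, one for rank).

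\textbf{The count is $t + p$, not $t + 2(p-1)$.} Your worry about bookkeeping dissolves once you notice that $\Gamma$ is allowed to be \emph{any} function $\T^{\,t+p} \to \T$. Hence you need only the single degree-$1$ polynomial $\iota(x_n)$, not the $p-1$ separate powers $|x_n|^j$: the function $\Gamma$ can internally compute whatever polynomial in $x_n$ it likes. So the inputs are $S_1,\dots,S_t$ (replacing $P'$), the $p-1$ restricted-and-lifted derivatives $R_1,\dots,R_{p-1}$, and $\iota(x_n)$---exactly $t + p$ polynomials of degree $\leq d-1$, giving $\msf{rank}_d(P) \leq t + p < r$. (Also be explicit that your $D_{e_n}^{(j)}P$ must be restricted to $A$ and then lifted via $\pi$; as written, ``$D_{e_n}^{(j)}P$ as a function on $\F^n$'' evaluated at $x$ does not let you reconstruct $P(x)$ from $P'(\pi(x))$.) With these fixes your proof matches the paper's.
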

\begin{proof}
For the case $d=1$, we can check directly that either $P'$ is constant
or else, $P'$ is a non-constant degree-$1$ polynomial and so has
rank infinity.

So, assume $d>1$.
By making an affine transformation, we can assume without loss of
generality that $A$ is the hyperplane $\set{x_1=0}$. Let $\pi: \F^n
\to \F^{n-1}$ be the projection to $A$. Let $P'' = P - P' \circ
\pi$. Clearly, $P''$ is zero on $A$. For $x \in \F \setminus \set{0}$,
let $h_x = (x,0,\dots,0) \in \F^n$. Note that $D_{h_x}P''$ is of
degree $\leq d-1$ and that $(D_{h_x}P'')(y) = P''(y+h_x)$ for all $y
\in A$. Hence, for every $x \in \F \setminus \set{0}$,  $P''$ on $h_x
+ A$ agrees with a polynomial $Q_x$ of degree $\leq d-1$. So, for a
function $\Gamma: \T^{p+1} \to \T$, we can write $P =
\Gamma(\iota(x_1), P', Q_1, Q_2,\dots,Q_{p-1})$, where
$\iota(x_1), Q_1,\dots,Q_{p-1}$ are of degree $\leq d-1$.

Now, if $P'$ itself is of degree $d-1$, then ${P}$ is of rank
$\leq p+1 < r$, a contradiction. If $P'$ is of rank $< r-p$, then again
${P}$ is of rank $< r-p + p = r$, a contradiction.
\end{proof}

\ignore{
\subsubsection{Derivative of low rank polynomial is low rank}

Define $DP(y_1,\ldots,y_d) = D_{y_1}D_{y_2}\cdots D_{y_d}P(x)$ where $y_1,\ldots,y_d \in \F^n$ to be the derivative polynomial. Note that indeed it does not depend on $x$.

\begin{claim}
The polynomial $DP$ is
\begin{enumerate}
\item Classical nonzero polynomial of degree $d$.
\item Set-multilinear: all monomials are of the form $y_1(i_1) \ldots y_d(i_d)$.
\item Symmetric with regards to permutations of $y_1,\ldots,y_d$.
\end{enumerate}
\end{claim}
\begin{proof}
Clearly $DP$ is symmetric with regards to $y_1,\ldots,y_d$ by definition. It is nonzero since by assumption $\deg(P)=d$, hence $d_{y_1,\ldots,y_d} P$ is not identically zero. Note that $DP(0,y_2,\ldots,y_d)=0$ since $d_0 P \equiv 0$ for any polynomial. Hence all monomials in $DP$ must depend on $y_1$. By symmetry, all monomials must depend on all $y_1,\ldots,y_d$. Since $\deg(DP) \le \deg(P) = d$ it must be that $\deg(DP)=d$ and that it is set-multilinear of degree $d$. So, since its symbolic degree is $d$ is must be a classical polynomial.
\end{proof}

\begin{claim}
Let $Q(y_1,\ldots,y_d)$ be any set-multilinear polynomial. If $Q$ has low rank then it is biased.
\end{claim}

\begin{proof}
If $Q$ has low rank then there exist a polynomial $R(y_1,\ldots,y_d)$ of degree at most $d-1$ such that $|\E e(Q-R)| \gg 1$.
Let $Q'=Q-R$. Derive $Q'$ along each of $y_i$. That is, let $y_{i,0}=z_i+y_i$ and $y_{i,1}=z_i$ and consider
$$
Q''(y_1,z_1,\ldots,y_d,z_d) = \sum_{I \subseteq [d]} (-1)^{|I|} Q(y_{i,1},\ldots,y_{i,d}).
$$
By iterated Cauchy-Schwarz, $\E[e(Q'')] \ge |\E[e(Q')]|^{2^d} \gg 1$. On the other hand, since $Q$ is set-multilinear and $\deg(R) \le d-1$ we have $Q''(y_1,z_1,\ldots,y_d,z_d) = Q(y_1,\ldots,y_d)$. So $\E e(Q) \gg 1$.
\end{proof}

\begin{claim}
$P$ has low rank iff $DP$ has low rank.
\end{claim}

\begin{proof}
One direction is simple: if $P$ has low rank then so does $DP$ since $\textrm{rank}(DP) \le 2^d \textrm{rank}(P)$. For the other direction, assume that $DP$ has low rank. We showed that $DP$ then must be biased. But
$$
\|e(P)\|_{U^d}^{2^d} = \E e(DP) \gg 1.
$$
Hence by the inverse Gowers theorem, $P$ must be correlated with some
lower degree polynomial. By ``bias implies low-rank'' for
non-classical polynomials this means that $P$ has low rank.
\end{proof}

\subsection{Multiplication and Division by $p$}
A very useful tool in analyzing non-classical polynomials will be the
operations of multiplying and dividing polynomials by
$p$. Essentially, this is because, as part (iii) of \cref{struct}
shows, multiplying a polynomial by a large enough power of $p$ makes
it classical and for classical polynomials, we have a lot of tools at
hand.

The following lemma makes precise the effect of multiplying by $p$ on
the degree:
\begin{lemma}\label{multp}
If $P: \F^n \to \T$ is a polynomial of degree $\leq d$ for some $d
\geq 1$, then $pP$ is of degree $\leq \max(d-p+1,0)$. Conversely,
there exists a polynomial $Q:\F^n \to \T$ of degree $\leq d+p-1$ such
that $P = pQ$.
\end{lemma}
\begin{proof}
Immediate from part (iii) of \cref{struct}.
\end{proof}
Following the terminology of \cite{TZ11}, we say $Q$ is an {\em exact root}
of $P$ if  $\deg(Q) \leq \deg(P)+p-1$ and $P = pQ$.

We can also understand the effect of multiplication by $p$ on the
rank.
\begin{theorem}\label{lowrankp}
If $P$ is a polynomial of degree $\leq d$ and rank $\leq r$, then the
polynomial $pP$ has rank $\leq R_{\ref{lowrankp}}(d,r,p)$.
\end{theorem}
}

\subsection{Polynomial factors}

A high-rank polynomial of degree $d$ is, intuitively, a ``generic''
degree-$d$ polynomial. There are no unexpected ways to decompose it
into lower degree polynomials, and the property of high rank is robust
against various operations such as restrictions to hyperplanes, taking
derivatives, multiplying by integers, etc. Next, we will formalize the
notion of a generic collection of polynomials. Intuitively, it should
mean that there are no unexpected algebraic dependencies among the
polynomials. First, we need to set up some notation.

\begin{definition}[Factors]. If $X$ is a finite set then by a \emph{factor} $\cB$ we mean simply a
partition of $X$ into finitely many pieces called \emph{atoms}.
\end{definition}

A function $f:X \to \C$ is called \emph{$\cB$-measurable} if it is constant on atoms of $\cB$. For any function $f : X \to \C$, we may define 
the conditional expectation 
$$\E[f|\cB](x)=\E_{y \in \cB(x)}[f(y)],$$ 
where $\cB(x)$ is the unique atom in $\cB$ that contains $x$. Note that $\E[f|\cB]$ is $\cB$-measurable. 

A finite collection of functions $\phi_1,\ldots,\phi_C$ from $X$ to some other space $Y$ naturally define a factor $\cB=\cB_{\phi_1,\ldots,\phi_C}$ whose atoms are sets of the form $\{x: (\phi_1(x),\ldots,\phi_C(x))= (y_1,\ldots,y_C) \}$ for some $(y_1,\ldots,y_C) \in Y^C$. By an abuse of notation 
we also use $\cB$ to denote the map $x \mapsto (\phi_1(x),\ldots,\phi_C(x))$, thus also identifying the atom containing $x$ with 
$(\phi_1(x),\ldots,\phi_C(x))$. 

\begin{definition}[Polynomial factors]\label{factor}
If $P_1, \dots, P_C:\F^n \to \T$ is  a sequence of polynomials, then the factor $\cB_{P_1,\ldots,P_C}$ is called a {\em polynomial factor}.
\end{definition}

The {\em complexity} of $\cB$, denoted $|\cB|$, is the number of defining polynomials
$C$.  The {\em degree} of $\cB$ is the maximum
degree among its defining polynomials $P_1,\ldots,P_C$. If $P_1,\ldots,P_C$ are of depths $k_1,\ldots,k_C$, respectively,
then $\|\cB\|=\prod_{i=1}^C p^{k_i+1}$ is called the \emph{order} of $\cB$.

Notice that the number of atoms of $\cB$ is bounded by $\|\cB\|$.
\ignore{
We will endow polynomial factors with some
additional structure that will be useful for dealing with non-classical
polynomials.
\begin{definition}[Extended Polynomial factor] \label{extfactor}
Given an integer $d > 0$, an {\em extended polynomial factor of degree
  $d$} is a polynomial factor $\cB$ defined by a sequence of
polynomials $P_1, \dots, P_C: \F^n \to \T$ such that if $P \in \cB$
and $P$ is not classical, then $pP \in \cB$ also.
\ignore{a
tuple $\cB$ of polynomials $(P_{i,j,k})_{i \in
  [d], j \in [C_i], k \in [0,M_{i,j}]}$ where each $C_i$ and each
$M_{i,j}$ is a non-negative integer\mnote{Maybe like [TZ],
  we also want all polynomials of degree $>1$?}. Each $P_{i,j,k} :
\F^n \to \T$ is a polynomial of degree $i + k(p-1)$. Additionally, $pP_{i,j,k}
= P_{i,j,k-1}$ for all $i \in [d], j \in [C_i], k \in [M_{i,j}]$.  In
particular, each $P_{i,j,k}$ is a polynomial of depth $k$ and thus
takes values in $\U_{k_i+1}$. Note that
the total number of polynomials in the factor is $|\cB| = \sum_{i \in
  [d]}\sum_{j \in   [C_{i}]} (M_{i,j}+1)$ and that the total number of
atoms is $\|\cB\| = \prod_{i \in  [d]}\prod_{j \in   [C_{i}]} \prod_{k \in
  [0,M_{i,j}]} p^{k+1}$.}
\end{definition}}
The rank of a factor can now be defined as follows.
\begin{definition}[Rank and Regularity]\label{regular}
A polynomial factor $\cB$ defined by a sequence of
polynomials $P_1,\dots,P_C: \F^n \to \T$ with respective depths $k_1,
\dots, k_C$ is said to have {\em   rank
  $r$} if $r$ is the least integer for which there exist
$(\lambda_1,\dots, \lambda_C) \in \Z^C$ so that $(\lambda_1 \mod p^{k_1 + 1}, \dots, \lambda_C \mod
p^{k_C + 1}) \neq (0, \dots, 0)$
and the polynomial $Q = \sum_{i=1}^C \lambda_i P_i$ satisfies
$\msf{rank}_{d}(Q) \leq r$ where $d = \max_{i} \deg(\lambda_iP_i)$.

Given a polynomial factor $\cB$ and a function $r: \Z_{> 0}
\to \Z_{>   0}$, we say $\cB$ is {\em $r$-regular} if $\cB$ is of rank
larger than $ r(|\cB|)$.
\end{definition}

 Note that since $\lambda$ can be a multiple of $p$, rank measured with
respect to $\deg(\lambda P)$ is not the same as rank measured with
respect to
$\deg(P)$. So, for instance, if $\cB$ is the factor defined by a
single polynomial $P$ of degree $d$ and depth $k$, then
$$\msf{rank}(\cB) = \min\set{\msf{rank}_d(P), \msf{rank}_{d-(p-1)}(pP),
\cdots, \msf{rank}_{d-k(p-1)}(p^kP)}.$$

Regular factors indeed do behave like a generic collection of
polynomials, as we shall establish in a precise sense in
\cref{sec:equid}. Thus, given any factor $\cB$ that is not regular, it
will often be useful to {\em regularize} $\cB$, that is, find a refinement
$\cB'$ of $\cB$ that is regular up to our desires. We distinguish
between two kinds of refinements:
\ignore{
Next, we define the notion of conditional expectation with respect to
a given factor.

\begin{definition}[Expectation over polynomial factor]
Given a factor $\cB$ and a function $f:\F^n\to\bits$, the
{\em expectation} of $f$ over an atom $y \in \T^{|\cB|}$ is the average
$\E_{x:\cB(x)=y}[f(x)]$, which we denote by $\E[f|y]$. The {\em
  conditional expectation} of $f$ over $\cB$, is the real-valued
function over $\F^n$ given by $\E[f|\cB](x)=\E[f|\cB(x)]$. In
particular, it is constant on every atom of the polynomial factor.
\end{definition}

We use decomposition theorems that partition the domain $\F^n$
into finer and finer partitions. We will need to be careful about
distinguishing between two types of refinements.}

\begin{definition}[Semantic and syntactic refinements] \label{refine}
$\cB'$ is called a {\em syntactic refinement} of $\cB$, and
denoted $\cB' \succeq_{syn} \cB$, if the sequence of polynomials
defining $\cB'$ extends that of $\cB$. It is called a {\em
  semantic refinement}, and denoted $\cB' \succeq_{sem} \cB$ if the
induced partition is a combinatorial refinement of the partition
induced by $\cB$. In other words, if for every $x,y\in \F^n$,
$\cB'(x)=\cB'(y)$ implies $\cB(x)=\cB(y)$. 
\end{definition}
\begin{remark}
Clearly, being a syntactic refinement is stronger than being a
semantic refinement. But observe that  if $\cB'$ is a semantic
refinement of $\cB$, then there exists a
syntactic refinement $\cB''$ of $\cB$ that induces the same
partition of $\F^n$, and for which $|\cB''|\leq
|\cB'|+|\cB|$, because we can define $\cB''$ by
just adding the defining polynomials of $\cB$ to those of $\cB'$.
\end{remark}

The following lemma is the workhorse that allows us to construct
regular refinements.
\begin{lemma}[Polynomial Regularity Lemma]\label{factorreg}
Let $r: \Z_{>0} \to \Z_{>0}$ be a non-decreasing function and $d > 0$
be an integer. Then, there is a  function
$C_{\ref{factorreg}}^{(r,d)}: \Z_{>0} \to \Z_{>0}$  such
that the following is true. Suppose $\cB$ is a factor defined by
polynomials $P_1,\dots, P_C : \F^n \to \T$  of degree at most $d$.
Then, there is  an $r$-regular factor $\cB'$ consisting of  polynomials
$Q_1, \dots, Q_{C'}: \F^n \to \T$ of degree $\leq d$ such that $\cB'
\succeq_{sem} \cB$ and  $C' \leq  C_{\ref{factorreg}}^{(r,d)}(C)$.

Moreover, if $\cB$ is itself a refinement of some
$\hat{\cB}$ that has rank $> (r(C') + C')$ and consists of polynomials, then additionally
$\cB'$ will be a syntactic refinement of $\hat{\cB}$.
\end{lemma}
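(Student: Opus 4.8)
The plan is to prove this by an iterative refinement procedure, combined with a measure of "progress" that cannot increase forever. First I would set up the atomic step: given any polynomial factor $\cB$ defined by $P_1,\dots,P_C$ of degree $\le d$, either $\cB$ already has rank $> r(C)$, in which case we are done with $\cB' = \cB$, or there is a nontrivial integer combination $Q = \sum_i \lambda_i P_i$ with $\msf{rank}_{d'}(Q) \le r(C)$ where $d' = \max_i \deg(\lambda_i P_i)$. In the latter case, by definition of rank, $Q = \Gamma(R_1,\dots,R_s)$ for some $s \le r(C)$ polynomials $R_j$ of degree $\le d'-1 \le d-1$, and we replace $\cB$ by the syntactic refinement obtained by adjoining $R_1,\dots,R_s$ (and, to keep the factor well-behaved under multiplication by $p$, their iterated multiples $pR_j, p^2R_j,\dots$ down to the classical level — at most $\lfloor (d-1)/(p-1)\rfloor$ extra polynomials each). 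This adds a bounded number of new polynomials, all still of degree $\le d$. Crucially, since the atom of $x$ under the new factor determines the values of all $R_j(x)$, it determines $Q(x) = \sum_i \lambda_i P_i(x)$, so the new factor is a \emph{semantic} (indeed syntactic) refinement in which the $\lambda$-combination that witnessed low rank is now redundant.

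The heart of the argument is to show this process terminates with a controllable complexity bound. The standard device (as in the polynomial regularity lemmas of \cite{GT07, BFL12, TZ11}) is to track the vector of counts of polynomials by degree, ordered appropriately, or more simply to observe that each regularization step either increases the complexity by a bounded amount while the "number of independent low-degree relations that still need killing" behaves like a well-founded quantity. Concretely, I would run the procedure degree by degree from $d$ downward: one shows that after finitely many steps no bad combination $Q$ with $\deg(\lambda_i P_i) = d$ remains (each such step strictly decreases a degree-$d$ potential, e.g.\ the number of defining polynomials of degree $d$ modulo the span of the lower-degree ones, suitably formalized), then one moves to degree $d-1$, and so on. Because at each level the number of steps and the blow-up per step are bounded in terms of $r$, $d$, $p$, and the current complexity, the total final complexity $C'$ is bounded by some iterated function $C_{\ref{factorreg}}^{(r,d)}(C)$ — this is where the (likely tower-type or Ackermann-type) growth enters. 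The subtle point forced by non-classical polynomials is that $\lambda$ may be a multiple of $p$, so "killing" a relation at degree $d'$ can create obligations at lower degrees through the $p$-multiples; this is exactly why we close the factor under multiplication by $p$ at each step, so that these derived lower-degree relations are already accounted for in the degree-by-degree descent.

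For the "Moreover" clause: suppose $\cB$ refines $\hat\cB$, and $\hat\cB$ consists of polynomials and has rank $> r(C') + C'$, where $C'$ is the final complexity of the output. I claim the regularization can be run as a sequence of \emph{syntactic} refinements starting from $\cB$ (adjoining polynomials one batch at a time), hence the output $\cB'$ syntactically refines $\cB$ and therefore $\hat\cB$. The only thing to check is that no step is "wasted" undoing the structure of $\hat\cB$: a low-rank combination $Q = \sum \lambda_i P_i$ involving the defining polynomials of $\hat\cB$ with a nontrivial $\hat\cB$-coefficient would, after projecting away the at most $C'$ polynomials of $\cB'$ not belonging to $\hat\cB$, give a nontrivial combination of $\hat\cB$'s polynomials of rank $\le r(C') + C'$, contradicting the rank bound on $\hat\cB$ (here one uses that adding $C'$ extra polynomials can lower the rank by at most $C'$, and that $r$ is non-decreasing so $r(C) \le r(C')$ throughout). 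Hence every regularization step only adds polynomials and only touches combinations that are "internal" to the part of $\cB$ beyond $\hat\cB$, so $\hat\cB$ survives intact as a syntactic subfactor.

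I expect the main obstacle to be bookkeeping the termination and the complexity bound in the non-classical setting: one must choose the right well-founded ordering on the "state" of the factor (degree profile together with a rank deficiency measure) so that each step strictly decreases it, while simultaneously keeping the per-step blow-up and the number of steps bounded by explicit functions of the current complexity — the interaction between multiplication by $p$ (which lowers degree by $p-1$ and can spawn new lower-degree relations) and the degree-by-degree descent is the delicate part, and it is precisely what distinguishes this lemma from its classical-polynomial predecessors.
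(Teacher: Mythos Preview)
Your approach is the ``direct'' regularization route (find a low-rank combination, adjoin its low-degree witnesses, iterate), which is genuinely different from what the paper does. The paper instead treats Tao--Ziegler's Lemma~9.6 as a black box: that lemma already produces a regular extended factor, but only after passing to a bounded-codimension subspace $V$. The paper then lifts back to all of $\F^n$ by adjoining the coset-derivative polynomials $P^a(x)=P(x+a)-P(x)$ for $a\in\F^n/V$ together with linear forms identifying the cosets; the point is that these added polynomials have degree $\le d-1$. Iterating, each round adds only polynomials of strictly smaller degree than the previous round, so after $d$ rounds the process halts. The rank thresholds $R_1>R_2>\cdots>R_d=r$ are chosen in advance so that the polynomials already regularized at level $i$ are never touched at level $i+1$. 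Your ``Moreover'' argument is essentially the same as the paper's.

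There is, however, a real gap in your atomic step. You adjoin $R_1,\dots,R_s$ as a \emph{syntactic} refinement and declare the bad combination $Q=\sum_i\lambda_iP_i$ ``redundant''. But redundancy in the sense of measurability is irrelevant to the rank of the factor: the same tuple $(\lambda_i)$ is still admissible in the enlarged factor, $Q$ still has $d'$-rank $\le r(C)\le r(C')$, and since the complexity grew the threshold $r(C')$ only got larger. So the new factor is \emph{still} not $r$-regular, for exactly the same reason, and you have made no progress. To make a direct argument work you must actually \emph{remove} a defining polynomial of degree $d'$ (the $P_{i_0}$, or in the non-classical case the appropriate $p^aP_{i_0}$ from the extended factor), which drops you to a merely semantic refinement and lets a lexicographic degree-count potential decrease. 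Even then, the degree-by-degree descent needs pre-chosen rank functions at each level (so that regularizing at degree $d'-1$ does not spawn new violations at degree $d'$ once the complexity grows), exactly the device the paper uses with its $R_1,\dots,R_d$; your sketch gestures at this but does not set it up.
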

\begin{proof}
We can prove our lemma starting from Lemma 9.6 of \cite{TZ11}. To explain,
let us define the notion of an {\em extended} factor.
We say a polynomial factor $\cB$ is {extended} if for any
polynomial $Q \in \cB$ that is not classical, $pQ \in \cB$ also. Note that an extended factor
defined by polynomials $P_1, \dots, P_C$ is of high rank if for all
tuples  $(\lambda_1, \dots, \lambda_C) \in [0, p-1]^C$,  unless all the
$\lambda_i$'s are zero, $\sum_i\lambda_i P_i$ is of high
$(\max_i\deg(\lambda_i P_i))$-rank.
Tao and Ziegler proved the following:
\begin{lemma}[Lemma 9.6 of \cite{TZ11}]\label{tzreg}
Let $r: \Z_{>0} \to \Z_{>0}$ be a non-decreasing function and $d > 0$
be an integer. Then, there are  functions
$\bar{C}^{(r,d)}: \Z_{>0} \to \Z_{>0}$  and $I^{(d)}:\Z_{>0} \to \Z_{>0}$
such that the following is true. Suppose $\cB$ is an extended polynomial factor
defined by polynomials $P_1,\dots, P_C : \F^n \to \T$ of degree $\leq d$.
Then, there is a subspace $V \leq \F^n$  and an $r$-regular extended
factor $\bar{\cB}$ consisting of polynomials $Q_1, \dots, Q_{\bar{C}}:
V \to \T$ such that $2 \leq \deg(Q_i) \leq d$ for each
$i$, $\bar{\cB}$
semantically refines the factor defined by $P_1|_V, \dots, P_C|_V$,
$\bar C \leq  \bar{C}^{(r,d)}(C)$, and $\dim(V) \geq n-I^{(d)}(\bar C)$.
\end{lemma}

Let $\cB_1$ be the extended factor defined by $\set{p^k P_i \mid 0
  \leq k \leq
  \mrm{depth}(P_i), i \in [C]}$.
Apply \cref{tzreg} to $\cB_1$ in order to obtain a bounded index
subspace $V_1$ and an extended $R_1$-regular
factor $\bar{\cB}_1$ defined by polynomials $Q_1, \dots, Q_{\bar C}: V_1
\to \T$, where $R_1$ is a growth function (growing even faster than $r$)
we specify later on in the
proof and $\bar{C} \leq \bar{C}^{(R_1,d)}(|\cB_1|)$.  For $a \in \F^n/V_1$
and $P \in \cB_1$, define $P^a: V_1 \to
\F^n$ to be $P^a(x) = D_aP(x) = P(a+x) - P(x)$. Each $P^a$
is of degree $\leq d-1$.  Also, since $V_1$ is the intersection of $I
\leq I^{(d)}(\bar{C})$ hyperplanes, we can decide which coset in $\F^n/V_1$
an element $x \in \F^n$ belongs to as a function of  $I \leq
I^{(d)}(\bar C)$ (classical) linear functions $\pi_1, \dots, \pi_I$. Let
${\cB}_2$ be the extended factor obtained by adding to $\bar{\cB}_1$ all the
polynomials $\set{P^a \mid P \in \cB_1, a \in \F^n/V_1}$ and $\pi_1, \dots, \pi_I$. Consider $x \in \F^n$ and let $x=a+y$ where $y \in V_1$, and $a \in \F^n/V_1$.  Since $P(x)=P(y)-P^{-a}(x)$, each polynomial in $\cB_1$ is a function of the polynomials in
${\cB}_2$ over all of $\F^n$, and so ${\cB}_2$ is a semantic refinement of
$\cB_1$ (and a syntactic refinement of $\bar{\cB}_1$). Note that
$|\cB_2| \leq \bar{C} + dCI^{(d)}(\bar C) + I^{(d)}(\bar C) < \bar{C}
+ 2d\bar{C}I^{(d)}(\bar{C})$.

Now, suppose we repeat the steps in the previous paragraph with
${\cB}_2$ taking the place of  $\cB_1$ and a different function $R_2$
taking the place of $R_1$. We specify $R_2$ later, but we will choose
it so that it grows faster than $r$. The new application of
\cref{tzreg} to $\cB_2$ produces an extended factor $\bar{\cB}_2$ that is
$R_2$-regular and a bounded index subspace $V_2$ such that the
polynomials in $\cB_2$ restricted to $V_2$ are measurable with respect to $\bar{\cB}_2$. We argue that $\bar{\cB}_2$ differs from ${\cB}_2$
only by polynomials of degree $\leq d-1$. Suppose ${\cB}_2$ is not
$R_2$-regular to start off with. The function $R_1$ is
chosen so that $\bar{\cB}_1$'s rank, $R_1(|\bar{\cB}_1|) > R_2(|\cB_2|) + |\cB_2|$. This
means that if a linear combination of polynomials in $\cB_2$, $\sum_{S
  \in \cB_2} \lambda_SS$, has rank $\leq R_2(|\cB_2|)$ and $d' = \max_{S:
  \lambda_S \neq 0} \deg(S)$, then there must be an $S \not \in
\bar{\cB}_1$ with $\lambda_S \neq 0$ and degree $d'$, since otherwise the rank condition of
$\bar{\cB}_1$ would be violated. Since all the polynomials in $\cB_2$ which are
not in $\bar{\cB}_1$ have degree $\leq d-1$, we conclude that $d' \leq d-1$. Inspecting
the proof of \cref{tzreg} in \cite{TZ11} shows that this means
$\bar{\cB}_2$  consists of the
polynomials of $\bar{\cB}_1$ along with other polynomials of degree $\leq
d-1$.  In the same way as in the previous paragraph, we  obtain an
extended factor $\cB_3 \succeq_{syn}\bar{\cB}_2$, so that $\cB_3$ is a semantic refinement of $\cB_2$ over
all of $\F^n$. Note that since all the polynomials of  $\bar{\cB}_1$  are already in $\bar{\cB}_2$, we  only need to add  $\set{P^a \mid P \in \cB_2\setminus \bar{\cB}_1, a \in \F^n/V_2}$, together with some linear functions. All these polynomials have degree at most $\leq d-2$.

We keep repeating this process to obtain a sequence of extended
factors $\cB_1, \cB_2, \cB_3, \dots$ and $\bar{\cB}_1, \bar{\cB}_2,
\bar{\cB}_3, \dots$. Each $\cB_{i+1}$ semantically refines $\cB_i$ and
syntactically refines $\bar{\cB}_i$. The process stops at step $i$ if
$\cB_i$ becomes $R_i$-regular,  where the sequence of growth
functions $R_i$ satisfies $R_i(m) >
R_{i+1}(m+2dmI^{(d)}(m))+m+2dmI^{(d)}(m)$ and $R_d(m) = r(m)$.
The functions $R_i$ are chosen so that $R_i(|\bar{\cB}_i|) >
R_{i+1}(|\cB_{i+1}|) + |\cB_{i+1}|$, and therefore, by the above argument, $\cB_{i+1}$
differs from $\cB_i$ by polynomials of degree $\leq d-i$. So, we must
stop after obtaining $\cB_d$ in the sequence. Also, since each $R_i$
grows faster than $r$, note that $R_i$-regularity for any $i \in [d]$
implies $r$-regularity. So, it must be
that some $\cB_i$ for $i \leq d$ already becomes  $r$-regular.

Given an extended factor $\cB''$ of rank $>r$, we can get a (standard)
factor $\cB'$ of rank $>r$ by letting $\cB'$ be defined by the
smallest subset of polynomials $S$ such that $\set{p^i P \mid P \in S,
  i \in \Z_{\geq 0}} \supseteq \cB''$. The last statement of the lemma
follows from the same considerations as used above to argue that
$\bar{\cB}_i$ syntactically refines $\cB_i$.
\end{proof}


\section{Equidistribution of Regular Factors}
\label{sec:equid}

In this section, we make precise the intuition that a high-rank
collection of polynomials often behaves like a collection of
independent random variables. The key technical tool is the connection
between the combinatorial notion of rank and the analytic
notion of bias, given in \cref{arank}. A weaker statement, that was
established earlier by Kaufman and Lovett\footnote{Kaufman and Lovett
  proved \cref{rankreg} for classical polynomials. But their proof
  also works for non-classical ones without modification.} and used by
Tao and Ziegler in their proof of \cref{arank}, is the
following.

\begin{theorem}[Theorem 4 of \cite{KL08}]\label{rankreg}
For any $\eps > 0$ and integer $d > 0$, there exists
$r = r_{\ref{rankreg}}(d,\eps)$ such that the following is true.
If $P: \F^n \to \T$ is a degree-$d$ polynomial with rank greater than $r$,
then $|\E_{x}[\expo{P(x)}]| < \eps$.
\end{theorem}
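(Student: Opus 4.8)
The plan is to deduce \cref{rankreg} directly from the inverse-type statement \cref{arank} recorded above, using nothing more than the monotonicity of the Gowers norms. The point is that, for any bounded $g : \F^n \to \C$, one has $\|g\|_{U^1} = |\E_{x}[g(x)]|$ together with the chain $\|g\|_{U^1} \le \|g\|_{U^2} \le \cdots \le \|g\|_{U^d}$, so the entire content of \cref{arank} is already ``visible'' at the level of the $U^1$ (semi-)norm. Concretely, I would set $r_{\ref{rankreg}}(d,\eps) := r_{\ref{arank}}(d,\eps)$.

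Given a degree-$d$ polynomial $P : \F^n \to \T$ with $\msf{rank}_d(P) > r_{\ref{rankreg}}(d,\eps)$, I would argue by contradiction. If $|\E_{x}[\expo{P(x)}]| \ge \eps$, then, since $\|\expo{P}\|_\infty \le 1$ and $\|\expo{P}\|_{U^1} = |\E_{x}[\expo{P(x)}]| \ge \eps$, monotonicity gives $\|\expo{P}\|_{U^d} \ge \|\expo{P}\|_{U^1} \ge \eps$; applying \cref{arank} to $P$ with parameter $\eps$ then forces $\msf{rank}_d(P) \le r_{\ref{arank}}(d,\eps) = r_{\ref{rankreg}}(d,\eps)$, contradicting the hypothesis. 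Hence $|\E_{x}[\expo{P(x)}]| < \eps$, which is exactly the claim. Modulo \cref{arank}, which we are taking off the shelf, there is essentially no obstacle here; this is a three-line argument.

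It is worth noting why \cref{rankreg} is genuinely weaker than \cref{arank}: it only exploits a lower bound on $\|\expo{P}\|_{U^1}$ rather than on $\|\expo{P}\|_{U^d}$, and it was established by Kaufman and Lovett well before the full inverse theorem was available. A self-contained proof would instead induct on $d$: from the identity $|\E_{x}[\expo{P(x)}]|^2 = \E_{h}[\E_{x}[\expo{(D_hP)(x)}]]$ one sees that a $U^1$-bias of at least $\eps$ forces a positive fraction of the directional derivatives $D_h P$ --- each of degree at most $d-1$ --- to have $U^1$-bias at least $\eps^2/2$, and hence, by the inductive hypothesis, bounded $(d-1)$-rank. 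The remaining task is to reassemble these many low-rank derivatives into a bounded-rank representation of $P$ itself --- equivalently, to pass through the top-order multilinear derivative form of $P$ and show that a biased multilinear form has small rank. I expect that final ``integration'' step to be the real difficulty of the self-contained route; since it is precisely what \cref{arank} lets us bypass, I would present the short proof above and cite \cite{KL08} for the alternative.
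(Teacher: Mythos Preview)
Your proposal is correct and matches the paper's proof essentially verbatim: the paper also deduces \cref{rankreg} from \cref{arank} via the inequality $|\E_x f(x)| \le \|f\|_{U^d}$, which is precisely your monotonicity argument. Your additional paragraph sketching the self-contained Kaufman--Lovett route is accurate commentary but not part of the paper's proof.
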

\begin{proof}
Given \cref{arank}, this follows directly from easy fact that $\left|\Ex{f}\right| \le \|f\|_{U^{d}}$ for every $
d \ge 2$, and every $f:\F^n \to \C$.
\end{proof}

Using a standard observation that relates the bias of a function to
its distribution on its range, we can conclude the following.

\begin{lemma}[Size of atoms]\label{atomsize}
Given $\eps > 0$, let $\cB$ be a polynomial factor of
degree $d > 0$,  complexity $C$, and rank
$r_{\ref{rankreg}}(d,\eps)$,   defined by a tuple of
polynomials $P_1, \dots, P_C: \F^n
\to \T$ having respective depths $k_1, \dots, k_C$.
Suppose $b = (b_1, \dots, b_C) \in \U_{k_1+1} \times \cdots \times \U_{k_C+1}$. Then
$$
\Pr_{x}[\cB(x) = b] =  \frac{1}{\|\cB\|} \pm \eps.
$$
In particular, for  $\eps < \frac{1}{\|\cB\|} $, $\cB(x)$ attains every possible value in its range and thus has
$\|\cB\|$ atoms.
\end{lemma}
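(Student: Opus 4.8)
The plan is to reduce the statement about the probability $\Pr_x[\cB(x) = b]$ to a bias computation via Fourier inversion on the finite abelian group into which $\cB$ maps, and then to control each nontrivial Fourier coefficient using \cref{rankreg}. First I would write the indicator of the event $\cB(x) = b$ using characters. Since each $P_i$ takes values in $\U_{k_i+1} = \frac{1}{p^{k_i+1}}\Z/\Z$ (recall we assume all shifts are zero), the tuple $(P_1,\dots,P_C)$ maps $\F^n$ into the finite group $G = \U_{k_1+1}\times\cdots\times\U_{k_C+1}$, whose order is exactly $\|\cB\| = \prod_i p^{k_i+1}$. The characters of $G$ are indexed by tuples $(\lambda_1,\dots,\lambda_C)$ with $\lambda_i \in \Z/p^{k_i+1}\Z$, acting as $(y_1,\dots,y_C)\mapsto \expo{\sum_i \lambda_i' y_i}$ where $\lambda_i'$ is any integer lift. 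Thus
\[
\Ind[\cB(x) = b] = \frac{1}{\|\cB\|}\sum_{(\lambda_1,\dots,\lambda_C)} \expo{\textstyle\sum_{i=1}^C \lambda_i (P_i(x) - b_i)},
\]
and taking expectation over $x$,
\[
\Pr_x[\cB(x) = b] = \frac{1}{\|\cB\|} + \frac{1}{\|\cB\|}\sum_{(\lambda_1,\dots,\lambda_C)\neq (0,\dots,0)} \expo{-\textstyle\sum_i \lambda_i b_i}\;\E_x\Big[\expo{\textstyle\sum_i \lambda_i P_i(x)}\Big].
\]

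The main step is then to bound each term of the nontrivial sum. Fix a tuple $(\lambda_1,\dots,\lambda_C)$ with $(\lambda_1 \bmod p^{k_1+1},\dots,\lambda_C\bmod p^{k_C+1})\neq (0,\dots,0)$, and set $Q = \sum_i \lambda_i P_i$ and $d' = \max_i \deg(\lambda_i P_i) \le d$. By the definition of the rank of a polynomial factor (\cref{regular}), the hypothesis that $\cB$ has rank $r_{\ref{rankreg}}(d,\eps)$ forces $\msf{rank}_{d'}(Q) > r_{\ref{rankreg}}(d,\eps) \ge r_{\ref{rankreg}}(d',\eps)$, the last inequality because $r_{\ref{rankreg}}$ may be taken non-decreasing in its degree argument (or one simply invokes \cref{rankreg} with degree $d$ after noting a high $d'$-rank polynomial of degree $d' \le d$ is in particular a degree-$\le d$ polynomial; the cleanest route is to apply \cref{rankreg} at degree $d'$). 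Hence \cref{rankreg} gives $|\E_x[\expo{Q(x)}]| < \eps$. Summing, the error term is bounded in absolute value by $\frac{1}{\|\cB\|}\cdot(\|\cB\|-1)\cdot\eps < \eps$, which yields $\Pr_x[\cB(x)=b] = \frac{1}{\|\cB\|} \pm \eps$ as claimed. Finally, for $\eps < 1/\|\cB\|$ this probability is strictly positive for every $b$ in the range $G$, so every value is attained and $\cB$ has exactly $\|\cB\|$ atoms.

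The one genuine subtlety — the place I would be most careful — is the bookkeeping of degrees and moduli when passing from the "factor rank" in \cref{regular} to a direct application of \cref{rankreg}: the factor's rank is defined with respect to $d' = \max_i \deg(\lambda_i P_i)$, and because $\lambda_i$ can be a multiple of $p$, multiplying a defining polynomial by $\lambda_i$ can lower its degree, so $d'$ genuinely varies with the tuple. One must check that whatever $d' \le d$ arises, the rank assumption on $\cB$ still delivers a bias bound below $\eps$; this is exactly why the lemma is stated with the single threshold $r_{\ref{rankreg}}(d,\eps)$ and why $r_{\ref{rankreg}}$ being non-decreasing in the degree is the relevant fact. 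Everything else is the routine finite-Fourier-expansion-plus-triangle-inequality argument sketched above. A minor additional point to state explicitly: since we have assumed all polynomial shifts are zero, the images really do lie in the subgroups $\U_{k_i+1}$ rather than in cosets, so $b$ ranges over the group $G$ and the character sum is over $G$'s dual; if one did not make the zero-shift convention, one would simply translate $b$ by the shift vector, changing nothing.
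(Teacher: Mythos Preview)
Your proposal is correct and follows essentially the same approach as the paper: expand the indicator of the event $\cB(x)=b$ via characters of the finite group $\prod_i \U_{k_i+1}$, isolate the trivial term $1/\|\cB\|$, and bound each of the remaining $\|\cB\|-1$ terms by $\eps$ using \cref{rankreg} combined with the factor-rank hypothesis. Your discussion of the $d'$ versus $d$ bookkeeping is slightly more explicit than the paper's (which simply asserts the rank bound for $\sum_i \lambda_i P_i$), but the argument is the same.
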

\begin{proof}
\begin{align*}
\Pr_{x}[\cB(x) = b]
&= \E_{x} \left[ \prod_{i}\frac{1}{p^{k_i+1}} \sum_{
    \lambda_{i}=0}^{p^{k_i+1}-1} \expo{\lambda_{i} (P_{i}(x) - b_{i})}\right]\\
&= \prod_{i}p^{-(k_i+1)} \cdot \sum_{(\lambda_1, \dots, \lambda_C)
\atop \in \prod_{i} [0,p^{k_i + 1}-1]}
\E_{x}\left[\expo{\sum_{i} \lambda_{i}(P_{i}(x) - b_{i})}\right]\\
&= \prod_{i}p^{-(k_i+1)}  \cdot \left(1 \pm \eps \prod_{i}p^{k_i+1}\right)= \frac{1}{\|\cB\|} \pm \eps.
\end{align*}
The first equality uses the fact that $P_{i}(x) - b_{i}$ is
in $\U_{k_i+1}$ and that for any nonzero $x \in \U_{k_i+1}$,
$\sum_{\lambda   = 0}^{p^{k+1}-1} \expo{\lambda  x} = 0$.  The
third equality uses \cref{rankreg} and the fact that unless every
$\lambda_{i} = 0$, the polynomial $\sum_{i} \lambda_{i}(P_{i}(x) -
b_{i})$ has rank at least $r_{\ref{rankreg}}(d,\eps)$.
\end{proof}

For our applications, we need to not only understand the distribution
of $\cB(x) = (P_{i}(x))$ but also, more generally,
$(P_{i}(L_j(x)))$ for a given sequence of linear forms
$L_1,\dots,  L_m : (\F^n)^\ell \to \F^n$.  To this end, we first show the
following dichotomy theorem.

\begin{theorem}[Near orthogonality]\label{dich}
Given $\eps > 0$, suppose
 $\cB = (P_1, \dots, P_C)$ is a polynomial factor of
degree $d > 0$ and rank $> r_{\ref{arank}}(d,\eps)$, $A
=(L_1,\dots,L_m)$ is an affine constraint
on $\ell$ variables, and $\Lambda$ is a tuple of
integers $(\lambda_{i,j})_{i \in [C], j \in [m]}$. Define
$$P_{A,\cB, \Lambda}(x_1,\dots,x_\ell) = \sum_{i \in [C], j \in [m]} \lambda_{i,j}
P_{i}(L_j(x_1,\dots,x_\ell)).$$
Then, one of the two statements below is true.
\begin{itemize}
\item
For every $i \in [C]$, it holds that $\sum_{j \in   [m]}
\lambda_{i,j} Q_{i}(L_j(\cdot)) \equiv 0$ for all
polynomials $Q_i: \F^n \to \T$ with the same degree and depth as
$P_i$. Clearly, $P_{A,\cB, \Lambda} \equiv 0$ in this case.
\item
$P_{A,\cB, \Lambda}$ is non-constant. Moreover, $|\E_{x_1,\dots,x_\ell}[\expo{P_{A,\cB, \Lambda}(x_1, \dots,
  x_\ell)}]| < \eps$.
\end{itemize}
\end{theorem}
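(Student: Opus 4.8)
The plan is to reduce the stated dichotomy to a single clean implication and then establish that implication by feeding the high-rank hypothesis on $\cB$ into \cref{arank}. We may assume $\eps<1$ (when $\eps\ge 1$ the numerical bound in the second alternative is automatic, and what remains follows from \cref{regular} directly). It then suffices to prove: \emph{if} $\abs{\E_{x_1,\dots,x_\ell}[\expo{P_{A,\cB,\Lambda}(x_1,\dots,x_\ell)}]}\ge\eps$, then for every $i\in[C]$ and every polynomial $Q_i:\F^n\to\T$ of the same degree and depth as $P_i$ one has $\sum_{j\in[m]}\lambda_{i,j}Q_i(L_j(\cdot))\equiv 0$. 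Granting this: if the conclusion holds, then specializing to $Q_i=P_i$ gives $P_{A,\cB,\Lambda}\equiv 0$, which is the first alternative; if the conclusion fails, then $\abs{\E[\expo{P_{A,\cB,\Lambda}}]}<\eps<1$, so $P_{A,\cB,\Lambda}$ cannot be a constant (a constant function has bias $1$), which is the second alternative.

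\emph{From bias to low rank.} Assume the bias is at least $\eps$. Composing a polynomial with affine forms does not raise its degree, so $\deg P_{A,\cB,\Lambda}\le d$, and by monotonicity of the Gowers norms $\norm{\expo{P_{A,\cB,\Lambda}}}_{U^d}\ge\abs{\E[\expo{P_{A,\cB,\Lambda}}]}\ge\eps$. \cref{arank} then yields polynomials $S_1,\dots,S_R:(\F^n)^\ell\to\T$ of degree $\le d-1$, with $R\le r_{\ref{arank}}(d,\eps)$, together with a function $\Gamma$ such that $P_{A,\cB,\Lambda}=\Gamma(S_1,\dots,S_R)$.

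\emph{Decoupling via high rank.} By \cref{regular}, the assumption $\msf{rank}(\cB)>r_{\ref{arank}}(d,\eps)\ge R$ says that no nonzero-modulo-depths integer tuple $(\lambda_1,\dots,\lambda_C)$ makes $\msf{rank}_{d^\ast}(\sum_i\lambda_iP_i)\le R$, where $d^\ast=\max_i\deg(\lambda_iP_i)$; in particular each $P_i$ on its own has rank $>R$. Fix $i_0$. Because $P_{i_0}$ has rank far above $R$, it satisfies no ``unexpected'' linear relation under composition with affine forms, so requiring $\sum_j\lambda_{i_0,j}Q_{i_0}(L_j(\cdot))\equiv 0$ for \emph{all} $Q_{i_0}$ of the same degree and depth as $P_{i_0}$ is equivalent to the single relation $\sum_j\lambda_{i_0,j}P_{i_0}(L_j(\cdot))\equiv 0$, and it is this relation that I aim to prove. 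Restrict the identity $P_{A,\cB,\Lambda}=\Gamma(S_1,\dots,S_R)$ to a generic affine subspace of $(\F^n)^\ell$; since $L_1(x)=x_1$ and the $L_j$ are affine, each such restriction exhibits a suitable linear combination of the $P_i$'s (and of their low-order derivatives) as a function of $\le R$ polynomials of degree $<d$, hence as a polynomial of $\msf{rank}_d$ at most $R$. The high-rank hypothesis then forces that combination to have degree $<d$, which is a linear constraint on the $\lambda_{i,j}$. Running over enough subspaces to reach every monomial shape occurring in $P_{i_0}$, and recursing downward on the degree of $\cB$ --- using the Remark after \cref{struct} to track how multiplication by integers changes degree and depth --- one accumulates precisely the constraints that together say $\sum_j\lambda_{i_0,j}P_{i_0}(L_j(\cdot))\equiv 0$.

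\emph{The main obstacle.} The decoupling step is where the real difficulty sits. The non-classical setting causes two complications that must be handled carefully. First, multiplying $P_i$ by an integer can lower both its degree and its depth, so the ``leading part'' of a combination $\sum_i\lambda_iP_i$ is not visible coefficient-wise and one must consistently work with the depth-aware rank of \cref{regular} rather than with $\msf{rank}_d$ at a fixed degree. Second, restricting to non-axis subspaces introduces derivative terms $D_cP_i$ of strictly smaller degree, and these have to be absorbed into $\Gamma$ without pushing the number of defining polynomials past $r_{\ref{arank}}(d,\eps)$ --- which is exactly the place where one exploits that \emph{every} nontrivial integer combination of the $P_i$ has rank comfortably larger than the complexity of the decomposition coming out of \cref{arank}. (One might instead try to bound $\abs{\E_x[\expo{P_{A,\cB,\Lambda}(x)}]}=\abs{\E_x[\prod_j\expo{(\sum_i\lambda_{i,j}P_i)(L_j(x))}]}$ directly via the Counting Lemma \cref{gowerscount}; but that approach does not by itself suffice, since the Cauchy--Schwarz complexity of the forms $L_j$ need not be smaller than $d$.)
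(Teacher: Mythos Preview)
Your reduction to the implication ``bias $\ge\eps$ $\Rightarrow$ first alternative'' is fine, and applying \cref{arank} to $P_{A,\cB,\Lambda}$ on $(\F^n)^\ell$ is a natural first move. The gap is in the decoupling step, and it is a real one.

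First, the sentence ``Because $P_{i_0}$ has rank far above $R$, it satisfies no `unexpected' linear relation under composition with affine forms, so requiring $\sum_j\lambda_{i_0,j}Q_{i_0}(L_j(\cdot))\equiv 0$ for all $Q_{i_0}$ is equivalent to the single relation $\sum_j\lambda_{i_0,j}P_{i_0}(L_j(\cdot))\equiv 0$'' is essentially the content of the theorem in the single-polynomial case; invoking it here is circular. Second, the restriction-to-subspaces idea does not isolate what you need. Restricting to $x_2=\cdots=x_\ell=0$, say, collapses every $L_j$ to $x_1$ and produces $\sum_i(\sum_j\lambda_{i,j})P_i$, which (a) only sees the \emph{sums} $\sum_j\lambda_{i,j}$, not individual coefficients, and (b) is exhibited as a function of $R$ polynomials of degree $<d$, i.e.\ has $\msf{rank}_d\le R$, whereas the hypothesis on $\cB$ controls $\msf{rank}_{d^\ast}$ with $d^\ast=\max_i\deg(\lambda_iP_i)$, which can be strictly smaller. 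You flag this degree mismatch in your obstacle paragraph but do not resolve it; ``recursing downward on the degree'' and ``running over enough subspaces'' are not arguments.

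The paper's proof avoids all of this by never applying \cref{arank} to the $\ell$-variable object. Instead it (i) uses the downward-closedness of $A$ to rewrite coefficients so that whenever $\lambda'_{i,j}\neq 0$ one has $|L_j|\le\deg(\lambda'_{i,j}P_i)$; (ii) selects a $\preceq$-maximal form, say $L_1$, with some nonzero $\lambda'_{i,1}$; (iii) applies $\Delta=|L_1|$ directional derivatives $\bar D_{\bm\alpha,y}$ chosen so that $\langle L_1,\bm\alpha_k\rangle\neq 0$ for all $k$ but each $L_j$ with $j>1$ is annihilated by some $\bm\alpha_k$. This kills every term except those coming from $L_1$ and yields
\[
\Bigl\|\expo{\textstyle\sum_i\lambda'_{i,1}P_i}\Bigr\|_{U^{\Delta}}\;\ge\;\bigl|\E\,\expo{P_{A,\cB,\Lambda}}\bigr|.
\]
Now \cref{arank} is applied to the single-variable combination $\sum_i\lambda'_{i,1}P_i$, and crucially at degree $\Delta$; since $\deg(\lambda'_{i,1}P_i)\ge|L_1|=\Delta$ for every nonzero $\lambda'_{i,1}$, the rank bound lands exactly where the hypothesis $\msf{rank}(\cB)>r_{\ref{arank}}(d,\eps)$ bites, giving the contradiction. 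The derivative scheme and the normalization in step (i) are the missing ideas in your proposal.
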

\begin{proof}
If $\lambda_{i,j} \neq 0$, then $\lambda_{i,j}P_{i}$
can be assumed to be non-constant, since otherwise, we can set
$\lambda_{i,j}$ to $0$.  Let the depths of $P_1, \dots, P_C$ be
$k_1, \dots, k_C$ respectively. For each $j \in [m]$, we let
$(w_{j,1}, \dots, w_{j,\ell})$ denote the vector corresponding to the
affine form $L_j$; recall that $w_{j,1} = 1$. For any affine form
$L_j$, let $|L_j|$, its {\em weight}, denote the sum $\sum_{t=2}^\ell
|w_{j, t}|$.

 For each $i$, perform the following step independently.
If there exists a $j$ such that $|L_j| > \deg(\lambda_{i,j}
P_{i})$ and $\lambda_{i,j} \neq 0$, then use
\cref{eqn:poly} to replace $\lambda_{i,   j}
P_{i}(L_j(\cdot))$ by a linear combination over $\Z$ of
$P_{i}(L_{j'}(\cdot))$ with $L_{j'} \preceq
L_j$, and repeat until no such $j$ exists. Here we use the assumption in part (ii) of
\cref{defaffine} that such $L_{j'} \in A$. At the end of this process, we obtain a
new tuple of coefficients $\Lambda' = (\lambda'_{i,j})$; we can
assume that each $\lambda'_{i,j} \in [0, p^{k_i+1}-1]$ after
quotienting with $p^{k_i+1}\Z$.

 If all the $\lambda'_{i,j}$ are zero, then  for the original
 coefficients $(\lambda_{i,j})$ also,   $\sum_{j \in  [m]}
 \lambda_{i,j} P_{i}(L_j(\cdot))$ is identically zero for every $i$
 individually. Indeed, $\sum_{j \in [m]} \lambda_{i,j}Q_i(L_j(\cdot))$
 is zero for any $Q_i$ with the same degree and depth as  $P_i$
 because the above transformation from $\Lambda$ to $\Lambda'$
 only  depended upon the degree and depth of $P_i$.

Otherwise, $\Lambda'$ does not consist of all zeroes,
 and for every nonzero $\lambda'_{i,j}$, we have $|L_j| \leq
 \deg(\lambda'_{i,j} P_{i})$.  In this case we show that
 $|\E[\expo{P_{A,\cB, \Lambda'}(x_1, \dots,x_\ell)}]| < \eps$.  At a high
 level, our goal is to express the bias of $P_{A,\cB, \Lambda'}$ in
 terms of the Gowers norm of a linear combination of $P_i$'s and then
 use \cref{arank}.

Suppose without loss of generality that the form $L_1$ satisfies:
\begin{itemize}
\item[(i)] $\lambda'_{i,1} \neq 0$ for some $i \in [C]$.
\item[(ii)] $L_1$ is maximal in the sense that for every $j \neq 1$,
  either $\lambda'_{i,j} = 0$ for all $i \in [C]$ or it is the
  case that $|w_{j,t}| < |w_{1,t}|$ for some  $t \in [\ell]$.
\end{itemize}

We want to ``derive'' $P_{A,\cB, \Lambda'}$ until we kill all
$P_i(L_j(\cdot))$ terms for $j > 1$. Given a vector $\bm{\alpha} =
(\alpha_1, \dots, \alpha_\ell) \in \F^\ell$, an element $y \in \F^n$,  and a
function $P : (\F^n)^\ell \to \T$, let us define
$$\bar{D}_{\bm{\alpha},y} P( x_1, \dots, x_\ell) = P(x_1 +
\alpha_1y, \dots, x_\ell + \alpha_\ell y) - P( x_1,
 \dots, x_\ell).
$$
Note that
\begin{align*}
\bar{D}_{\bm{\alpha},y} (P_i\circ L_j)(x_1, \dots, x_\ell)
&= P_i(L_j(x_1,
\dots, x_\ell) + L_j(\bm{\alpha}) y) - P_i(L_j(x_1, \dots,
x_\ell))\\
&= (D_{L_j(\bm{\alpha}) y}P_i)(L_j(x_1,\dots,x_\ell)).
\end{align*}
Thus, if $L_j(\bm{\alpha}) = \langle L_j, \bm{\alpha}\rangle = 0$, then
$\bar{D}_{\bm{\alpha},y} P_i\circ L_j \equiv 0$ for all choices of $y$.

Set $\Delta = |L_1|=\sum_{i=2}^\ell w_{1,i}$, and let $\bm{\alpha}_1,
\dots, \bm{\alpha}_\Delta \in \F^\ell$ be the set of all  vectors of the form $(-w,0,\ldots,0,1,0,\ldots,0)$ where $1$ is in the $i$th coordinate for $i \in [2,\ell]$ and $0 \le w \le |w_{1,i}|-1$ is an integer. Note that that $\langle
L_1,\bm{\alpha}_k \rangle \neq 0$ for all $k \in
[\Delta]$, but for any $j > 1$, by maximality of $L_1$, there exists some $k \in [\Delta]$
such that $\langle L_j,\bm{\alpha}_k \rangle = 0$. Consequently,
\begin{align*}
\bar{D}_{\bm{\alpha}_\Delta, y_\Delta}\cdots
\bar{D}_{\bm{\alpha}_1, y_1} P_{A,\cB,
  \Lambda'}(x_1,\dots, x_\ell)
&= \left(\bar{D}_{\bm{\alpha}_\Delta, y_\Delta}\cdots
\bar{D}_{\bm{\alpha}_1, y_1}\sum_{i = 1}^C \lambda'_{i,1} P_{i}\circ
L_{1}\right)(x_1,\dots, x_\ell) \\
&= (D_{\langle L_1, \bm{\alpha}_\Delta\rangle y_\Delta} \cdots
D_{\langle L_1, \bm{\alpha}_1\rangle y_1} \sum_{i=1}^C
\lambda'_{i,1}P_i)(L_1(x_1,\dots,x_\ell)).
\end{align*}
Therefore
\begin{align}\label{clinch}
\E_{y_1,\dots, y_\Delta, \atop x_1, \dots,
  x_\ell}[\expo{(\bar{D}_{\bm{\alpha}_\Delta, y_\Delta}\cdots
\bar{D}_{\bm{\alpha}_1, y_1} P_{A,\cB,
  \Lambda'})( x_1,\dots, x_\ell)}]
= \left\|\sum_{i=1}^C
\lambda'_{i,1}P_i\right\|_{U^\Delta}^{2^\Delta}.
\end{align}
On the other hand we have the following claim.
\begin{claim}\label{clm:cs}
\begin{align*}
\E_{y_1,\dots, y_\Delta,\atop x_1, \dots,
  x_\ell}[\expo{(\bar{D}_{\bm{\alpha}_\Delta, y_\Delta}\cdots
\bar{D}_{\bm{\alpha}_1, y_1} P_{A,\cB,
  \Lambda'})(x_1,\dots, x_\ell)}]
\geq \left(\left|\E_{x_1,\dots,x_\ell} \expo{P_{A,\cB,
        \Lambda'}(x_1,\dots,x_\ell)} \right|\right)^{2^{\Delta}}.
\end{align*}
\end{claim}
\begin{proof}
It suffices to show that for any function $P(x_1,\dots,x_\ell)$ and
nonzero $\bm{\alpha} \in \F^\ell$,
$$\left|\E_{y, x_1, \dots, x_\ell}[\expo{(\bar{D}_{\bm{\alpha},
    y}P)(x_1,\dots,x_\ell)}]\right| \geq
\left|\E_{x_1, \dots, x_\ell}[\expo{P(x_1,\dots,x_\ell)}]\right|^2.
$$
Recall that $(\bar{D}_{\bm{\alpha},
    y}P)(x_1,\dots,x_\ell) = P(x_1 + \alpha_1 y, \dots, x_\ell + \alpha_\ell y)
  - P(x_1, \dots, x_\ell)$. Without loss of generality, suppose $\alpha_1
  \neq 0$. We make a change of coordinates  so that
  $\bm{\alpha}$ can be assumed to be $(1,0,\dots, 0)$. More precisely,
  define $P': (\F^n)^\ell \to \T$ as
$$P'(x_1, \dots, x_\ell) = P\left(x_1,
  \frac{x_2 + \alpha_2 x_1}{\alpha_1}, \frac{x_3 + \alpha_3 x_1}{\alpha_1}, \dots,
  \frac{x_\ell + \alpha_\ell x_1}{\alpha_1}\right),$$
so that $P(x_1, \dots, x_\ell) = P'(x_1,  \alpha_1 x_2 - \alpha_2 x_1, \alpha_1 x_3 - \alpha_3x_1, \dots,
  \alpha_1 x_\ell - \alpha_\ell x_1)$, and thus $(\bar{D}_{\bm{\alpha},
    y}P)(x_1,\dots,x_\ell) = P'(x_1 + \alpha_1 y, \alpha_1 x_2 - \alpha_2
  x_1, \dots, \alpha_1 x_\ell -\alpha_\ell x_1) - P'(x_1, \alpha_1 x_2 - \alpha_2
  x_1, \dots, \alpha_1 x_\ell -\alpha_\ell x_1)$. Therefore
\begin{align*}
&\left|\E_{y, x_1, \dots, x_\ell}[\expo{(\bar{D}_{\bm{\alpha},
    y}P)(x_1,\dots,x_\ell)}]\right|\\
&= \left|\E_{y, x_1, \dots, x_\ell}[\expo{P'(x_1 + \alpha_1 y, \alpha_1 x_2 - \alpha_2
  x_1, \dots, \alpha_1 x_\ell -\alpha_\ell x_1) - P'(x_1, \alpha_1 x_2 - \alpha_2
  x_1, \dots, \alpha_1 x_\ell -\alpha_\ell x_1)}]\right|\\
&= \left|\E_{y, x_1, \dots, x_\ell}[\expo{P'(x_1 + \alpha_1 y, x_2,
    \dots, x_\ell) -  P'(x_1, x_2, \dots, x_\ell)}]\right|= \E_{x_2,\dots,x_\ell}\left|\E_{x_1}[\expo{P'(x_1, x_2, \dots,
    x_\ell)}]\right|^2\\
&\geq \left|\E_{x_1, x_2, \dots, x_\ell} [\expo{P'(x_1, x_2, \dots,
    x_\ell)}]\right|^2
= \left|\E_{x_1, x_2, \dots, x_\ell} [\expo{P(x_1, x_2, \dots,
    x_\ell)}]\right|^2.
\end{align*}

\end{proof}

Therefore, combining \cref{clinch} with \cref{clm:cs}, we get:
$$
\left\| \expo{\sum_{i \in
  [C]} \lambda'_{i,1} P_{i}(x)} \right\|_{U^{\Delta}}
\geq \left|\E_{x_1,\dots,x_\ell} \expo{P_{A,\cB,
        \Lambda'}(x_1,\dots,x_\ell)} \right|.
$$

Suppose $|\E_{x_1,\dots,x_\ell} \expo{P_{A,\cB,
        \Lambda'}(x_1,\dots,x_\ell)}| \geq \eps$. Then, by the above
    inequality and \cref{arank}, we get that $\sum_{P_{i} \in
  \cB} \lambda'_{i,1} P_{i}(x)$ is a function of $r =
r_{\ref{arank}}(d, \eps)$ polynomials of degree $\Delta - 1$. But
recall that if $\lambda'_{i,1} \neq 0$, then $\deg(\lambda'_{i,1}P_{i}) \geq |L_1| =
\Delta$. Also, there exists a nonzero $\lambda'_{i,1}$. This is a
contradiction to our assumption that the factor $\cB$ is of rank $>
r_{\ref{arank}}(d, \eps)$.
\end{proof}

\begin{remark}\label{rmk:small}
The proof of \cref{dich} also shows the following. Suppose, in the
setting of \cref{dich}, that
for every $P_{i} \in \cB$ and $L_j \in A$, either $|L_j|
\leq \deg(\lambda_{i,j}P_{i})$ or $\lambda_{i,j} = 0$.
Then, unless every $\lambda_{i,j} = 0 \pmod {p^{k_i+1}}$, we have that
$P_{A,\cB,\Lambda}$ is non-constant and
$|\E[\expo{P_{A,\cB,\Lambda}(x_1, \dots, x_\ell)}]| < \eps$. The only
modification needed to the above proof is that the transformation from $\Lambda$ to
$\Lambda'$ can be omitted.
\end{remark}

To show equidistribution of $(P_i(L_j(x_1, \dots, x_\ell))$, we can use
\cref{dich} in the same manner we used \cref{rankreg} to show the
equidistribution of $(P_i(x))$ in \cref{atomsize}. Before we do so,
however, let us give a name to those $\Lambda$ for which the first
case of \cref{dich} holds.

\begin{definition}\label{dependency}
Given an affine constraint $A = (L_1, \dots, L_m)$ on $\ell$ variables
and integers $d, k > 0$ such that $d > k(p-1)$, the {\em
  $(d,k)$-dependency set of $A$} is the set of tuples $(\lambda_1,
\dots, \lambda_m) \in [0, p^{k+1}-1]$ such that $\sum_{i = 1}^m \lambda_i
P(L_i(x_1, \dots, x_\ell)) \equiv 0$ for every polynomial $P: \F^n \to
\T$ of degree $d$ and depth $k$.
\end{definition}

\cref{dich} says that if $\cB$ is a regular factor, $P_{A,
  \cB, \Lambda} \equiv 0$ exactly when the first condition holds. In other words:
\begin{corollary}\label{zeros}
Fix an integer $C > 0$, tuples $(d_1, \dots, d_C) \in \Z_{>
  0}^C$ and $(k_1, \dots, k_C) \in \Z_{\geq 0}^C$, and an affine
constraint $(L_1, \dots, L_m)$ on $\ell$ variables. For $i \in [C]$,
let $\Lambda_i$ be the $(d_i,k_i)$-dependency set of $A$.

Then, for any polynomial factor $\cB = (P_1, \dots, P_C)$, where each
$P_i$ has degree $d_i$ and depth $k_i$, and  $\cB$ has rank $>
r_{\ref{arank}}\left(\max_i d_i, \frac{1}{2}\right)$, it is the case that
a tuple $(\lambda_{i,j})_{i \in [C], j \in [m]}$ satisfies
$$\sum_{i=1}^C \sum_{j = 1}^m \lambda_{i,j} P_i(L_j(x_1, \dots, x_\ell)) \equiv
0$$ if and only if for every $i \in [C]$, $(\lambda_{i,1} \pmod {p^{k_i
  + 1}}, \dots, \lambda_{i, m}\pmod {p^{k_i+1}}) \in \Lambda_i$.
\end{corollary}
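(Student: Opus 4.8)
The plan is to read this off directly from \cref{dich} applied with $\eps=\frac12$, after matching the first alternative of \cref{dich} with the definition of the dependency sets. The rank hypothesis is exactly the one needed: the factor $\cB=(P_1,\dots,P_C)$ has degree $\max_i d_i$, so the assumption that $\cB$ has rank $>r_{\ref{arank}}(\max_i d_i,\frac12)$ is precisely the hypothesis of \cref{dich} for this $\eps$.

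For the easy direction I would first note that $\lambda_{i,j}P_i$ depends only on $\lambda_{i,j}$ modulo $p^{k_i+1}$: by \cref{struct}~(iv) and the standing convention that shifts vanish, $P_i$ takes values in $\U_{k_i+1}$, so $p^{k_i+1}P_i\equiv 0$ as a function $\F^n\to\T$. Thus if every reduced tuple $(\lambda_{i,1}\pmod{p^{k_i+1}},\dots,\lambda_{i,m}\pmod{p^{k_i+1}})$ lies in $\Lambda_i$, then applying \cref{dependency} to the polynomial $P_i$ itself (which has degree $d_i$ and depth $k_i$) gives $\sum_{j=1}^m\lambda_{i,j}P_i(L_j(\cdot))\equiv 0$ for each fixed $i$; summing over $i\in[C]$ yields $\sum_{i,j}\lambda_{i,j}P_i(L_j(\cdot))\equiv 0$.

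For the converse I would assume $P_{A,\cB,\Lambda}\equiv 0$ and invoke \cref{dich}. Since $P_{A,\cB,\Lambda}\equiv 0$, we have $|\E_{x_1,\dots,x_\ell}[\expo{P_{A,\cB,\Lambda}(x_1,\dots,x_\ell)}]|=1\geq\frac12$, which is incompatible with the second alternative of \cref{dich}. Hence the first alternative must hold: for every $i\in[C]$, $\sum_{j=1}^m\lambda_{i,j}Q_i(L_j(\cdot))\equiv 0$ for every polynomial $Q_i:\F^n\to\T$ with the same degree $d_i$ and depth $k_i$ as $P_i$. Since such $Q_i$ again take values in $\U_{k_i+1}$, reducing the $\lambda_{i,j}$ modulo $p^{k_i+1}$ changes nothing, so this is exactly the assertion that $(\lambda_{i,1}\pmod{p^{k_i+1}},\dots,\lambda_{i,m}\pmod{p^{k_i+1}})\in\Lambda_i$, which is what we want.

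The only point requiring care — and the one place the argument could go subtly wrong — is checking that the first alternative of \cref{dich} and the defining condition of the $(d_i,k_i)$-dependency set (\cref{dependency}) literally coincide; once that identification is made there is essentially no work left, since all the analytic content (vanishing of the combination forces bias $1$, and bias $\geq\eps$ forces bounded rank via \cref{arank}) has already been packaged into \cref{dich}. One minor technicality: \cref{dependency} is stated for depth $k>0$, so for the classical polynomials among the $P_i$ (those with $k_i=0$) one reads the $(d_i,0)$-dependency set in the obvious way, with $\U_1=\iota(\F)$ playing the role of $\U_{k_i+1}$, and the same argument applies without change.
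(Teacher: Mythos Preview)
Your proposal is correct and follows essentially the same approach as the paper: the ``if'' direction is immediate, and the ``only if'' direction is a direct application of \cref{dich} with $\eps=\frac12$, observing that $P_{A,\cB,\Lambda}\equiv 0$ forces the first alternative, which is then identified with membership in the dependency sets $\Lambda_i$. You are more explicit about the reduction modulo $p^{k_i+1}$ and the $k_i=0$ edge case (indeed the paper's \cref{dependency} should read $k\ge 0$), but the argument is the same.
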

\begin{proof}
The ``if'' direction is obvious. For the ``only if'' direction, we use
Theorem \ref{dich} to conclude that if $\sum_{i, j}
\lambda_{i,j} P_i(L_j(\cdot)) \equiv 0$, it must be
that for every $i \in [C]$, $\sum_j \lambda_{i,j}
Q_i(L_j(\cdot)) \equiv 0$ for any polynomial $Q_i$ with degree
$d_i$ and depth $k_i$. This is equivalent to saying $(\lambda_{i,1}
\pmod {p^{k_i + 1}}, \dots, \lambda_{i,m}\pmod {p^{k_i + 1}}) \in
\Lambda_i$.
\end{proof}
\begin{remark}
For large characteristic fields, Hatami and Lovett \cite{HL11} showed that the
analog of \cref{zeros} is true even without the rank condition.
\end{remark}

The distribution of $(P_i(L_j(x_1,\dots,x_\ell)))$ is only going to be
supported on atoms which respect the constraints imposed by
dependency sets. This is obvious: if $P$ is a polynomial of degree $d$
and depth $k$, $(\lambda_1,\dots, \lambda_m)$ are in the
$(d,k)$-dependency set of $(L_1,\dots, L_m)$, and
$P(L_j(x_1,\dots,x_\ell)) = b_{j}$, then $\sum_j \lambda_j b_j =
0$. We call atoms which respect this constraint for all $P_i$ in a
factor {\em consistent}. Formally:
\begin{definition}[Consistency]\label{consistent}
Let $A$ be an affine constraint of size $m$.
A sequence of elements $b_1, \dots, b_m \in \T$ are said to be {\em
  $(d,k)$-consistent with $A$} if $b_1, \dots, b_m \in \U_{k+1}$ and
for every tuple $(\lambda_1, \dots, \lambda_m)$ in the
$(d,k)$-dependency set of $A$, it holds that $\sum_{i=1}^m \lambda_i
b_i = 0$.

Given vectors $\mv{d}=(d_1, \dots, d_C)\in \Z_{>0}^C$ and $\mv{k} = (k_1,\dots,k_C) \in \Z_{\geq 0}^C$, a
sequence of vectors $b_1, \dots, b_m \in \T^C$ are said to be {\em $(\mv{d},\mv{k})$-consistent with $A$} if for every $i \in [C]$, the
elements $b_{1,i},\dots, b_{m,i}$ are $(d_i,k_i)$-consistent with $A$.

If $\cB$  is a polynomial factor, the term {\em $\cB$-consistent with $A$} is a synonym for  {\em $(\mv{d},\mv{k})$-consistent with $A$} where  $\mv{d}=(d_1, \dots, d_C)$ and $\mv{k} = (k_1,\dots,k_C)$ are respectively the degree and depth sequences of polynomials defining $\cB$.  
\end{definition}

Now, the proof of equidistribution of $(P_i(L_j(x_1,\dots,x_\ell))$ is
straightforward.
\begin{theorem}\label{affequid}
Given $\eps > 0$, let $\cB$ be a polynomial factor of
degree $d > 0$,  complexity $C$, and rank
$r_{\ref{rankreg}}(d,\eps)$,  that is defined by a tuple of
polynomials $P_1, \dots, P_C: \F^n
\to \T$ having respective degrees $d_1, \dots, d_C$ and respective
depths $k_1, \dots, k_C$.
Let $A=(L_1, \dots, L_m)$ be an affine constraint on $\ell$ variables.

Suppose $b_1, \dots, b_m \in \T^{C}$  are atoms of $\cB$ that are
$\cB$-consistent with $A$. Then
$$
\Pr_{x_1, \dots, x_\ell}[\cB(L_j(x_1, \dots, x_\ell)) = b_{j}~
 \forall j \in [m]] =
\frac{\prod_{i=1}^C  |\Lambda_i|}{\|\cB\|^m} \pm \eps
$$
where $\Lambda_i$ is the $(d_i,k_i)$-dependency set of $A$.
\end{theorem}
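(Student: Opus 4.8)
The plan is to mimic the Fourier-analytic calculation in the proof of \cref{atomsize}, replacing the appeal to \cref{rankreg} with an appeal to \cref{dich}. First I would expand the indicator of the event $\{\cB(L_j(x_1,\dots,x_\ell)) = b_j~\forall j\}$ using additive characters. Writing each coordinate constraint $P_i(L_j(x)) = b_{j,i}$ (with $b_{j,i} \in \U_{k_i+1}$) via $\Ind[P_i(L_j(x)) = b_{j,i}] = p^{-(k_i+1)}\sum_{\lambda_{i,j}=0}^{p^{k_i+1}-1}\expo{\lambda_{i,j}(P_i(L_j(x)) - b_{j,i})}$, we obtain
\begin{align*}
\Pr_{x_1,\dots,x_\ell}[\cB(L_j(x)) = b_j~\forall j]
= \frac{1}{\|\cB\|^m}\sum_{\Lambda}\E_{x_1,\dots,x_\ell}\Brac{\expo{P_{A,\cB,\Lambda}(x_1,\dots,x_\ell) - \sum_{i,j}\lambda_{i,j}b_{j,i}}},
\end{align*}
where $\Lambda = (\lambda_{i,j})_{i\in[C],j\in[m]}$ ranges over $\prod_i [0,p^{k_i+1}-1]^m$ and $\|\cB\|^m = \prod_i p^{m(k_i+1)}$ is exactly the number of such $\Lambda$.

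Next I would partition the sum over $\Lambda$ according to the dichotomy of \cref{dich}. Since $\cB$ has rank $r_{\ref{rankreg}}(d,\eps) > r_{\ref{arank}}(d,\eps)$ (using $\left|\Ex{f}\right| \le \|f\|_{U^d}$, which is how \cref{rankreg} is derived), \cref{dich} applies to every $P_{A,\cB,\Lambda}$. For $\Lambda$ in the first case — equivalently, by \cref{zeros}, exactly those $\Lambda$ with $(\lambda_{i,1}\bmod p^{k_i+1},\dots,\lambda_{i,m}\bmod p^{k_i+1}) \in \Lambda_i$ for every $i\in[C]$ — we have $P_{A,\cB,\Lambda}\equiv 0$, and moreover the consistency hypothesis on $b_1,\dots,b_m$ gives $\sum_j \lambda_{i,j}b_{j,i} = 0$ for each $i$, so the corresponding expectation is exactly $1$. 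The number of such $\Lambda$ is $\prod_{i=1}^C |\Lambda_i|$, since for each $i$ the reductions mod $p^{k_i+1}$ are free to be any element of $\Lambda_i$ and each residue class has one representative in $[0,p^{k_i+1}-1]$. For every other $\Lambda$ (the second case of \cref{dich}), $|\E[\expo{P_{A,\cB,\Lambda}(x)}]| < \eps$, hence each such term contributes at most $\eps$ in absolute value, and there are at most $\|\cB\|^m$ of them.

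Putting these together,
\begin{align*}
\Prob{\cB(L_j(x)) = b_j~\forall j} = \frac{1}{\|\cB\|^m}\Paren{\prod_{i=1}^C |\Lambda_i| \;\pm\; \|\cB\|^m\cdot\eps} = \frac{\prod_{i=1}^C|\Lambda_i|}{\|\cB\|^m} \pm \eps,
\end{align*}
which is the claim. The only genuinely delicate point — and the step I would write most carefully — is the bookkeeping that the $\Lambda$'s landing in the first case of \cref{dich} are counted by $\prod_i |\Lambda_i|$ and that, for precisely those $\Lambda$, the phase $\expo{-\sum_{i,j}\lambda_{i,j}b_{j,i}}$ is trivial; this is where the $\cB$-consistency hypothesis on the atoms $b_1,\dots,b_m$ is used, and where one must be careful that dependency sets are defined modulo $p^{k_i+1}$ while the $\lambda_{i,j}$ range over a full set of residues. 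Everything else is the same character-orthogonality computation as in \cref{atomsize}, now invoking \cref{dich}/\cref{zeros} in place of \cref{rankreg}.
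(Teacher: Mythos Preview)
Your proposal is correct and follows essentially the same argument as the paper: expand the indicator via characters, split the resulting sum over $\Lambda$ using \cref{dich}/\cref{zeros}, use $\cB$-consistency to trivialize the phase on the $\prod_i|\Lambda_i|$ tuples where $P_{A,\cB,\Lambda}\equiv 0$, and bound each remaining term by $\eps$. The paper's proof is exactly this computation, with the same bookkeeping you highlight about counting the dependency tuples and handling the phase.
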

\begin{proof}
The proof is similar to that of \cref{atomsize}.
\begin{align*}
&\Pr_{x_1, \dots, x_\ell}[P_i(L_j(x_1, \dots, x_\ell)) = b_{i,j}~
\forall i \in [C], \forall j \in [m]]\\
&= \E_{x_1, \dots, x_\ell} \left[ \prod_{i, j}\frac{1}{p^{k_i+1}} \sum_{
    \lambda_{i, j}=0}^{p^{k_i+1}-1} \expo{\lambda_{i, j}
    (P_{i}(L_j(x_1, \dots, x_\ell)) - b_{i,j})}\right]\\
&= \left(\prod_{i}p^{-(k_i+1)}\right)^m \sum_{(\lambda_{i,j})
\atop \in \prod_{i, j} [0,p^{k_i + 1}-1]}
\expo{-\sum_{i,j} \lambda_{i,j}b_{i,j}}
\E \left[\expo{\sum_{i,j} \lambda_{i,j}P_{i}(L_j(x_1,
    \dots, x_\ell)}\right]\\
&= p^{-m \sum_{i=1}^C(k_i+1)} \cdot \left(\prod_{i=1}^C |\Lambda_i|~ \pm~
  \eps p^{m \sum_{i=1}^C(k_i+1)}\right),
\end{align*}
where the last line follows because by \cref{zeros},
$\sum_{i,j}\lambda_{i,j} P_i(L_j(\cdot))$ is identically zero
for $\prod_i |\Lambda_i|$ many tuples $(\lambda_{i,j})$ and, in
that case, $\sum_{i,j}\lambda_{i,j} b_{i,j} = 0$ because
of the consistency requirement. For any other tuple
$(\lambda_{i,j})$, the expectation in the third line is bounded by
$\eps$ in absolute value.
\end{proof}

\section{Degree-structural Properties}\label{sec:degstruct}

In this section, we prove \cref{thm:degstruct} in
the introduction stating that if $\mcl{P}$ is degree-structural
(recall \cref{def:weakdegstruct}), then $\mcl{P}$ is locally characterized. The proof uses many
of the tools established in \cref{sec:equid}.\\

\noindent \textbf{Theorem \ref{thm:degstruct} (restated).}
{\em Every degree-structural property with bounded scope and max-degree is
a locally characterized affine-invariant property.}
\begin{proof}
Let $\mcl{P}$ be a degree-structural property with scope $\sigma$ and
max-degree $\Delta$. Denote
by $S$ the  set of tuples $(c,\mv{d},\Gamma)$ such that $c \leq \sigma$ and
$\mcl{P}$ is the union over all $(c,\mv{d},\Gamma) \in S$ of
$(c,\mv{d},\Gamma)$-structured functions.
It is clear that $\mcl{P}$ is affine-invariant, as having  degree bounded by a constant is
an affine-invariant property. It is also immediate that $\mcl{P}$ is
closed under taking restrictions to subspaces, since if $F$ is
$(c,\mv{d},\Gamma)$-structured, then $F$ restricted to any hyperplane
is also $(c,\mv{d},\Gamma)$-structured. The non-trivial part of the
theorem is to show that the locality is bounded. In other words we need to show that there is a constant $K$ such that for $n \geq
K$, if $F: \F^n \to \T$ is a function with  $F|_A \in \mcl{P}$ for every hyperplane $A \leq
\F^n$, then $F \in \mcl{P}$.

First, let us bound the degree of $F$. We know that  $F|_A \in \cP$ for every
hyperplane $A$. Therefore, $\deg(F|_A) \leq p\sigma\Delta$ for every
$A$, as $F|_A$ is a function of at most $\sigma$ polynomials each of degree at most $\Delta$ over a
field of characteristic $p$. It follows that $F$ itself is of degree
$\leq  p\sigma\Delta$.

Let $r: \Z_{>0} \to \Z_{>0}$ be a function to be fixed later.  Define $r_2:
\Z_{>0} \to \Z_{>0}$ so that $r_2(m) > r(C_{\ref{factorreg}}^{(r,
  p\sigma\Delta)}(m+\sigma)) + C_{\ref{factorreg}}^{(r,
  p\sigma\Delta)}(m+\sigma) + p$.

We apply \cref{factorreg}to $\set{F}$ to find an $r_2$-regular
polynomial factor $\cB$ of degree $\leq p\sigma\Delta$, defined by
polynomials $R_1, \dots, R_C: \F^n \to \T$, where $C \leq
C_{\ref{factorreg}}^{r_2,d}(1)$. Since $F$ is measurable with
respect to $\cB$,  there exists a function $\Sigma: \T^C\to \F$,
such that $F(x) = \Sigma(R_1(x),\dots,R_C(x))$.

From each $R_i$ pick a monomial with degree equal to $\deg(R_i)$ and a monomial (possibly the same one) with depth equal to $\mrm{depth}(R_i)$.
By taking $K$ to be sufficiently large, we can gaurantee the existence of an $i_0 \in [n]$ such that $x_{i_0}$ is  not involved in any of these monomials. Consequently $\deg(R_i') = \deg(R_i)$ and $\mrm{depth}(R_i') = \mrm{depth}(R_i)$ for all $i \in [C]$, where $R_1', \dots, R_C'$ are  the restrictions of $R_1,\dots,R_C$, respectively,  to the hyperplane $\{x_{i_0} = 0\}$. Also by \cref{rankrestrict}, $R_1',\dots,R_C'$ have rank $> r_2(C)-p$.
Since $F|_{x_{i_0} = 0} \in \mcl{P}$, by definition of $\mcl{P}$, there must exist $(c,\mv{d},\Gamma) \in S$ with $c \leq \sigma$
such that
$$\Sigma(R_1',\dots,R_C') = \Gamma(P_1,\dots,P_{c}),$$
where $\deg(P_i) \leq d_i$ for all $i \in [c]$.

Now, apply \cref{factorreg} to find an $r$-regular refinement of
the factor defined by the tuple of polynomials $(R_1', \dots,
  R_C', P_1, \dots, P_c)$. Because of our choice of $r_2$ and the last
  part of \cref{factorreg}, we obtain a syntactic refinement of
  $\set{R_1', \dots,  R_C'}$. That is, we obtain
a tuple $\cB'$ of polynomials $R_1',\dots,R_C', S_1,\dots,
  S_D: \F^n \to \T$ such that it has degree $\leq p\sigma\Delta$, its rank $> r(C+D)$, and $C+D\leq
C_{\ref{factorreg}}^{(r)}(C+\sigma)$, and for each $i \in [c]$, $P_i =
\Gamma_i(R_1',\dots,R_C',S_1,\dots,S_D)$ for some function $\Gamma_i :
\T^{C+D} \to \T$.  So for all $x \in \F^n$,
\begin{align*}
&\Sigma(R_1'(x),\dots,R_C'(x)) =\\
&\quad \quad \Gamma(\Gamma_1(R_1'(x),\dots,R_C'(x),S_1(x),\dots,S_D(x)), \dots,
\Gamma_c(R_1'(x),\dots,R_C'(x),S_1(x),\dots,S_D(x))).
\end{align*}
Applying \cref{atomsize}, we see that if the rank of $\cB'$ is $>
r_{\ref{atomsize}}\left(p\sigma\Delta, \varepsilon\right)$ where $\varepsilon>0$ is sufficiently small (say $\varepsilon=\|\cB'\|/2$), then
$(R_1'(x), \dots, R_C'(x),$ $S_1(x), \dots, S_D(x))$ acquires every value
in its range. Thus, we have the identity
$$\Sigma(a_1, \dots, a_c) = \Gamma(\Gamma_1(a_1, \dots, a_C, b_1,
\dots, b_D), \dots, \Gamma_c(a_1, \dots, a_C, b_1, \dots, b_D)),$$
for every $a_i \in \U_{\mrm{depth}(R_i') + 1}$ and $b_i \in
\U_{\mrm{depth}(S_i) + 1}$. Thus, we can substitute $R_i$ for $R_i'$
and   $0$ for $S_i$ in the above equation and still retain
  the identity
\begin{align*}
F(x) &= \Sigma(R_1(x),\dots,R_C(x))\\
&=\Gamma(\Gamma_1(R_1(x),\dots,R_C(x),0,\dots,0), \dots,\Gamma_c(R_1(x),\dots,R_C(x),0,\dots,0))\\
&= \Gamma(Q_1(x), \dots, Q_c(x))
\end{align*}
where $Q_i : \F^n \to \T$ are defined as $Q_i(x) =
\Gamma_i(R_1(x),\dots,R_C(x),0,\dots,0)$.
Since for every $i$, $\deg(R_i) = \deg(R_i')$ and $\mrm{depth}(R_i)
= \mrm{depth}(R_i')$, we can apply \cref{comp} below to conclude that
$\deg(Q_i) \leq \deg(P_i) \leq d_i$ for every $i \in [c]$, as long as
the rank of $\cB'$ is $ > r_{\ref{comp}}(p\sigma\Delta)$.  Finally, we
show that $Q_1, \dots, Q_c$ map to $\U_1 = \iota(F)$ and, so, are
classical. Indeed, since $P_1, \dots, P_c$ are classical,
$\Gamma_1, \dots, \Gamma_c$ must map to $\iota(F)$ on all of
$\prod_{i=1}^C \U_{\mrm{depth}(R_i') + 1} \times \prod_{i=1}^D
\U_{\mrm{depth}(S_i) + 1} \supseteq \prod_{i=1}^C \U_{\mrm{depth}(R_i)
  + 1} \times \set{0}^D$.
Hence, $F \in \cP$.
\end{proof}

The following theorem, used in the proof above, shows that a function
of a high rank collection of polynomials has the degree one would
expect. Thus, it displays yet another way in which high-rank
polynomials behave ``generically''. The proof is via another
application of the near-orthogonality result in \cref{dich}.
\begin{theorem}\label{comp}
For an integer $d > 0$, let $P_1, \dots, P_C: \F^n \to \T$ be
polynomials of degree $\leq d$ and rank $> r_{\ref{comp}}(d)$,
and let $\Gamma: \T^C \to \T$ be an arbitrary function. Define the
polynomial $F:
\F^n \to \T$ by $F(x) = \Gamma(P_1(x), \dots, P_C(x))$. Then, for every
collection of polynomials $Q_1, \dots Q_C: \F^n \to \T$ with
$\deg(Q_i) \leq \deg(P_i)$ and $\mrm{depth}(Q_i) \leq
\mrm{depth}(P_i)$ for all $i \in [C]$, if $G: \F^n \to \T$ is the
polynomial $G(x) = \Gamma(Q_1(x), \dots, Q_C(x))$, it holds that
$\deg(G) \leq \deg(F)$.
\end{theorem}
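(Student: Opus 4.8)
\textit{Proof idea.}
The plan is to transfer information between $F$ and $G$ through the equidistribution of the regular factor $\cB:=\cB_{P_1,\dots,P_C}$. Write $k_i:=\mathrm{depth}(P_i)$ and $\cU:=\prod_{i=1}^C\U_{k_i+1}$. Since the image of $(P_1,\dots,P_C)$, and (because $\mathrm{depth}(Q_i)\le k_i$) also that of $(Q_1,\dots,Q_C)$, lies in $\cU$, only $\Gamma|_{\cU}$ matters. The starting point is the telescoping identity
\[ D_{h_1}\cdots D_{h_m}F(x)=\sum_{\omega\in\bits^m}(-1)^{|\omega|}\,\Gamma\!\Big(\big(P_i\bigl(x+\textstyle\sum_j\omega_j h_j\bigr)\big)_{i\in[C]}\Big), \]
valid for every $m\ge1$ and $x,h_1,\dots,h_m\in\F^n$, together with its analogue for $G$ (with $P_i$ replaced by $Q_i$). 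Thus $D_{h_1}\cdots D_{h_m}F$ depends only on the ``cube of values'' $\big(P_i(x+\sum_j\omega_j h_j)\big)_{i,\omega}\in\cU^{2^m}$. Let $A_m=(L_\omega)_{\omega\in\bits^m}$ be the affine constraint on $m+1$ variables whose forms are the parallelepiped forms $L_\omega(x,h_1,\dots,h_m)=x+\sum_j\omega_j h_j$; this is downward-closed under $\preceq$ (since $\bits^m$ is downward-closed in the coordinatewise order), so it is a legitimate affine constraint.

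Set $e:=\deg F$ and suppose, for contradiction, that $\deg G\ge e+1$. Then there are $x,h_1,\dots,h_{e+1}$ for which $D_{h_1}\cdots D_{h_{e+1}}G(x)\ne0$; let $b=(b_{i,\omega})$ with $b_{i,\omega}:=Q_i(x+\sum_j\omega_j h_j)$ be the corresponding $(e{+}1)$-cube of values, so that $\sum_\omega(-1)^{|\omega|}\Gamma\big((b_{i,\omega})_i\big)\ne0$. I claim $b$ is $\cB$-consistent with $A_{e+1}$ in the sense of \cref{consistent}. First, $b$ is $\cB_{Q_1,\dots,Q_C}$-consistent, because for each $i$ every tuple in the $(\deg Q_i,\mathrm{depth}(Q_i))$-dependency set of $A_{e+1}$ annihilates the sequence $\big(Q_i(x+\sum_j\omega_j h_j)\big)_\omega$ identically in $x$ and the $h_j$, and $b_{i,\omega}\in\U_{\mathrm{depth}(Q_i)+1}$. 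Second, $\cB_{Q}$-consistency implies $\cB$-consistency here: each $b_{i,\omega}\in\U_{\mathrm{depth}(Q_i)+1}\subseteq\U_{k_i+1}$, and, because $\deg Q_i\le\deg P_i$ and $\mathrm{depth}(Q_i)\le k_i$, every tuple $(\lambda_\omega)$ in the $(\deg P_i,k_i)$-dependency set of $A_{e+1}$ reduces modulo $p^{\mathrm{depth}(Q_i)+1}$ to a tuple in the $(\deg Q_i,\mathrm{depth}(Q_i))$-dependency set. This last point holds because any identity $\sum_\omega\lambda_\omega R(L_\omega)\equiv0$ valid for all polynomials $R$ of degree $\deg P_i$ and depth $k_i$ is also valid for every polynomial $R'$ of smaller degree and depth: write $R'=(R'+S)-S$ for a generic polynomial $S$ of degree $\deg P_i$ and depth $k_i$ (so that $R'+S$ again has degree $\deg P_i$ and depth $k_i$) and apply the identity to $R'+S$ and to $S$.

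Now the equidistribution result closes the loop. Since $\cB$ has degree $\le d$ and rank $>r_{\ref{comp}}(d)$ — chosen large enough that the error term in \cref{affequid} for the constraint $A_{e+1}$ is smaller than the main term $\prod_i|\Lambda_i|/\|\cB\|^{2^{e+1}}$ — every $\cB$-consistent $(e{+}1)$-cube is attained; in particular $b$ is, so there are $x',h_1',\dots,h_{e+1}'$ with $P_i(x'+\sum_j\omega_j h_j')=b_{i,\omega}$ for all $i,\omega$. Plugging into the identity above with $m=e+1$ gives $D_{h_1'}\cdots D_{h_{e+1}'}F(x')=\sum_\omega(-1)^{|\omega|}\Gamma\big((b_{i,\omega})_i\big)\ne0$, so $\deg F\ge e+1$, contradicting $\deg F=e$. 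Hence $\deg G\le e=\deg F$.

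The step that needs real care is the quantitative choice of $r_{\ref{comp}}$: the main term $\prod_i|\Lambda_i|/\|\cB\|^{2^{e+1}}$ is exponentially small in $e=\deg F$, so $r_{\ref{comp}}$ must be taken correspondingly large. This is harmless because $F$ (and $G$) is measurable with respect to a polynomial factor of degree $\le d$ with at most $\|\cB\|$ atoms, so $\deg F$ is bounded by a function of $d$ and $C$, and a valid choice of $r_{\ref{comp}}$ exists (formally one allows $r_{\ref{comp}}$ to depend on $C$ as well, which is immaterial in the applications, where $C$ is bounded). A minor point is whether ``degree $d$, depth $k$'' in \cref{dependency} is read as ``exactly'' or ``at most''; the lifting argument above works under either reading.
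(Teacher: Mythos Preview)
Your argument is correct and takes a genuinely different route from the paper's.

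The paper works Fourier-analytically: it expands $\expo{\Gamma}$ into characters and writes the multiplicative $(D{+}1)$-fold derivative of $f=\expo{F}$ as a linear combination $\sum_{\alpha\in\cA}\delta'(\alpha)\,\expo{\sum_{i,J}\alpha_{J,i}P_i(L_J(\cdot))}$, where $\cA$ is the set of coefficient tuples that have been reduced (via \cref{eqn:poly}) so that $|J|\le\deg(\alpha_{J,i}P_i)$ whenever $\alpha_{J,i}\ne0$. It then invokes the near-orthogonality theorem (\cref{dich}, specifically \cref{rmk:small}) to show that these exponentials are linearly independent, which forces $\delta'(\alpha)=[\alpha=0]$; since the reduction from $\delta$ to $\delta'$ used only the degrees and depths of the $P_i$, the same coefficients describe the derivative of $g=\expo{G}$, and one concludes $\deg G\le D$.

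Your argument bypasses the Fourier expansion entirely: you pick a witnessing $(e{+}1)$-cube $b$ for $G$ and use equidistribution (\cref{affequid}) to realize that same cube by the $P_i$, contradicting $\deg F=e$. The two arguments rest on the same foundation (\cref{affequid} is itself proved from \cref{dich}), and the quantitative requirement is essentially identical: both need rank exceeding $r_{\ref{arank}}(d,\epsilon)$ with $\epsilon$ roughly $\|\cB\|^{-2^{\deg F+1}}$. Your consistency-lifting step (that the $(\deg P_i,k_i)$-dependency set also annihilates $Q_i$ because one can write $Q_i=(Q_i+S)-S$ with $S$ generic of the right degree and depth) is clean and handles both the ``exactly'' and ``at most'' readings of \cref{dependency}.

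Your final paragraph correctly flags the one real subtlety: the rank threshold depends on $e=\deg F$, not only on $d$. The paper's proof has the identical dependence (through $|\cA|$), and the paper's notation $r_{\ref{comp}}(d)$ suppresses it in the same way you suggest. In the only application (the proof of \cref{thm:degstruct}) the relevant $F$ has degree bounded by the input parameters, so this is harmless.
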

\begin{proof}
Let $f(x) = \expo{F(x)}$ and $\gamma(x_1, \dots, x_C) =
\expo{\Gamma(x_1, \dots, x_C)}$. Let $D = \deg(F)$. Then, for every
$x, y_1, \dots, y_{D+1} \in \F^n$,
$$\Delta_{y_{D+1}}\cdots \Delta_{y_1} f(x) = 1.$$
We need to show that $g(x) = \expo{G(x)}$ also satisfies $\Delta_{y_{D+1}}\cdots
\Delta_{y_1} g(x) = 1$.

Let $k_1, \dots, k_C$ be the depths of $P_1, \dots, P_C$,
respectively. Then, each $P_i$ takes values in $\U_{k_i+1}$. Let $\Sigma$ denote the group $\Z_{p^{k_1 + 1}}
\times \cdots \times \Z_{p^{k_C + 1}}$. Considering the Fourier transform of $\gamma$, we have
$$
f(x) = \gamma(P_1(x), \dots, P_C(x))= \sum_{\beta \in \Sigma} \hat{\gamma}(\beta) \expo{\sum_{i=1}^C
  \beta_i P_i(x)}.
$$
Next, we look at the derivative.
\begin{align*}
\Delta_{y_{D+1}}\cdots \Delta_{y_1} f(x)
&= \Delta_{y_{D+1}}\cdots \Delta_{y_1}
\left( \sum_{\beta \in \Sigma} \hat{\gamma}(\beta) \expo{\sum_{i=1}^C
  \beta_i P_i(x)}\right)\\
&= \sum_{\alpha_J \in \Sigma: J \subseteq [D+1]} \prod_{J \subseteq
  [D+1]} \hat{\gamma}(\alpha_J) \expo{(-1)^{|J|+1}\sum_{i=1}^C \alpha_{J,i}
  P_i \left(x + \sum_{j \in J} y_j\right)}.
\end{align*}
Denoting $\delta(\alpha)=\prod_{J \subseteq [D+1]} \hat{\gamma}(\alpha_J)$ for
$\alpha=(\alpha_J)_{ J \subseteq [D+1]} \in \Sigma^{\mathcal{P}([D+1])}$, we have
\begin{align}\label{eqn:deriv}
\Delta_{y_{D+1}}\cdots \Delta_{y_1} f(x)
= \sum_{\alpha \in \Sigma^{\mathcal{P}([D+1])}} \delta(\alpha)
\expo{\sum_{i=1}^C \sum_{J \subseteq [D+1]} (-1)^{|J|+1} \alpha_{J,i}
  P_i\left(x + \sum_{j \in J} y_j\right)}.
\end{align}
For any $i$, if there is a $J$ such that $|J| > \deg(\alpha_{J,i} P_i)$,
we can use \cref{eqn:poly} to rewrite $\alpha_{J,i}
  P_i\left(x + \sum_{j \in J} y_j\right)$ as a linear combination
  (over $\Z$) of $\set{ P_i\left(x+\sum_{j \in J'} y_j\right) :
    |J'| < |J|}$. We repeat this process  until for
  every $i$ and $J$, either $\alpha_{J,i} = 0$ or $|J| \leq
  \deg(\alpha_{J,i} P_i)$.  Denoting by $\cA$ the set of $\alpha \in \Sigma^{\cP([D+1])}$ that satisfy this condition, we have obtained a new set of   coefficients  $\delta'(\alpha)$  such that
\begin{align*}
\Delta_{y_{D+1}}\cdots \Delta_{y_1} f(x)
= \sum_{\alpha \in \cA} \delta'(\alpha)
\expo{\sum_{i=1}^C \sum_{J \subseteq [D+1]}\alpha_{J,i}
  P_i\left(x + \sum_{j \in J} y_j\right)}.
\end{align*}
Now, the crucial observation is that if instead of $P_1, \dots,
P_C$, we had $Q_1, \dots, Q_C$, the same decomposition applies.
\begin{align}\label{eqn:forg}
\Delta_{y_{D+1}}\cdots \Delta_{y_1} g(x)
= \sum_{\alpha \in \cA}  \delta'(A)
\expo{\sum_{i=1}^C \sum_{J \subseteq [D+1]}\alpha_{J,i}
  Q_i\left(x + \sum_{j \in J} y_j\right)}.
\end{align}
The reason is that \cref{eqn:deriv} remains valid as is if $f$ is
replaced by $g$ and the $P_i$'s are replaced by $Q_i$'s, and furthermore
since $\deg(P_i) \geq \deg(Q_i)$ and $\mrm{depth}(P_i)
\geq \mrm{depth}(Q_i)$, the applications of \cref{eqn:poly} remain
valid also. Therefore, \cref{eqn:forg} is also valid \footnote{Note
  that in \cref{eqn:forg}, one could have nonzero $\alpha_{J,i}$ and
  $|J| > \deg(\alpha_{J,i} Q_i)$, for  $A=(\alpha_J)_{J \subseteq [D+1]}$ with $\delta'(A)\neq 0$.}.

But now, we argue that $\delta'(\cdot)$ are uniquely determined.  Let $k =
\max_i k_i \leq d/(p-1)$.
\begin{claim}
If $P_1,\dots, P_C$ are of rank $> r_{\ref{arank}}(d,
\nfrac{1}{|\cA|})+1$, the functions
$$
\set{\expo{\sum_{i=1}^C
    \sum_{J \subseteq [D+1]}\alpha_{J,i}   P_i\left(x + \sum_{j \in J}
      y_j\right)} \ :\  \alpha \in \cA}
$$
are linearly independent over $\C$.
\end{claim}
\begin{proof}
Note that all these functions have $L^2$-norm equal to $1$. Hence  it suffices to show that their pairwise inner products are all bounded in absolute value by $1/ |\cA|$. To prove this consider $\alpha, \beta \in \cA$, and note that by \cref{dich} and, in particular, \cref{rmk:small}, unless all the $\alpha_{J,i}-\beta_{J,i}$ are zero, $$\left|\E\left[\expo{\sum_{i=1}^C \sum_{J \subseteq [D+1]}(\alpha_{J,i}-\beta_{J,i})  P_i\left(x + \sum_{j \in J} y_j\right)}\right] \right| < \frac{1}{|\cA|}.$$
\end{proof}
Therefore, since $\Delta_{y_{D+1}}\cdots \Delta_{y_1} f(x) = 1$, we
must have $\delta'(\alpha) = 1$ when $\alpha$ is the all-zero tuple, and $\delta'(\alpha) = 0$ for
every nonzero $\alpha$. Plugging into \cref{eqn:forg}, we get
$\Delta_{y_{D+1}}\cdots \Delta_{y_1} g(x) = 1$.
\end{proof}

\section{Property Testing}\label{sec:testing}

\subsection{Decomposition Theorems}

Decomposition theorems are a major class of theorems in additive and extremal combinatorics. These are statement that tell us that a function $f$ with certain properties can be decomposed as a sum $\sum_{i=1}^k g_i$, where the functions $g_i$ have certain other properties.
We have already seen a decomposition theorem in \cref{arank}: if a
polynomial $P: \F^n \to \T$ of degree $\leq d$ satisfies
$\|\expo{P}\|_{U^d} \geq \eps$, then there exists a factor $\cB$ of
complexity $\leq r_{\ref{arank}}(d,\eps)$ such that $P$ is a function
of the polynomials defining $\cB$. 

In this section, we discuss decomposition theorems of a particular type called approximate structure theorems. These are  results that say that, under appropriate conditions, we can write a function $f$ as $f_1 + f_2$, where $f_1$ is ``structured'' in some sense, and $f_2$ is ``quasirandom''. 
The rough idea is that the structure of $f_1$ is strong enough for us to be 
able to analyze it reasonably explicitly, and the quasirandomness of $f_2$ is strong enough
for many properties of $f_1$ to be unaffected if we ``perturb'' it to $f=f_1 + f_2$. Often, in order to
obtain stronger statements about the structure and the quasirandomness, one allows also a small $L^2$-error: that is, one writes $f$ as $f_1 + f_2 + f_3$ with $f_1$ structured, $f_2$ quasirandom, and $f_3$ small in $L^2$ .

\ignore{
To describe such decompositions, we need to consider conditional expectations over polynomial factors.
Note that a polynomial factor defines a partition of $\F^n$, and thus one can consider the conditional expectation of functions $f:
F^n \to \C$ with respect to this partition:

\begin{definition}[Expectation over polynomial factor]\label{condexp}
Given a factor $\cB$ and a function $f:\F^n\to\bits$, the
{\em expectation} of $f$ over an atom $y \in \T^{|\cB|}$, denoted by $\E[f|y]$, is the average
$f(x)$ over $\{x:\cB(x)=y\}$. The {\em
  conditional expectation} of $f$ over $\cB$, is the real-valued
function over $\F^n$ given by $\E[f|\cB](x)=\E[f|\cB(x)]$. In
particular, it is constant on every atom of the polynomial factor and
hence it is a function of the polynomials defining $\cB$.
\end{definition}}

The Strong Decomposition Theorem below shows that any Boolean function can
be decomposed into the sum of a conditional expectation over a high
rank factor, a function with small Gowers norm, and a function with
small $L^2$-norm.

\begin{theorem}[Strong Decomposition Theorem; Theorem 4.4 of \cite{BFL12}]\label{thm:strongdecomp}
Suppose $\delta > 0$ and $ d \geq 1$ are integers. Let $\eta: \N
\to \R^+$ be an arbitrary non-increasing function and $r: \N \to \N$ be an arbitrary
non-decreasing function. Then there exist $N =
N_{\ref{thm:strongdecomp}}(\delta, \eta, r, d)$ and $C =
C_{\ref{thm:strongdecomp}}(\delta,\eta,r,d)$ such that the following holds.

Given $f: \F^n \to \bits$ where $n > N$, there
exist three functions $f_1, f_2, f_3: \F^n \to
\R$ and a polynomial factor  $\cB$ of
degree at most $d$ and complexity at most $C$ such that the following conditions hold:
\begin{itemize*}
\item[(i)]
$f=f_1+f_2+f_3$.
\item[(ii)]
$f_1 = \E[f|\cB]$.
\item[(iii)]
$\|f_2\|_{U^{d+1}} \leq 1/\eta(|\cB|)$.
\item[(iv)]
$\|f_3\|_2 \leq \delta$.
\item[(v)]
$f_1$ and $f_1 + f_3$ have range $[0,1]$; $f_2$ and $f_3$ have range $[-1,1]$.
\item[(vi)]
$\cB$ is $r$-regular.
\end{itemize*}
\ignore{
Moreover, if $\cB_0$ is a syntactic refinement
of some $\hat{\cB}$ of rank at least $r(C)+C$, then $\cB$ will also
be a syntactic refinement of $\hat{\cB}$ (in particular,  if $\cB_0=\hat{\cB}$).}
\end{theorem}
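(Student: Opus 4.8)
The plan is an energy‑increment argument against the Gowers norm of order $d+1$, interleaved with applications of the Polynomial Regularity Lemma (\cref{factorreg}) so that the factor stays regular at every stage. This already delivers the decomposition with $f_3\equiv 0$, so condition (iv) holds trivially; I comment at the end on why the three‑term form is the natural statement anyway.

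Concretely, build a polynomial factor $\cB$ of degree $\le d$, starting from the trivial factor. If $\|f-\E[f\mid\cB]\|_{U^{d+1}}\le 1/\eta(|\cB|)$, stop. Otherwise put $g:=f-\E[f\mid\cB]$; since $f\in\bits$ and $\E[f\mid\cB]$ has range $[0,1]$ we have $\|g\|_\infty\le 1$, so \cref{inverse} produces a polynomial $P:\F^n\to\T$ of degree $\le d$ with $|\E_x g(x)\expo{-P(x)}|\ge\eps_{\ref{inverse}}(\|g\|_{U^{d+1}},d)$. As $1/\eta$ is non‑decreasing, $\|g\|_{U^{d+1}}>1/\eta(|\cB|)\ge 1/\eta(0)$, so this correlation is at least the fixed constant $\theta_0:=\eps_{\ref{inverse}}(1/\eta(0),d)>0$. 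Adjoin $P$ to the defining polynomials of $\cB$, regularize the result via \cref{factorreg} to obtain an $r$‑regular degree‑$\le d$ factor $\cB'\succeq_{sem}\cB\cup\{P\}$ with $|\cB'|\le C_{\ref{factorreg}}^{(r,d)}(|\cB|+1)$, replace $\cB$ by $\cB'$, and iterate.

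This terminates: conditional expectation is the $L^2$‑projection onto $\cB$‑measurable functions, so the energy $\|\E[f\mid\cB]\|_2^2$ never decreases under refinement, and since $\expo{P}$ is $(\cB\cup\{P\})$‑measurable with $L^2$‑norm $1$, Cauchy–Schwarz gives $\|\E[f\mid\cB\cup\{P\}]\|_2^2-\|\E[f\mid\cB]\|_2^2=\|\E[g\mid\cB\cup\{P\}]\|_2^2\ge |\langle g,\expo{P}\rangle|^2\ge\theta_0^2$. Hence each loop pass raises the energy by $\ge\theta_0^2$, and as energy lies in $[0,1]$ there are at most $\theta_0^{-2}$ passes; the final complexity is then bounded by the $\lceil\theta_0^{-2}\rceil$‑fold iterate of $m\mapsto C_{\ref{factorreg}}^{(r,d)}(m+1)$ at $0$, a finite quantity depending only on $\eta(0),d,r$, and taking $N$ above this bound makes every call to \cref{inverse} and \cref{factorreg} legitimate. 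At the halting stage $\cB$ is $r$‑regular (being the output of \cref{factorreg}, or the trivial factor) and $\|f-\E[f\mid\cB]\|_{U^{d+1}}\le 1/\eta(|\cB|)$. Setting $f_1=\E[f\mid\cB]$, $f_2=f-f_1$, $f_3=0$: conditions (i), (ii), (iv) are immediate, (iii) is the halting condition, (vi) is the regularity of $\cB$, and (v) holds because $f\in\bits$ forces $\E[f\mid\cB]\in[0,1]$, so $f_1$ and $f_1+f_3$ have range $[0,1]$ and $f_2=f-f_1$ is a difference of $[0,1]$‑valued functions.

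The main obstacle is the bookkeeping around regularization, not the increment itself: \cref{factorreg} can inflate complexity enormously (the growth function is of tower type), so one must be sure the loop still halts — it does precisely because the threshold $1/\eta(|\cB|)$ always exceeds the fixed constant $1/\eta(0)$, making every energy increment a fixed amount independent of the current complexity. The three‑term form becomes essential in the companion (``moreover'') setting where one additionally insists that $\cB$ be a \emph{syntactic} refinement of a prescribed factor $\hat\cB$ of very high rank: preserving syntacticity through the regularizations forces a careful cascade of growth functions, and the residual $\E[f\mid(\cdot)]-\E[f\mid\cB]$ cannot always be folded into $f_1$ — one runs the procedure twice, first manufacturing a super‑regular factor and then refining it, so that this residual is $\le\delta$ in $L^2$ and may be taken as $f_3$. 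Engineering that two‑stage cascade so that all the growth functions line up is the genuinely delicate point.
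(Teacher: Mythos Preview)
The paper does not prove this statement; it is quoted verbatim from \cite{BFL12} and used as a black box, so there is no ``paper's own proof'' to compare against.

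Your energy--increment argument is correct for the statement exactly as written. The key observation you make---that since $\eta$ is non-increasing, $1/\eta(|\cB|)\ge 1/\eta(0)$, so every failure to halt gives Gowers norm at least the fixed constant $1/\eta(0)$ and hence a fixed energy gain $\theta_0^2$---is valid, and it is precisely this that lets you take $f_3\equiv 0$. The bookkeeping (energy monotone under refinement, the Cauchy--Schwarz step $\|\E[g\mid\cB\cup\{P\}]\|_2\ge|\langle g,\expo{P}\rangle|$, the complexity bound via iterated applications of \cref{factorreg}) is all in order.

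One comment: the reason a nontrivial $f_3$ is usually unavoidable is not really the ``moreover'' clause about syntactic refinement, but rather a Gowers-norm bound that \emph{tightens} with complexity, as in \cref{thm:superdecomp} where condition~(iii) reads $\|f_2\|_{U^{d+1}}\le\eta(|\cB'|)$ with $\eta$ non-increasing. In that regime the threshold at step $i$ depends on the (unbounded) complexity produced at step $i$, so there is no uniform lower bound on the increment, and one is forced into the two-level construction you sketch: run a first pass to find $\cB$, then a second pass (with threshold governed by $|\cB|$) to find $\cB'$, and absorb $\E[f\mid\cB']-\E[f\mid\cB]$ into $f_3$. As literally stated here with $1/\eta(|\cB|)$, that machinery is not needed, and your shortcut is legitimate.
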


It turns out though that this Strong Decomposition Theorem is
not quite sufficient for the purpose of this paper. The issue is that the bound on $f_3$ above
is a constant $\delta$. Ideally, we would want $\delta$ to
decrease as a function of the complexity of the polynomial
factor, but such a decomposition theorem is simply not true. However, analogous to what it is shown in \cite{AFKS} in the context graphs, here one can find two polynomial factors $\cB' \succeq_{syn} \cB$ such that, the structured part $f_1$ equals to  $\E[f|\cB']$, but now the $L^2$-norm  of $f_3$ can be made arbitrarily small in terms of the complexity of the coarser factor $\cB$. Furthermore for most atoms $c$ of $\cB$, the function $f:\F^n \to \bits$ have roughly the same density on $c$ and most of its subatoms in $\cB'$. To make this precise, we make the following definition.

\begin{definition}[Polynomial factor represents another factor]\label{def:rep}
Given a function $f: \F^n \to \bits$, a polynomial factor $\cB'$
that refines another factor $\cB$ and a real $\zeta \in (0,1)$, we
say {\em $\cB'$ $\zeta$-represents $\cB$ with respect to $f$} if
for at most $\zeta$ fraction of atoms $c$ of $\cB$, more than
$\zeta$ fraction of the atoms $c'$ lying inside $c$ satisfy
$|\E[f|c]-\E[f|c']|>\zeta$.
\end{definition}

We  can now state the following  Super Decomposition Theorem proven in~\cite{BFL12}.

\begin{theorem}[Super Decomposition Theorem; Theorem 4.9 of \cite{BFL12}]\label{thm:superdecomp}
Suppose $\zeta > 0$ is a real and $ d, C_0 \geq 1$ are integers. Let $\eta: \N \to \R^+$ and
$\delta:\N\to\R^+$ be arbitrary non-increasing functions, and
$r:\N\to\N$ be an arbitrary non-decreasing function. Then there exist
$N =  N_{\ref{thm:superdecomp}}(\delta,\eta,r,d,\zeta)$ and $C =
C_{\ref{thm:superdecomp}}(\delta,\eta,r,d,\zeta)$ such that
the following holds.

Given $f: \F^n \to \bits$ where $n > N$, there
exist functions $f_1, f_2, f_3: \F^n \to \R$, and polynomial factors
$\cB' \succeq_{syn} \cB$ of degree  at most $d$ and of complexity at most
$C$, such that the following conditions hold:
\begin{itemize*}
\item[(i)]
$f=f_1+f_2+f_3$.
\item[(ii)]
$f_1 = \E[f|\cB']$.
\item[(iii)]
$\|f_2\|_{U^{d+1}} \leq \eta(|\cB'|)$.
\item[(iv)]
$\|f_3\|_2 \leq \delta(|\cB|)$.
\item[(v)]
$f_1$ and $f_1 + f_3$ have range $[0,1]$; $f_2$ and $f_3$ have
range $[-1,1]$.
\item[(vi)]
$\cB$ and $\cB'$ are both $r$-regular.
\item[(vii)]
$\cB'$ $\zeta$-represents $\cB$ with respect to $f$.
\end{itemize*}
\end{theorem}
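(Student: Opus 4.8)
The plan is to deduce \cref{thm:superdecomp} from the Strong Decomposition Theorem (\cref{thm:strongdecomp}, in the form used in \cite{BFL12} that accepts an initial polynomial factor and returns a syntactic refinement of it) by an energy-increment iteration in the spirit of \cite{AFKS}. The quantity to be incremented is the \emph{energy}
\[
\mathrm{ind}(\cB)\;=\;\E_{x\in\F^n}\Bigl[\bigl(\E[f|\cB](x)\bigr)^2\Bigr]\;=\;\bigl\|\E[f|\cB]\bigr\|_2^2 ,
\]
which lies in $[0,1]$ since $f$ is $\bits$-valued. If $\cB'$ refines $\cB$ then, writing $g=\E[f|\cB]$ and $g'=\E[f|\cB']$, the tower property $\E[g'|\cB]=g$ gives $\langle g'-g,\,g\rangle=0$ and hence the Pythagorean identity $\|g'\|_2^2-\|g\|_2^2=\|g'-g\|_2^2\ge 0$, so the energy is non-decreasing under refinement.

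First I would fix an integer $T=T(\zeta)$ so that $1/T$ is a sufficiently small power of $\zeta$ (one checks $T=\lceil 64\zeta^{-6}\rceil$ works after adjusting constants), and then iterate \cref{thm:strongdecomp} $T$ times. Set $\cB^{(0)}$ to be the trivial factor and $C_0:=1$. For $i=1,\dots,T$, apply \cref{thm:strongdecomp} to $f$ with input factor $\cB^{(i-1)}$, with the regularity function $r$, with the Gowers-norm function arranged so that the resulting $f_2^{(i)}$ satisfies $\|f_2^{(i)}\|_{U^{d+1}}\le\eta(|\cB^{(i)}|)$, and with $L^2$-error \emph{constant} equal to $\delta(C_{i-1})$, where $C_{i-1}$ is the complexity bound already produced in step $i-1$ (so $C_{i-1}\ge|\cB^{(i-1)}|$). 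This yields a polynomial factor $\cB^{(i)}\succeq_{syn}\cB^{(i-1)}$ of degree at most $d$ and of complexity at most a bound $C_i$ that is a fixed function of $C_{i-1}$ and of the (fixed) parameters, together with $f=\E[f|\cB^{(i)}]+f_2^{(i)}+f_3^{(i)}$ in which $\cB^{(i)}$ is $r$-regular and $\|f_3^{(i)}\|_2\le\delta(C_{i-1})\le\delta(|\cB^{(i-1)}|)$ since $\delta$ is non-increasing. The energies $\mathrm{ind}(\cB^{(0)})\le\cdots\le\mathrm{ind}(\cB^{(T)})$ all lie in $[0,1]$, so some $i\in[T]$ satisfies $\mathrm{ind}(\cB^{(i)})-\mathrm{ind}(\cB^{(i-1)})\le 1/T$. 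Output $\cB:=\cB^{(i-1)}$, $\cB':=\cB^{(i)}$, and $(f_1,f_2,f_3):=(\E[f|\cB^{(i)}],\,f_2^{(i)},\,f_3^{(i)})$; take $N$ and $C$ to be the threshold and complexity bound reached after $T$ steps, which are finite functions of $\delta,\eta,r,d,\zeta$. Conditions (i)--(vi) of the theorem are then immediate from the construction.

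It remains to derive the representation property (vii) from the smallness of the energy gap. By the Pythagorean identity applied to $\cB'\succeq_{syn}\cB$ we get $\|\E[f|\cB']-\E[f|\cB]\|_2^2\le 1/T$, i.e.\
\[
\sum_{c}\mu(c)\,\E_{c'\subseteq c}\Bigl[\bigl(\E[f|c']-\E[f|c]\bigr)^2\Bigr]\;\le\;1/T ,
\]
where $\mu$ is the uniform measure, $c$ ranges over atoms of $\cB$, and $c'$ over subatoms of $\cB'$ inside $c$ (weighted by $\mu(c')/\mu(c)$). Two successive applications of Markov's inequality show that, except for a $\mu$-measure-$\tfrac{\zeta}{2}$ fraction of atoms $c$ of $\cB$, the $\mu$-weighted fraction of subatoms $c'\subseteq c$ with $|\E[f|c']-\E[f|c]|>\tfrac{\zeta}{2}$ is at most $\tfrac{\zeta}{2}$; this is where the choice $T=\Theta(\zeta^{-6})$ is used. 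Finally, since $\cB$ and $\cB'$ are $r$-regular, \cref{atomsize} (with error chosen well below $1/\|\cB'\|$, which we may assume by taking the regularity function large enough) shows every atom of $\cB'$ has measure $\tfrac{1}{\|\cB'\|}\pm o(1)$; hence inside each atom of $\cB$ the $\mu$-weighted fractions agree with the unweighted counts up to a negligible error, and absorbing this error while passing from $\tfrac{\zeta}{2}$ to $\zeta$ gives exactly \cref{def:rep}, i.e.\ condition (vii).

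The step I expect to require the most care is the parameter choreography in the iteration. One must verify that, because the regularity and Gowers-norm parameters of \cref{thm:strongdecomp} are \emph{function}-valued, each per-step guarantee refers to the actual complexity $|\cB^{(i)}|$ rather than to the a priori unknown bound $C_i$; consequently the only quantity that must be committed to in advance at step $i$ is the $L^2$-error constant, and it is safe to set it to $\delta(C_{i-1})$ once step $i-1$ has been carried out, which avoids any circular dependence. One then checks that, since $T$ depends on $\zeta$ alone, the iterated bounds $C_T$ and $N_T$ remain finite. Essentially all of the analytic substance --- producing the $f_1,f_2,f_3$ decomposition and the regularization into a high-rank factor --- is already contained in \cref{thm:strongdecomp}, so apart from this bookkeeping the proof is the Pythagoras-plus-Markov calculation above.
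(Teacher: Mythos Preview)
The paper does not itself prove \cref{thm:superdecomp}; it merely quotes the statement from \cite{BFL12}. So there is no in-paper proof to compare against. That said, your energy-increment strategy---iterating \cref{thm:strongdecomp} a bounded number $T=T(\zeta)$ of times, pigeonholing to find two consecutive factors with small energy gap, and deducing the $\zeta$-representation property via Pythagoras and two layers of Markov---is exactly the standard \cite{AFKS}-style argument and is the approach taken in \cite{BFL12}.

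Two points of care you have correctly flagged but should make fully explicit. First, the version of \cref{thm:strongdecomp} stated in this paper does not take an input factor; you are right that the version in \cite{BFL12} does (and guarantees a syntactic refinement of it), and you must invoke that stronger form. Second, to pass from $\mu$-weighted subatom fractions to the unweighted counts in \cref{def:rep}, you need the atoms of $\cB'$ to be equidistributed to error $o(1/\|\cB'\|)$; this means the regularity function you feed into the iteration must be a suitable amplification $r'$ of the given $r$ (large enough both for condition~(vi) and for the \cref{atomsize} application), not $r$ itself. With these two clarifications the argument is complete.
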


Although the above Super Decomposition Theorem may be useful by itself
for other applications, we will need a particular variant. The factor $\cB'$ is a syntactic refinement of $\cB$, and thus is defined by adding new polynomials $Q_1,\ldots,Q_{|\cB'|-|\cB|}$ to the polynomials defining $\cB$. Then for each atom $c$ in the coarser factor $\cB$ we will select one atom $c'$ of $\cB'$ such that the following hold:
\begin{itemize}
\item There is a \emph{fixed}  $s \in \T^{|\cB'|-|\cB|}$ such that for every atom $c$ in $\cB$ its corresponding atom  $c'$ is obtained by requiring  $(Q_1,\ldots,Q_{|\cB'|-|\cB|})$ to be equal to  $s$. 
\item The $L^2$-norm of $f_3$ conditioned inside every such  atom (i.e., $\E_{x \in
c'}[|f_3(x)|^2]$) is small. 
\item Most subatoms $c'$ will ``well-represent'' (in the sense of \cref{def:rep}) their corresponding atoms $c$
  from $\cB$.
\end{itemize}

Before stating this formally, let us also take this opportunity to
remark that it is possible to adapt the proofs of the above decomposition theorems to decompose  several
functions $f^{(1)},\ldots,f^{(R)}:\F^n\to\{0,1\}$ simultaneously. Alternatively, this could
be thought of as decomposing a single vector-valued function
$f:\F^n\to\{0,1\}^R$.  Now we finally state the decomposition theorem that we will use in the proof of our main result. 
 
\begin{theorem}[Subatom Selection; Theorem 4.12 of \cite{BFL12}]\label{thm:subatom2}
Suppose $\zeta > 0$ is a real and $d,R \geq 1$ are integers. Let $\eta, \delta: \N \to \R^+$ be arbitrary
non-increasing functions, and let $r: \N \to \N$ be an arbitrary
non-decreasing function. Then, there exist $C =
C_{\ref{thm:subatom2}}(\delta, \eta,   r, \zeta, R)$ such that the
following holds.

Given $f^{(1)},\dots,f^{(R)}: \F^n \to \bits$, there exist functions $f^{(i)}_1, f^{(i)}_2, f^{(i)}_3 :
\F^n \to \R$ for all $i \in [R]$, a polynomial factor $\cB$ of degree
$d$ with atoms denoted by elements of $\T^{|\cB|}$, a syntactic
refinement $\cB' \succeq_{syn}\cB$ of degree $d$ with complexity at
most $C$ and atoms denoted by elements of $\T^{|\cB|} \times
\T^{|\cB'| - |\cB|}$, and an element $s \in \T^{|\cB'| - |\cB|}$ such
that the following is true:
\begin{itemize*}
\item[(i)]
$f^{(i)} = f_1^{(i)} + f_2^{(i)} + f_3^{(i)}$ for every $i \in [R]$.
\item[(ii)]
$f_1^{(i)} = \E[f^{(i)}| \cB']$ for every $i \in [R]$.
\item[(iii)]
$\|f_2^{(i)}\|_{U^{d+1}} < \eta(|\cB'|)$ for every $i \in [R]$.
\item[(iv)]
For every $i \in [R]$, $f_1^{(i)}$ and $f_1^{(i)} + f_3^{(i)}$ have range
$[0,1]$, and  $f_2^{(i)}$ and $f_3^{(i)}$ have range $[-1,1]$.
\item[(v)]
$\cB$ and $\cB'$ are both $r$-regular.
\item[(vi)]
For every atom $c \in \T^{|\cB|}$ of $\cB$, the subatom $c' = (c, s) \in
\T^{|\cB'|}$ satisfy
$$\E\left[\left. |f_3^{(i)}|^2 ~\right|~ (c,s) \right] < \delta(|\cB|)^2$$
for every $i \in [R]$.
\item[(vii)] If $c$ is an atom of $\cB$ chosen uniformly at random, then 
$$\Pr_{c}\left[\max_{i\in [R]} \left(\left|\E\left[\left. f^{(i)} \right|c\right] -
\E\left[\left. f^{(i)} \right|(c,s) \right]\right|\right)>\zeta\right]<\zeta.$$
\end{itemize*}
\end{theorem}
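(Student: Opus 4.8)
The plan is to obtain \cref{thm:subatom2} from the vector‑valued form of the Super Decomposition Theorem (\cref{thm:superdecomp}) followed by an averaging argument that pins down a single ``good'' suffix $s$ simultaneously for all atoms of the coarse factor and all $R$ functions. First I would fix, as functions of the given data $d,R,\zeta,\eta,\delta$ only (so that the resulting complexity bound is a genuine constant), a much smaller non‑increasing error function $\delta'$, a small constant $\zeta'\le\min\{\zeta,\ \zeta^2/(12R)\}$, and a sufficiently fast‑growing regularity function $r'\ge r$ chosen so that any $r'$‑regular polynomial factor $\cB$ of degree $d$ and complexity $m$ has all atoms of measure $\ge\frac{1}{2\|\cB\|}$ and all subatoms essentially equidistributed inside each atom; this is exactly what \cref{atomsize} provides once $r'(m)$ exceeds the relevant $r_{\ref{rankreg}}(d,\cdot)$ threshold (recall $\|\cB\|\le p^{(\lfloor(d-1)/(p-1)\rfloor+1)m}$). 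Applying \cref{thm:superdecomp} to $(f^{(1)},\dots,f^{(R)})$ with parameters $\delta',\eta,r',d,\zeta'$ produces factors $\cB'\succeq_{syn}\cB$, decompositions $f^{(i)}=g_1^{(i)}+g_2^{(i)}+g_3^{(i)}$ with $g_1^{(i)}=\E[f^{(i)}|\cB']$, $\|g_2^{(i)}\|_{U^{d+1}}\le\eta(|\cB'|)$, $\|g_3^{(i)}\|_2\le\delta'(|\cB|)$, the stated range conditions, both factors $r$‑regular, and $\cB'$ $\zeta'$‑representing $\cB$ with respect to every $f^{(i)}$. Conditions (i)--(v) of \cref{thm:subatom2} are then immediate (taking $\eta/2$, $\zeta'/2$ in \cref{thm:superdecomp} if strict inequalities are wanted), so everything reduces to selecting $s\in\T^{|\cB'|-|\cB|}$ verifying (vi) and (vii).

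For (vi): since $\cB$ is $r'$‑regular, \cref{atomsize} gives $\Pr_x[\cB(x)=c]\ge\frac{1}{2\|\cB\|}$ for \emph{every} atom $c$, hence $\E[|g_3^{(i)}|^2\mid c]\le 2\|\cB\|\cdot\|g_3^{(i)}\|_2^2\le 2\|\cB\|\,\delta'(|\cB|)^2=:\beta$ for every $c$ and every $i$, and $\beta$ is below any prescribed bound by the choice of $\delta'$. The subatoms inside $c$ are essentially equidistributed (again \cref{atomsize}, now applied to $\cB'$), so $\E_s\bigl[\E[|g_3^{(i)}|^2\mid(c,s)]\bigr]\le\beta(1+o(1))$; by Markov and a union bound over $i\in[R]$ and over the at most $\|\cB\|$ atoms $c$, the fraction of $s$ for which $\E[|g_3^{(i)}|^2\mid(c,s)]\ge\delta(|\cB|)^2$ for some $i$ and some $c$ is at most $2R\|\cB\|^2\delta'(|\cB|)^2/\delta(|\cB|)^2<\tfrac14$. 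For (vii): the representation property says that for each $i$ at most a $\zeta'$ fraction of atoms $c$ are ``bad'' (more than a $\zeta'$ fraction of their subatoms deviate by more than $\zeta'$ in $f^{(i)}$‑density), and for a good $c$ at most a $(1+o(1))\zeta'$ fraction of suffixes $s$ make $(c,s)$ deviate by more than $\zeta'\le\zeta$. Double counting and switching the order of the expectations, $\E_s\Pr_c[\exists i:\ |\E[f^{(i)}|c]-\E[f^{(i)}|(c,s)]|>\zeta]\le 3R\zeta'$, so by Markov the fraction of $s$ failing (vii) is at most $3R\zeta'/\zeta\le\tfrac14$. Since the two bad sets of $s$ together cover less than all of $\T^{|\cB'|-|\cB|}$, a suitable $s$ exists; outputting it together with $\cB,\cB'$ and the $g_j^{(i)}$ completes the proof.

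The main obstacle is condition (vi): it requires the conditional $L^2$‑error of $f_3^{(i)}$ on the \emph{selected} subatom to be small for \emph{every} atom $c$ of the coarse factor, not just on average over $c$. The resolution is precisely the equidistribution of regular factors — because all atoms of the $r'$‑regular $\cB$ have essentially equal mass, the single global bound $\|g_3^{(i)}\|_2\le\delta'(|\cB|)$ automatically localizes to a uniform per‑atom bound $\beta$, which we drive below any desired threshold simply by feeding a small enough $\delta'$ into \cref{thm:superdecomp}; after that the union bound over the boundedly many atoms and the $R$ functions is routine. The only other points requiring care are bookkeeping: verifying that $\delta',\zeta',r'$ depend only on the input parameters (so $C_{\ref{thm:subatom2}}$ is obtained as a constant from $C_{\ref{thm:superdecomp}}$), that the vector‑valued version of \cref{thm:superdecomp} is available (as noted in the discussion preceding the theorem), and that strictness in (iii),(vi),(vii) is recovered by halving the corresponding parameters.
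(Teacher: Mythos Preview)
The paper does not actually prove \cref{thm:subatom2}; it is imported verbatim from \cite{BFL12} (Theorem~4.12 there) and used as a black box in \cref{sec:testing}. So there is no in-paper proof to compare against.

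That said, your derivation from \cref{thm:superdecomp} is the natural one and is almost certainly how \cite{BFL12} obtains the result. The key observation you identify---that regularity of $\cB$ forces \emph{every} atom to have mass $\ge \tfrac{1}{2\|\cB\|}$, so the global bound $\|g_3^{(i)}\|_2\le\delta'(|\cB|)$ automatically localizes to a uniform per-atom bound $\E[|g_3^{(i)}|^2\mid c]\le 2\|\cB\|\,\delta'(|\cB|)^2$---is exactly the point, and it is what allows the union bound over all $\|\cB\|$ atoms to go through once $\delta'(m)\ll\delta(m)/(p^{dm}\sqrt{R})$. The averaging over suffixes $s$ (using equidistribution of subatoms from the regularity of $\cB'$), the Markov/union-bound step for~(vi), and the double-counting for~(vii) are all correct. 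One small point worth making explicit in a full write-up: the ``fraction of subatoms'' in \cref{def:rep} coincides with the uniform measure on suffixes $s$ precisely because the regularity of $\cB'$ forces every subatom $(c,s)$ to be nonempty (indeed of nearly equal size), so the switch between the two is legitimate.
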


\subsection{Big Picture Functions}

Suppose we have a function $f: \F^n \to [R]$, and we want to find out
whether it induces a particular affine constraint $(A,\sigma)$, where
$A =(L_1,\dots,L_m)$ is a sequence of affine forms on $\ell$ variables
and $\sigma \in [R]^m$. Now, suppose $\F^n$ is partitioned by a
polynomial factor $\cB$ defined by polynomials $P_1,\dots, P_C$ of
degrees $d_1, \dots, d_C$ and depths $k_1, \dots, k_C$. Then, observe
that  if $b_1, \dots, b_m \in \T^C$ denote the atoms
of $\cB$ containing $L_1(x_1,\dots,x_\ell), \dots,
L_m(x_1,\dots,x_\ell)$ respectively, it must be the case that
$b_1, \dots, b_m$ are $\cB$-consistent with $A$  (as defined in
\cref{consistent}). Thus, to locate where $f$ might induce
$(A,\sigma)$, we should restrict our search to sequences of
atoms consistent with $A$.

It will be convenient to ``blur'' the given function $f$ so as to retain
only atom-level information about it. That is, for every atom $c$ of $\cB$, we will define $f_{\cB}(c)  \subseteq [R]$ to be the set of all values that $f$ takes within $c$. 
\begin{definition}
Given a function $f:\F^n\to [R]$ and a polynomial factor $\cB$,
the {\em big picture function} of $f$ is the function
$f_{\cB}:\T^{|\cB|}\to\cP([R])$, defined by $f_{\cB}(c)=\{f(x):\cB(x)=c\}$. 
\end{definition}

On the other hand, given any function $g:\T^C\to \cP([R])$, and 
a vector of degrees $\mv{d}=(d_1,\ldots,d_C)$ and depths
$\mv{k} = (k_1,\dots, k_C)$ (which we think of as
corresponding to the degrees and depths of some  polynomial
factor of complexity $C$), we will define what it means for such a
function to ``induce'' a copy of a given constraint.

\begin{definition}[Partially induce]\label{def:partial}
Suppose we are given vectors $\mv{d}=(d_1, \dots, d_C)\in \Z_{>0}^C$
and $\mv{k} = (k_1,\dots,k_C) \in \Z_{\geq 0}^C$, a
function
$g: \prod_{i\in [C]} \U_{k_i+1} \to \cP([R])$, and an induced affine constraint
$(A,\sigma)$ of size $m$. We say that {\em $g$ partially
  $(\mv{d},\mv{k})$-induces $(A,\sigma)$} if there exist a sequence $b_1,\ldots,b_m \in \T^C$ that is  $(\mv{d},\mv{k})$-consistent with $A$, and 
$\sigma_j \in g(b_j)$ for each $j \in [m]$.
\end{definition}

\cref{def:partial} is justified by the following trivial observation. 

\begin{remark}\label{rem:pinduce}
If $f:\F^n\to [R]$ induces a constraint $(A,\sigma)$, then for a
factor $\cB$ defined by polynomials of respective degrees
$(d_1,\ldots,d_{|\cB|}) = \mv{d}$  and respective depths $(k_1, \dots,
k_{|\cB|}) = \mv{k}$, the big picture function $f_{\cB}$
partially $(\mv{d},\mv{k})$-induces $(A,\sigma)$.
\end{remark}
\ignore{
\begin{proof}
Let $m$ be the size of $A$ and $\ell$ be its number of variables.
Suppose that $f$ induces $(A,\sigma)$ at $x_1,\dots,x_{\ell}$, and let
$c_1,\dots, c_{m} \in \F^{|\cB|}$ be the images of the $m$ atoms in
$\cB$ defined by $c_1 =
\cB(L_{1}(x_1,\dots,x_{\ell})), c_2 =
\cB(L_{2}(x_1,\dots,x_{\ell})),$ $\dots,$ $c_{m} =
\cB(L_{m}(x_1,\dots,x_{\ell}))$ where $A = (L_{1},\dots,
L_{m})$. Then, by definition of consistency, it must be the case that
$c_1,\dots,c_m$ are consistent with respect to $A$ and $d_1,\dots,d_{|\cB|}$. This fulfills the first condition of Definition \ref{def:partial},
and the second condition is true by the definition of every $f_{\cB}(c_i)$ including all
values that $f$ takes in that atom.
\end{proof}}

To handle a possibly infinite collection $\cA$ of affine constraints, we will employ a
compactness argument, analogous to one used in \cite{AS08a} to bound
the size of the constraint partially induced by the big picture function. Let us  make the
following definition:

\begin{definition}[The compactness function]\label{def:psi}
Suppose we are given positive integers $C$ and $d$, and a possibly infinite
collection of induced affine constraints $\mathcal{A} = \{(A^1,
\sigma^1), (A^2, \sigma^2), \dots\}$, where  $(A^i,\sigma^i)$ is of size $m_i$. For fixed $\mv{d} = (d_1, \dots, d_C) \in [d]^C$ and $\mv{k} = (k_1, \dots, k_C) \in \left[0,\left \lfloor \frac{d-1}{p-1}\right \rfloor
\right]^C$, denote by $\cG(\mv{d},\mv{k})$ the set of functions
$g: \prod_{i=1}^C \U_{k_i+1} \to \cP([R])$ that partially $(\mv{d},\mv{k})$-induce some $(A^i,\sigma^i)\in
\cA$. The compactness function is defined as  
\begin{align*}
\Psi_\cA(C,d) = \max_{\mv{d},\mv{k}} \max_{g \in
  \cG(\mv{d},\mv{k})} \min_{(A^i,\sigma^i) \text{ {\em partially}} \atop
  \text{{\em $(\mv{d},\mv{k})$-induced by }} g} m_i
\end{align*}
where the outer $\max$ is over vectors $\mv{d} = (d_1, \dots, d_C) \in [d]^C$ and $\mv{k}
= (k_1, \dots, k_C) \in \left[0,\left \lfloor \frac{d-1}{p-1}\right \rfloor
\right]^C$. Whenever $\cG(\mv{d},\mv{k})$ is empty, we set the corresponding maximum to $0$.
\end{definition}

Note that $\Psi_\cA(C,d)$ is indeed finite, as  the number of possible
degree and depth sequences are bounded by $d^{2C}$, and the size of
$\cG(d_1,\dots,d_C)$ is bounded by $2^{Rp^{dC}}$.

\begin{remark}\label{rem:induce}
Note that if a function $g: \T^C \to \cP([R])$ partially
$(\mv{d},\mv{k})$-induces some constraint from $\cA$ where $\mv{d} \in [d]^C$, then  $g$ must belong to $\cG(\mv{d},\mv{k})$, and consequently it will necessarily partially  induce some $(A^i,\sigma^i) \in \cA$ whose size is at most $\Psi_\cA(C,d)$.
\end{remark}

\subsection{Proof of Testability}

We prove the main result, \cref{thm:main3}, in this section. In
fact, we will show the following.
\begin{theorem}\label{mainthm2}
Let $d>0$ be an integer. 
Suppose we are given a possibly infinite collection of  affine
constraints $\mathcal{A} = \{(A^1, \sigma^1), (A^2, \sigma^2),\dots\}$ where each
$(A^i,\sigma^i)$ is an affine constraint of complexity $\leq d$, and of size $m_i$ on $\ell_i$
variables. Then, there are functions $\ell_\cA:(0,1) \to \Z_{>0}$ and
$\delta_\cA:(0,1) \to (0,1)$ such that the following is true for any $\eps
\in (0,1)$.  If a function $f: \F^n \to [R]$  is $\eps$-far from being $\cA$-free, then 
$f$ induces at least  $\delta_\cA(\eps) p^{n\ell_i}$ many copies of some $(A^i,\sigma^i)$ with $\ell_i < \ell_\cA(\eps)$ .

Moreover, if $\cA$ is locally characterized, then $\ell_A(\eps)$ is a
constant independent of $\eps$.
\end{theorem}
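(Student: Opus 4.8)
The plan is to derive \cref{mainthm2} from the Subatom Selection decomposition \cref{thm:subatom2} together with the equidistribution theorem \cref{affequid}, via a dichotomy in the spirit of the graph removal lemma. Fix $\eps\in(0,1)$ and $f:\F^n\to[R]$ that is $\eps$-far from $\cA$-free, and set $f^{(i)}=\Ind[f=i]$ for $i\in[R]$. We may assume every variable occurs in every $A^i$ (otherwise pass to the constraint on the variables that do occur, which only multiplies the number of copies by $p^n$), so $\ell_i\le m_i$. Put $\tau=\zeta:=\eps/(16R)$, write $N(c):=p^{(\lfloor(d-1)/(p-1)\rfloor+1)c}$ (an upper bound, by \cref{struct}, on the order of any complexity-$\le c$ degree-$\le d$ factor), and fix a growth function $r:\N\to\N$ and non-increasing $\eta,\delta:\N\to\R^+$ — chosen at the very end so that (a) every $r$-regular degree-$\le d$ factor of complexity $\le c$ satisfies \cref{atomsize} and \cref{affequid} with error $<\tfrac12 N(c)^{-\Psi_\cA(c,d)}$ and with all atoms realized, and (b) $3^{\Psi_\cA(c,d)}(\eta(c)+\delta(c))<\tfrac14\tau^{\Psi_\cA(c,d)}N(c)^{-\Psi_\cA(c,d)}$. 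Apply \cref{thm:subatom2} to $f^{(1)},\dots,f^{(R)}$ in degree $d$ with these parameters: this yields a syntactic refinement $\cB'\succeq_{syn}\cB$ of degree $d$ and complexity $\le C$, an element $s\in\T^{|\cB'|-|\cB|}$, and decompositions $f^{(i)}=f^{(i)}_1+f^{(i)}_2+f^{(i)}_3$ with properties (i)--(vii). For a $\cB$-atom $c$ define $\tilde f(c):=\{i\in[R]:\E[f^{(i)}\mid(c,s)]\ge\tau\}$; since the plurality value of $f$ on $(c,s)$ always lies in $\tilde f(c)$ we have $\tilde f(c)\ne\emptyset$, and $\tilde f$ is a function of $\cB$-atoms.

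\textbf{Dichotomy via cleaning.} First I would prove: either $\tilde f$ partially $(\mv d_\cB,\mv k_\cB)$-induces some constraint in $\cA$ (where $\mv d_\cB,\mv k_\cB$ are the degree/depth sequences of $\cB$), or $f$ is $\eps$-close to $\cA$-free. If the latter fails, define $g:\F^n\to[R]$ by $g(x)=f(x)$ when $\E[f^{(f(x))}\mid(\cB(x),s)]\ge\tau$ and, otherwise, $g(x)=$ the plurality value of $f$ on $(\cB(x),s)$. Then $g(y)\in\tilde f(\cB(y))$ for every $y$, and since the coarse atoms $(\cB(L_j(\vec x)))_j$ along any tuple of an $A^i$ are $\cB$-consistent with $A^i$, a copy of $(A^i,\sigma^i)$ in $g$ would force $\tilde f$ to partially $(\mv d_\cB,\mv k_\cB)$-induce $(A^i,\sigma^i)$ — so $g$ is $\cA$-free. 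For the distance: by property (vii) and \cref{atomsize} (using $r$-regularity), all but a $2\zeta$-fraction of $\F^n$ lies in ``good'' coarse atoms $c$ with $\E[f^{(i)}\mid(c,s)]\ge\E[f^{(i)}\mid c]-\zeta$ for all $i$; inside a good $c$ only the points $x$ with $\E[f^{(f(x))}\mid c]<\tau+\zeta$ are changed, a set of relative measure $\le R(\tau+\zeta)$. Hence $\Pr_x[f(x)\ne g(x)]\le R(\tau+\zeta)+2\zeta<\eps$, contradicting $\eps$-farness. So $\tilde f$ partially induces some constraint of $\cA$; hence $\tilde f\in\cG(\mv d_\cB,\mv k_\cB)$, and \cref{rem:induce} gives some $(A^i,\sigma^i)\in\cA$ partially induced by $\tilde f$ with $m_i\le\Psi_\cA(|\cB|,d)$, so $\ell_i\le\Psi_\cA(|\cB|,d)\le\Psi_\cA(C,d)=:\ell_\cA(\eps)$.

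\textbf{Counting the copies.} It remains to lower bound the number of copies of this $(A^i,\sigma^i)$ in $f$. Fix a witness: $\cB$-atoms $c_1,\dots,c_{m_i}$, $\cB$-consistent with $A^i$, with $\rho_j:=\E[f^{(\sigma^i_j)}\mid(c_j,s)]\ge\tau$. Since adding an element of $\U_{k+1}$ to a degree-$(d',k)$ polynomial keeps its degree and depth, the coefficients in any $(d',k)$-dependency set sum to $0$ modulo $p^{k+1}$; as the new coordinates of the subatoms $(c_j,s)$ are all equal to $s$, it follows that $((c_j,s))_j$ is $\cB'$-consistent with $A^i$. With $m=m_i$, $\ell=\ell_i$, and $L_1,\dots,L_m$ the forms of $A^i$,
\[
\Pr_{\vec x\in(\F^n)^\ell}\big[f\text{ induces }(A^i,\sigma^i)\text{ at }\vec x\big]\ \ge\ \E_{\vec x}\Big[\prod_{j=1}^m f^{(\sigma^i_j)}(L_j(\vec x))\cdot\Ind[\cB'(L_j(\vec x))=(c_j,s)]\Big].
\]
On $(c_j,s)$ one has $f^{(\sigma^i_j)}=\rho_j+f^{(\sigma^i_j)}_2+f^{(\sigma^i_j)}_3$ pointwise (property (ii)), so expanding the product over $\{0,2,3\}^m$ gives a main term $\prod_j\rho_j\cdot\Pr_{\vec x}[\cB'(L_j(\vec x))=(c_j,s)\ \forall j]$ — at least $\tau^m\cdot\tfrac12\|\cB'\|^{-m}>0$ by \cref{affequid} applied to the $r$-regular, $A^i$-consistent factor $\cB'$ — plus at most $3^m$ error terms. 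Any error term with a factor $f^{(\sigma^i_{j_0})}_2(L_{j_0}(\vec x))\,\Ind[\cB'(L_{j_0}(\vec x))=(c_{j_0},s)]$ is at most $\eta(|\cB'|)$ in absolute value: $\Ind[\cB'=b]$ is a linear combination of $\expo{\cdot}$ of degree-$\le d$ polynomials, so by modulation-invariance this factor has $U^{d+1}$-norm $\le\|f^{(\sigma^i_{j_0})}_2\|_{U^{d+1}}<\eta(|\cB'|)$ (property (iii)), the remaining factors are bounded by $1$, and $\{L_j\}$ has complexity $\le d$, so \cref{gowerscount} applies. Any remaining error term has a factor $f^{(\sigma^i_{j_0})}_3(L_{j_0}(\vec x))\,\Ind[\cB'(L_{j_0}(\vec x))=(c_{j_0},s)]$ and no $f_2$-factor; bounding the rest by $1$, using that $L_{j_0}(\vec x)$ is uniform on $\F^n$ and Cauchy--Schwarz with property (vi), it is at most $\Pr_y[\cB'(y)=(c_{j_0},s)]\cdot\sqrt{\E[|f^{(\sigma^i_{j_0})}_3|^2\mid(c_{j_0},s)]}\le2\|\cB'\|^{-1}\delta(|\cB|)$. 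By the choice (b) of $\eta,\delta$ (and $m\le\Psi_\cA(|\cB|,d)$, $\|\cB'\|\le N(|\cB'|)$) the total error is below half the main term, so the probability is $\ge\tfrac14\tau^m\|\cB'\|^{-m}=:\delta_\cA(\eps)>0$, i.e. $f$ induces $\ge\delta_\cA(\eps)\,p^{n\ell}$ copies.

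\textbf{Bookkeeping and the last claim.} The parameters are picked in the order $\zeta,\tau$ (from $\eps,R$), then $r,\eta,\delta$ as explicit functions of the complexity built from $\Psi_\cA(\cdot,d)$ and $N(\cdot)$; since the number of refinement rounds inside \cref{thm:subatom2} is bounded, $C=C_{\ref{thm:subatom2}}$ is a bounded tower in these data, and making $\eta,\delta$ decrease (resp.\ $r$ grow) fast enough validates (a), (b) for all complexities $\le C$ — this is the bookkeeping carried out in \cite{BFL12}. If $\cA$ is locally characterized, then (by the lemma characterizing such properties) we may take $\cA$ finite with all $\ell_i\le K+1$ for a constant $K$; then $\Psi_\cA(\cdot,d)\le\max_i m_i$ is a fixed constant, so $\ell_\cA(\eps):=\max_i m_i+1$ is independent of $\eps$, as claimed. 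I expect the counting step to be the main obstacle: $m=m_i$ is known only after fixing the factor (bounded via $\Psi_\cA(|\cB|,d)$), and the $f_3$-error — which \cref{thm:subatom2} controls only on the selected subatoms — must be forced below the exponentially small main term $\asymp\|\cB'\|^{-m}$, which is exactly why the plain Strong Decomposition Theorem does not suffice here.
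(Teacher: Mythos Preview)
Your overall architecture matches the paper's: apply the Subatom Selection theorem, clean $f$ to a function constant on $\cB$-atoms in a way controlled by the selected subatoms, locate a small constraint via $\Psi_\cA$, and then count copies by expanding $f=f_1+f_2+f_3$ inside the selected subatoms. The cleaning/dichotomy step and the $f_2$ estimate (your modulation-invariance argument is a slight variant of the paper's but equally valid) are fine.

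The gap is in your $f_3$ estimate. When you write ``bounding the rest by $1$'' and reduce the $f_3$ error term to
\[
\E_y\bigl[\,|f_3^{(\sigma_{j_0})}(y)|\cdot\Ind[\cB'(y)=(c_{j_0},s)]\,\bigr]\ \le\ 2\|\cB'\|^{-1}\delta(|\cB|),
\]
you have discarded the indicators $\Ind[\cB'(L_j(\vec x))=(c_j,s)]$ for $j\ne j_0$. Your main term, however, is of order $\tau^m\|\cB'\|^{-m}$. So you need $\delta(|\cB|)\lesssim \tau^m\|\cB'\|^{-(m-1)}$. But $\delta$ is fixed as a function \emph{before} invoking \cref{thm:subatom2}, it is evaluated at $|\cB|$, and the bound on $|\cB'|$ produced by the theorem depends on the function $\delta$ itself. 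Your choice (b), $\delta(c)\lesssim \tau^{\Psi_\cA(c,d)}N(c)^{-\Psi_\cA(c,d)}$, controls $\delta(|\cB|)$ only in terms of $|\cB|$, while $\|\cB'\|$ can be vastly larger. There is no way to close this circularity with your crude bound. (You yourself flag exactly this as ``the main obstacle'' in your final paragraph, but your parameter choice does not resolve it.)

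The paper's fix is to \emph{keep all $m$ indicators} in the $f_3$ error term, change variables so that $L_{j_0}(\vec x)=x_1$, and Cauchy--Schwarz only in $x_1$:
\[
\Bigl(\E_{\vec x}\bigl[|f_3(x_1)|\textstyle\prod_j\Ind[\cB'(L_j)=b_j']\bigr]\Bigr)^2
\le \E_{x_1}\bigl[|f_3(x_1)|^2\Ind[\cB'(x_1)=b_1']\bigr]\cdot
\E_{x_1}\Bigl(\E_{x_2,\dots,x_\ell}\textstyle\prod_j\Ind[\cB'(L_j)=b_j']\Bigr)^2.
\]
The first factor is $\lesssim \delta(|\cB|)^2/\|\cB'\|$ by property (vi) and \cref{atomsize}. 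The second factor is computed via near-orthogonality (\cref{dich}) on the doubled system $\{L_j(x_1,x_2,\dots,x_\ell)\}\cup\{L_j(x_1,y_2,\dots,y_\ell)\}$; a short lemma shows the $(d_i,k_i)$-dependency set of the doubled system has size $|\Lambda_i|^2 p^{k_i+1}$, giving a bound $\lesssim \prod_i|\Lambda_i|^2/\|\cB'\|^{2m-1}$. The upshot is that each $f_3$ error term is $\lesssim \delta(|\cB|)\cdot \prod_i|\Lambda_i|/\|\cB'\|^{m}$, i.e.\ it carries the \emph{same} $\|\cB'\|^{-m}$ scale as the main term. Now the comparison reduces to $\delta(|\cB|)\ll \zeta^m$, which can be arranged by taking $\delta(c)\asymp \zeta^{\Psi_\cA(c,d)}$ --- a function of $|\cB|$ alone, breaking the circularity.
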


\cref{thm:main3} immediately follows. Consider the following test: choose
uniformly at random $x_1,\dots,$ $x_{\ell_\cA(\eps)}$ $\in$ $\F^n$, let $H$
denote the affine space $\left\{x_1+\sum_{j=2}^{\ell_\cA(\eps)} \lambda_j x_j : \lambda_j
\in \F \right\}$, and check whether $f$ restricted to $H$ is $\cA$-free
or not, thus making $\leq p^{\ell_\cA(\eps)}$ queries. By
\cref{mainthm2}, if $f$ is $\eps$-far from $\cA$-freeness, this test
rejects with probability at least $\delta_\cA(\eps)$.
~\\

\begin{proofof}{\cref{mainthm2}}

\paragraph{Preliminaries.}
Fix a function $f: \F^n \to [R]$ that is $\eps$-far from being
$\cA$-free. For $i \in [R]$, define $f^{(i)}:\F^n \to \bits$ so that
$f^{(i)}(x)$ equals $1$ when $f(x) = i$ and equals $0$
otherwise. Additionally, set the following parameters, where
$\Psi_\cA$ is the compactness function from \cref{def:psi}:
{\allowdisplaybreaks
$$
\begin{array}{lclclclclcl}
\alpha(C) &=& p^{-2dC\Psi_\cA(C,d)},&\qquad& \rho(C) &=& r_{\ref{arank}}(d,\alpha(C)), &\qquad&\zeta &=& \frac{\eps}{8R},\\
\Delta(C) &=& \frac{1}{16}\zeta^{\Psi_\cA(C,d)},& &
\eta(C) &=& \frac{1}{8p^{dC\Psi_\cA(C,d)}}
\left(\frac{\eps}{24R}\right)^{\Psi_\cA(C,d)}.&&&&\\
\end{array}
$$
}

\paragraph{Decomposing by regular factors.}
Next, apply Theorem \ref{thm:subatom2} to the functions
$f^{(1)},f^{(2)}, \dots, f^{(R)}$ in order to get polynomial factors
$\cB' \succeq_{syn} \cB$ of complexity at most
$C_{\ref{thm:subatom2}}(\Delta, d, \rho, \zeta, \eta)$, an element $s \in
\T^{|\cB'|-|\cB|}$, and functions
$f_1^{(i)}, f_2^{(i)}, f_3^{(i)}: \F^n \to \R$ for each $i \in
[R]$ with the desired properties. The sequence of polynomials generating $\cB'$ will be denoted
by $P_1,\dots,P_{|\cB'|}$. Since $\cB'$ is a syntactic refinement, we
can assume  $\cB$ is generated by the polynomials $P_1,\dots,P_{|\cB|}$.
Let $C = |\cB|$ and $C' = |\cB'|$. Note that 
$\|\cB\| < p^{(k_{\max}+1)C} \leq p^{dC}$, where $k_{\max} \leq \left
  \lfloor (d-1)/(p-1)\right\rfloor$ is the maximum depth
of a polynomial in $\cB$.  Denote the degree of $P_i$ by
$d_i$ and the depth of $P_i$ by $k_i$.

\paragraph{Cleanup.}
Based on $\cB'$ and $\cB$, we  construct a
function $F: \F^n \to [R]$ that is $\frac{\eps}{2}$-close to $f$ and hence,
still violates $\cA$-freeness.  The ``cleaner'' structure of $F$ will help us locate the
induced constraint violated by $f$. 

The function $F$ is the same as $f$ except for the following: For every  atom $c$ of $\cB$, let $t_c=\arg\max_{j \in
[R]}\Pr[f(x)=j~|~\cB'(x)=(c,s)]$ be the most popular value inside the corresponding subatom $(c,s)$.
 
\begin{itemize}
\item {\bf Poorly-represented atoms:} If there exists $i \in [R]$ such that $|\Pr[f(x)=i~|~\cB(x)=c]-\Pr[f(x)=i~ |~ \cB'(x) = (c,s)]|>\zeta$, then set 
$F(z) = t_c$ for every $z$ in the atom $c$.

\item {\bf Unpopular values:} Otherwise, for any $z$ in the atom $c$ with $0< \Pr_x[f(x)=f(z)~|~\cB'(x) = (c,s)]<\zeta$, set $F(z)=t_c$.
\end{itemize}

A key property of the cleanup function $F$ is that it 
supports a value inside an atom $c$ of $\cB$ only if the original
function $f$ acquires the value on at least an $\zeta$ fraction of
the subatom $(c,s)$. Furthermore as the following lemma shows it is $\eps/2$-close to $f$, and therefore, it is not $\cA$-free.

\begin{lemma}
The cleanup function $F$ is $\eps/2$-close to $f$, and therefore, it is not $\cA$-free.
\end{lemma}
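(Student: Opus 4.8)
The plan is to reduce everything to the single inequality $\Pr_{x\in\F^n}[F(x)\neq f(x)]\le \eps/2$. Granting this, the ``therefore'' part is immediate: for any $\cA$-free function $g$ we have $\{x:f(x)\neq g(x)\}\subseteq\{x:f(x)\neq F(x)\}\cup\{x:F(x)\neq g(x)\}$, so $\Pr_x[F(x)\neq g(x)]\ge \Pr_x[f(x)\neq g(x)]-\Pr_x[f(x)\neq F(x)]>\eps-\eps/2=\eps/2>0$; thus $F$ agrees with no $\cA$-free function and hence is not $\cA$-free (in fact it is $\eps/2$-far from $\cA$-freeness). So the whole content is bounding the Hamming distance between $F$ and $f$.

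To bound $\Pr_x[F(x)\neq f(x)]$ I would classify each atom $c$ of the coarse factor $\cB$ as \emph{poorly-represented} --- meaning some $i\in[R]$ has $\bigl|\E[f^{(i)}\mid c]-\E[f^{(i)}\mid (c,s)]\bigr|>\zeta$ --- or \emph{well-represented} otherwise, and split $\Pr_x[F(x)\neq f(x)]$ as the sum of the contributions of these two classes. On a poorly-represented atom the construction of $F$ may overwrite $f$ everywhere, so there the only handle is the \emph{total measure} of poorly-represented atoms; on a well-represented atom $c$, the construction changes the value at $z\in c$ only when $f(z)$ is an \emph{unpopular} value of the subatom $(c,s)$, i.e. $0<\E[f^{(f(z))}\mid (c,s)]<\zeta$, so there I must bound the measure inside $c$ of points taking an unpopular value.

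For the poorly-represented atoms, item~(vii) of \cref{thm:subatom2} says at most a $\zeta$ fraction of the atoms of $\cB$ (counted with uniform weight) are poorly-represented. Since $\cB$ is $\rho$-regular and $\rho(|\cB|)=r_{\ref{arank}}(d,\alpha(|\cB|))$ is at least $r_{\ref{rankreg}}(d,\alpha(|\cB|))$, \cref{atomsize} applies with error $\alpha(|\cB|)$: the factor $\cB$ has exactly $\|\cB\|$ atoms, each of measure $\tfrac1{\|\cB\|}\pm\alpha(|\cB|)$. Because $\alpha$ was defined with the extra factor $\Psi_\cA$ in its exponent, $\alpha(|\cB|)\le p^{-2d|\cB|}$ while $\|\cB\|\le p^{d|\cB|}$, so $\alpha(|\cB|)\|\cB\|\le 1$. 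Hence the total measure of poorly-represented atoms is at most $\zeta\|\cB\|\bigl(\tfrac1{\|\cB\|}+\alpha(|\cB|)\bigr)\le 2\zeta$, which, since $\zeta=\eps/(8R)\le\eps/8$, is at most $\eps/4$.

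For a well-represented atom $c$, put $V_c=\{j\in[R]:0<\E[f^{(j)}\mid (c,s)]<\zeta\}$; by construction every $z\in c$ with $F(z)\neq f(z)$ has $f(z)\in V_c$. Therefore $\Pr_{x\in c}[F(x)\neq f(x)]\le\sum_{j\in V_c}\E[f^{(j)}\mid c]\le\sum_{j\in V_c}\bigl(\E[f^{(j)}\mid (c,s)]+\zeta\bigr)<2\zeta|V_c|\le 2\zeta R=\eps/4$, where the middle inequality uses that $c$ is well-represented and the last step uses the definition of $V_c$. Averaging over well-represented atoms weighted by their measure shows the well-represented atoms contribute at most $\eps/4$ to $\Pr_x[F(x)\neq f(x)]$. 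Adding the two contributions gives $\Pr_x[F(x)\neq f(x)]\le\eps/4+\eps/4=\eps/2$, as needed. The only genuinely delicate step --- the one I would double-check most carefully --- is the quantitative bookkeeping: verifying that the regularity level forced by $r=\rho$ is strong enough for \cref{atomsize} to be applied with an error $\alpha(|\cB|)$ that is negligible against $1/\|\cB\|$ (this is precisely why $\alpha$ carries the factor $\Psi_\cA$ in its exponent), and that with $\zeta=\eps/(8R)$ both the poorly-represented loss and the unpopular-value loss come out to $\eps/4$. Everything else is just unfolding the definition of $F$ and of $\zeta$-representation in \cref{def:rep}.
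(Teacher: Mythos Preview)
Your proof is correct and follows essentially the same approach as the paper: split the damage into the poorly-represented atoms (bounded by $\zeta\|\cB\|\cdot(\|\cB\|^{-1}+\alpha(C))\le 2\zeta$ via item~(vii) of \cref{thm:subatom2} and \cref{atomsize}) and the unpopular values on well-represented atoms (bounded by $2\zeta R=\eps/4$ using the $\zeta$-representation to pass from the subatom density to the atom density), then add. Your treatment is slightly more explicit than the paper's in spelling out the triangle-inequality ``therefore'' part and the verification that $\alpha(C)\|\cB\|\le 1$, but the argument is the same.
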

\begin{proof}
The first step applies to at most
$\zeta \|\cB\|$ atoms, since $\cB'$ $\zeta$-represents
$\cB$ with respect to each $f^{(1)}, \dots, f^{(R)}$. By
\cref{atomsize}, each atom occupies at most $\frac{1}{\|\cB\|} + \alpha(C)$ fraction of the entire
domain. So, the fraction of points whose values are set in the first
step is at most $\zeta \|\cB\|  (\frac{1}{\|\cB\|} +
\alpha(C))  < 2\zeta$.

In the second step, if
$\Pr[f(x)=f(z)~|~\cB'(x) = (c,s)]<\zeta$, then 
$\Pr[f(x)=f(z)~|~\cB(x) = c]<\Pr[f(x)=f(z)~|~\cB'(x) = (c,s)]+\zeta < 2\zeta$. Hence, the fraction of the
points  whose values are set in the second step is at most $2\zeta R = \eps/4$.

Thus, the distance of $F$ from $f$ is bounded by $2\zeta + \eps/4 < \eps/2$.
\end{proof}

\paragraph{Locating a violated constraint.}
We now want to use $F$ to ``find'' the affine constraint induced in
$f$. Setting $\mv{d}=(d_1,\dots,d_C)$ and
$\mv{k}=(k_1,\dots,k_C)$, we have by \cref{rem:pinduce} that the big picture function $F_{\cB}$ of $F$
will partially $(\mv{d},\mv{k})$-induce some constraint from $\cA$,
and hence by \cref{rem:induce}, it will partially $(\mv{d},
\mv{k})$-induce some $(A,\sigma) \in \cA$ of size $m \leq
\Psi_\cA(C,d)$ on $\ell$ variables. We will show that the original function $f$ violates many instances of this constraint.

 Denote the affine
forms in $A$ by $(L_1, \dots, L_m)$ and the vector $\sigma$ by
$(\sigma_1, \dots, \sigma_m)$. Since we can assume
$\ell\leq m$ (without loss of generality by making a change of
variables), we can now define
\begin{equation}\label{eqn:ell}
\ell_{\cA}(\eps)=\Psi_\cA(C_{\ref{thm:subatom2}}(\Delta, \eta,  \rho, \zeta, R),d).
\end{equation}

Let $b_1,\ldots,b_m \in \prod_{i=1}^C
\U_{k_i+1}$ correspond to the atoms of $\cB$ where $(A,\sigma)$ is
partially $(\mv{d}, \mv{k})$-induced by $F_{\cB}$. That is, $b_1,\ldots,b_m$ are consistent
with $A$, and $\sigma_i\in F_{\cB}(b_i)$ for every $i \in
[m]$. Also, let $b_1',\dots,b_{m}' \in \prod_{i = 1}^{C'} \U_{k_i +
  1}$ index the associated subatoms in $\cB'$, obtained by letting
$b_j' = (b_j,s)$ for every $j \in [m]$. 

\begin{lemma}\label{subatomconsistent}
The subatoms $b_1',\dots, b_m'$ are consistent with $A$.
\end{lemma}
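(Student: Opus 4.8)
The plan is to verify $\cB'$-consistency of $b_1',\dots,b_m'$ directly from \cref{consistent}, checking coordinate by coordinate and splitting the defining polynomials of $\cB'$ into those of $\cB$ and the new ones. Write $\cB'$ as defined by $P_1,\dots,P_{C'}$ with $P_1,\dots,P_C$ defining $\cB$, and let $d_i,k_i$ be the degree and depth of $P_i$; recall that $b_j'=(b_j,s)$ where $b_j\in\prod_{i\le C}\U_{k_i+1}$ depends on $j$ but $s=(s_{C+1},\dots,s_{C'})\in\prod_{C<i\le C'}\U_{k_i+1}$ is the \emph{same} for every $j$. By \cref{consistent} it suffices to show, for each $i\in[C']$, that $(b_1')_i,\dots,(b_m')_i$ are $(d_i,k_i)$-consistent with $A$. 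For $i\le C$ this is immediate, since $(b_j')_i=(b_j)_i$ and we are given that $b_1,\dots,b_m$ are $\cB$-consistent with $A$; note this half uses nothing about regularity of $\cB$ or $\cB'$.

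All the content is in the coordinates $i$ with $C<i\le C'$, where $(b_j')_i=s_i$ does not depend on $j$. For such $i$, $(d_i,k_i)$-consistency asks that $\sum_{j=1}^m\lambda_j s_i=\bigl(\sum_{j=1}^m\lambda_j\bigr)s_i=0$ in $\T$ for every $(\lambda_1,\dots,\lambda_m)$ in the $(d_i,k_i)$-dependency set of $A$ (\cref{dependency}); since $s_i\in\U_{k_i+1}$, this is implied by the key claim that $\sum_{j=1}^m\lambda_j\equiv 0\pmod{p^{k_i+1}}$. I will prove this claim for an arbitrary affine constraint $A=(L_1,\dots,L_m)$ on $\ell$ variables and arbitrary $(\lambda_1,\dots,\lambda_m)$ in its $(d,k)$-dependency set (where, as in \cref{dependency}, $d>k(p-1)$; the case $k=0$ being handled identically with classical polynomials).

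To prove the claim, set $a=d-k(p-1)\ge 1$; since $n$ is large (so $n(p-1)\ge a$), pick integers $0\le a_1,\dots,a_n\le p-1$ with $\sum_i a_i=a$, and let $P(y_1,\dots,y_n)=\tfrac{|y_1|^{a_1}\cdots|y_n|^{a_n}}{p^{k+1}}\bmod 1$. By \cref{struct}~(iii) this is a polynomial of degree exactly $d$ and depth exactly $k$ (the exponent sum $a$ forces degree $a+k(p-1)=d$, and the single denominator $p^{k+1}$ forces depth $k$), so the defining identity of the $(d,k)$-dependency set applies to it. Evaluating that identity at $x_1=(1,1,\dots,1)\in\F^n$ and $x_2=\cdots=x_\ell=0$: every $L_j$ is affine, so $w_{j,1}=1$ and hence $L_j(x_1,\dots,x_\ell)=x_1=(1,\dots,1)$ for all $j$, which gives
\[
0=\sum_{j=1}^m\lambda_j\,P\bigl(L_j(x_1,\dots,x_\ell)\bigr)=\Bigl(\sum_{j=1}^m\lambda_j\Bigr)P(1,\dots,1)=\Bigl(\sum_{j=1}^m\lambda_j\Bigr)\frac{1}{p^{k+1}}\bmod 1,
\]
i.e.\ $\sum_{j=1}^m\lambda_j\equiv 0\pmod{p^{k+1}}$, as claimed. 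There is no serious obstacle in this argument; the only point to watch is that \cref{struct}~(iii) genuinely produces a degree-$d$, depth-$k$ polynomial for the chosen exponents, which is exactly where the inequality $d>k(p-1)$ (guaranteed by \cref{dependency}) and the largeness of $n$ (needed to distribute the exponent $a$ among at most $p-1$ per coordinate) are used.
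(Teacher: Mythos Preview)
Your proof is correct and follows the same outline as the paper's: split the coordinates of $\cB'$ into those coming from $\cB$ (where consistency is inherited) and the new ones, and for the latter show that the constant sequence $(s_i,\dots,s_i)$ is $(d_i,k_i)$-consistent. The paper compresses the last step into the single remark ``a constant function is of degree $\le d_i$'' (so adding the constant $s_i$ to any degree-$d_i$, depth-$k_i$ polynomial keeps degree and depth, and subtracting the two dependency identities yields $(\sum_j\lambda_j)s_i=0$); your explicit construction of a degree-$d_i$, depth-$k_i$ monomial and evaluation at $x_1=(1,\dots,1)$ to extract $\sum_j\lambda_j\equiv 0\pmod{p^{k_i+1}}$ is a more self-contained way to reach the same conclusion, and has the small advantage of not needing to momentarily relax the paper's standing zero-shift convention.
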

\begin{proof}
Since $b_1, \dots, b_m$ are already consistent with $A$, we only need
to show that for every $i \in [C+1,C']$, the sequence $(b'_{1,i},
\dots, b'_{m,i}) = (s_{i-C}, s_{i-C}, \dots, s_{i-C})$ is $(d_i,
k_i)$-consistent. This holds because a constant function is of degree
$\leq d_i$.
\end{proof}

\paragraph{The main analysis.}
Let $\mv{x}=(x_1,\ldots,x_{\ell})$ where $x_1,\ldots,x_\ell$ are independent random variables taking values in $\F^n$ uniformly.   
Our goal is to prove a lower bound on
\begin{equation}\label{eqn:obj}
\Pr_{\mv{x}}\left[f(L_1(\mv{x}))=\sigma_1
  \wedge\cdots\wedge f(L_m(\mv{x}))=\sigma_m\right] 
=\E_{\mv{x}}\left[f^{(\sigma_1)}(L_1(\mv{x}))\cdots
  f^{(\sigma_m)}(L_m(\mv{x}))\right].
\end{equation}
The theorem obviously follows if the above expectation is larger than the respective $\delta_\cA(\eps)$. We rewrite the
expectation as
\begin{equation}\label{eqn:obj2}
\E_{\mv{x}}\left[(f_1^{(\sigma_1)}+
  f_2^{(\sigma_1)} + f_3^{(\sigma_1)})(L_1(\mv{x}))\cdots
  (f_1^{(\sigma_m)}+  f_2^{(\sigma_m)} + f_3^{(\sigma_m)})(L_m(\mv{x}))\right].
\end{equation}

We can expand the expression inside the expectation as a sum of $3^m$
terms. The expectation of any term involving $f_2^{(\sigma_j)}$
for any $j \in [m]$ is bounded in magnitude by 
$\|f_2^{(\sigma_j)}\|_{U^{d+1}} \leq \eta(|\cB'|)$, by
\cref{gowerscount} and the fact that the complexity of $A$ is
bounded by $d$. Hence, the expression (\ref{eqn:obj2}) is at least
\begin{equation*}
\E_{\mv{x}}\left[(f_1^{(\sigma_1)}+
  f_3^{(\sigma_1)})(L_1(\mv{x}))\cdots
  (f_1^{(\sigma_m)}+  f_3^{(\sigma_m)})(L_m(\mv{x}))\right]
- 3^{m} \eta(|\cB'|).
\end{equation*}
Now, because of the non-negativity of $f_1^{(\sigma_j)}+f_3^{(\sigma_j)}$
for every $j \in [m]$, this is at
least
\begin{equation}\label{eqn:obj3}
\E_{\mv{x}}\left[
\left(f_1^{(\sigma_1)}+  f_3^{(\sigma_1)}\right)(L_1(\mv{x}))\cdots
  \left(f_1^{(\sigma_m)}+  f_3^{(\sigma_m)}\right)(L_m(\mv{x}))
\prod_{j \in [m]}1_{[\cB'(L_j)=b_j']} 
\right]- 3^{m} \eta(|\cB'|),
\end{equation}
where $1_{[\cB'(L_j)=b_j']}$ is the indicator function of the event $\cB'(L_j(\mv{x}))=b_j'$.
In other words, now we are only counting  patterns that arise from the selected subatoms $b_1',
\dots, b_m'$. We next expand the product inside the expectation into $2^m$
terms. We will show that the contribution from each of the 
$2^m - 1$ terms involving  $f_3^{(\sigma_k)}$ for any $k \in [m]$ is small. Such a term is trivially bounded from above by
\begin{equation}\label{eqn:obj6}
\E_{\mv{x}}\left[ \left|f_3^{(\sigma_k)}(L_k(\mv{x})) \right|
    \prod_{j \in [m]}1_{[\cB'(L_j)=b_j']})
\right].
\end{equation}
Without loss of generality, we assume that $k=1$. This is convenient as by \cref{defaffine}~(i) we have $L_1(\mv{x})=x_1$. (For other values of $k$, we can do a change of variables, replacing $x_1$ with $L_k(\mv{x})$, so that we can assume $L_k(\mv{x})=x_1$.)
With the assumption $k=1$, the  square of (\ref{eqn:obj6}) is equal to the following. 
\begin{equation}\label{eq:afterCS}
\left(\E_{\mv{x}} \left[ \left|f_3^{(\sigma_k)}(x_1) \right|
   \prod_{j \in [m]}1_{[\cB'(L_j)=b_j']}
\right] \right)^2 \leq \E_{x_1}\left[|f_3^{(\sigma_k)}(x_1)|^2 
 1_{[\cB'(L_k)=b_k']} \right]
\E_{x_1}\left(\E_{x_2,\dots,x_\ell}\prod_{j \in [m]}1_{[\cB'(L_j)=b_j']}\right)^2.
\end{equation}
By \cref{thm:subatom2}~(vi) and \cref{atomsize}, we have  
\begin{equation}
\label{eq:firstTermafterCS}
\E_{x_1}\left[|f_3^{(\sigma_k)}(x_1)|^2 
 1_{[\cB'(L_k)=b_k']} \right] \le \Delta^2(C) \Pr_{x_1}[\cB'(x_1) = b_k'] \le \Delta^2(C)  \left(\frac{1}{\|\cB'\|}+\alpha(C')\right) \le \frac{2 \Delta^2(C)}{\|\cB'\|}.
\end{equation}
Let $\mv{y}=(y_2,\ldots,y_{\ell})$ where $y_2,\ldots,y_\ell$ are independent random variables taking values in $\F^n$ uniformly.  The second term in the right hand side of~(\ref{eq:afterCS}) is equal to
{\allowdisplaybreaks
\begin{align}
&\frac{1}{\|\cB'\|^{2m}}\E_{x_1}\left[\left(\E_{x_2,\dots,x_\ell}\prod_{i \in [C'] \atop j \in
    [m]}
  \frac{1}{p^{k_i+1}}\sum_{\lambda_{i,j}=0}^{p^{k_i+1}-1}\expo{\lambda_{i,j}\cdot(P_i(L_j'(\mv{x}))-b'_{i,j})}\right)^2\right] \nonumber\\
&=\frac{1}{\|\cB'\|^{2m}} \E_{x_1}\left[\left(\sum_{(\lambda_{i,j}) \in 
\atop \prod_{i, j} [0,p^{k_i + 1}-1]} \expo{-\sum_{i \in [C']\atop j \in
      [m]} \lambda_{i,j}b'_{i,j}} \E_{x_2,\dots,x_\ell}\expo{\sum_{i \in [C']\atop j \in
      [m]} \lambda_{i,j}P_i(L_j(\mv{x}))}\right)^2\right]\nonumber\\
&\leq \frac{1}{\|\cB'\|^{2m}}\sum_{(\lambda_{i,j}), (\tau_{i,j}) \in
\atop \prod_{i, j} [0,p^{k_i + 1}-1]} \left|\E_{\mv{x}, \mv{y}}\left[\expo{\sum_{i \in [C']\atop j \in
      [m]}
    \lambda_{i,j}P_i(L_j(\mv{x}))}\expo{-\sum_{i \in [C']\atop j \in
      [m]}
    \tau_{i,j}P_i(L_j(x_1,\mv{y}))}\right] \right|.\label{eqn:obj7}
\end{align} }

We can bound the above using \cref{dich}. Let $A'$ denote the set of $2m$
linear forms: $\set{L_j(x_1,x_2,\dots, x_\ell) \mid j \in [m]} \cup
\set{L_j(x_1,y_2,   \dots, y_\ell) \mid j \in [m]}$ in variables $x_1,\ldots,x_\ell,y_2,\ldots,y_\ell$. Let $\Lambda_i$
and $\Lambda_i'$ denote the $(d_i,k_i)$-dependency set of $A$ and
$A'$  respectively.
\begin{lemma}
For each $i$, $|\Lambda_i'| = |\Lambda_i|^2\cdot p^{k_i+1}$
\end{lemma}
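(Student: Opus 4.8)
The plan is to compute $|\Lambda_i'|$ by introducing a relaxed dependency set that records the ``constant part'' of a would-be dependency, and then to exploit the fact that the two copies of $A$ sitting inside $A'$ overlap only in the variable $x_1$.

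Concretely, for $\lambda = (\lambda_1,\dots,\lambda_m) \in [0,p^{k_i+1}-1]^m$ and a polynomial $P\colon \F^n\to\T$ of degree $d_i$ and depth $k_i$, set $\Phi^P_\lambda(x_1,\dots,x_\ell) = \sum_{j=1}^m \lambda_j P(L_j(x_1,\dots,x_\ell))$, so that by \cref{dependency}, $\lambda\in\Lambda_i$ iff $\Phi^P_\lambda\equiv 0$ for every such $P$. Since $L_1=x_1$ and each $L_j$ is affine (\cref{defaffine}), we have $L_j(x_1,0,\dots,0)=x_1$ for all $j$, hence $\Phi^P_\lambda(x_1,0,\dots,0) = \bigl(\sum_j\lambda_j\bigr)P(x_1)$, an element of $\T$ depending on $\sum_j\lambda_j$ only modulo $p^{k_i+1}$. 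Define $G_i\subseteq[0,p^{k_i+1}-1]^m$ to be the set of $\lambda$ such that $\Phi^P_\lambda$ is independent of $x_2,\dots,x_\ell$ for every admissible $P$; for such $\lambda$ this forces $\Phi^P_\lambda\equiv\bigl(\sum_j\lambda_j\bigr)P(x_1)$. Using \cref{struct}~(iii) to produce a polynomial of degree exactly $d_i$ and depth exactly $k_i$ whose range generates $\U_{k_i+1}$ (for instance $x\mapsto |x_1|^{d_i-k_i(p-1)}/p^{k_i+1}\bmod 1$, available since $d_i>k_i(p-1)$), one obtains $\Lambda_i = \{\lambda\in G_i : \sum_j\lambda_j\equiv 0 \pmod{p^{k_i+1}}\}$.

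Next I would show that translating the first coordinate identifies $G_i$ with $\Lambda_i\times\Z_{p^{k_i+1}}$. Writing $\mathbf{e}_1=(1,0,\dots,0)$, the map $(\mu,t)\mapsto \mu+t\mathbf{e}_1 \bmod p^{k_i+1}$ carries $\Lambda_i\times\Z_{p^{k_i+1}}$ bijectively onto $G_i$ and sends $(\mu,t)$ to a tuple whose coordinate-sum is $\equiv t \pmod{p^{k_i+1}}$: membership in $G_i$ holds because $\Phi^P_{\mu+t\mathbf{e}_1} = \Phi^P_\mu + tP(x_1) = tP(x_1)$ is independent of $x_2,\dots,x_\ell$; injectivity follows by comparing coordinates $\ge 2$ and then using $\sum_j\mu_j\equiv 0$ for $\mu\in\Lambda_i$; surjectivity holds because for $\lambda\in G_i$ with $s:=\sum_j\lambda_j \bmod p^{k_i+1}$ one has $\Phi^P_{\lambda-s\mathbf{e}_1} = sP(x_1)-sP(x_1)=0$, so $\lambda-s\mathbf{e}_1\in\Lambda_i$. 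It follows that for every residue $v$, the quantity $N(v):=|\{\lambda\in G_i : \sum_j\lambda_j\equiv v \pmod{p^{k_i+1}}\}|$ equals $|\Lambda_i|$; in particular $|G_i| = |\Lambda_i|\,p^{k_i+1}$.

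Finally, writing a generic index tuple of $A'$ as $(\lambda_1,\dots,\lambda_m,\tau_1,\dots,\tau_m)$, the defining condition for $\Lambda_i'$ is that $\Phi^P_\lambda(x_1,x_2,\dots,x_\ell) + \Phi^P_\tau(x_1,y_2,\dots,y_\ell)\equiv 0$ for all variables and all admissible $P$. I claim this is equivalent to $\lambda,\tau\in G_i$ together with $\sum_j\lambda_j + \sum_j\tau_j\equiv 0\pmod{p^{k_i+1}}$. The ``if'' direction is immediate, as both terms then equal scalar multiples of $P(x_1)$. For ``only if'', substituting $y_2=\dots=y_\ell=0$ yields $\Phi^P_\lambda(x_1,x_2,\dots,x_\ell)\equiv -\bigl(\sum_j\tau_j\bigr)P(x_1)$, which exhibits $\Phi^P_\lambda$ as independent of $x_2,\dots,x_\ell$ (hence $\lambda\in G_i$), and symmetrically $\tau\in G_i$; the identity $\bigl(\sum_j\lambda_j+\sum_j\tau_j\bigr)P(x_1)\equiv 0$ for all $P$ then gives the congruence. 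Combining this characterization with the fiber count $N(v)=|\Lambda_i|$ gives
$$|\Lambda_i'| = \sum_{v\in\Z_{p^{k_i+1}}} N(v)\,N(-v) = \sum_{v\in\Z_{p^{k_i+1}}} |\Lambda_i|^2 = p^{k_i+1}\,|\Lambda_i|^2.$$
The only step that needs genuine care is the claim that no nonzero residue modulo $p^{k_i+1}$ annihilates $P(x_1)$ for every admissible $P$; this is exactly where \cref{struct}~(iii) and the inequality $d_i>k_i(p-1)$ (which holds for any polynomial of degree $d_i$ and depth $k_i$) enter. Everything else is bookkeeping around the single shared variable $x_1$, with the extra factor $p^{k_i+1}$ being precisely the freedom in distributing the $x_1$-part of a dependency between the two halves of $A'$.
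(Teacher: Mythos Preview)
Your proof is correct and takes essentially the same approach as the paper's: both arguments substitute $y_2=\cdots=y_\ell=0$ (resp.\ $x_2=\cdots=x_\ell=0$) to see that each half of a $\Lambda_i'$-dependency is forced to equal a scalar multiple of $P(x_1)$, and both exploit the $\Z_{p^{k_i+1}}$-freedom in shifting the $L_1$-coefficient; you simply organize this via the auxiliary set $G_i$ and a fiber count $N(v)=|\Lambda_i|$, whereas the paper gives separate lower- and upper-bound maps between $\Lambda_i'$ and $\Lambda_i\times\Lambda_i\times\Z_{p^{k_i+1}}$.

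One small caveat: your single-variable witness $|x_1|^{d_i-k_i(p-1)}/p^{k_i+1}$ has degree exactly $d_i$ and depth exactly $k_i$ only when $d_i-k_i(p-1)<p$ (for instance with $p=2$, $k_i=0$, $d_i=2$ it collapses to the linear function $|x_1|/2$). To cover the general case, replace it by a multivariate monomial $\prod_{j}|x_j|^{a_j}/p^{k_i+1}$ with $\sum_j a_j=d_i-k_i(p-1)$ and each $a_j\in[1,p-1]$, which still attains the value $1/p^{k_i+1}$ and hence generates $\U_{k_i+1}$.
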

\begin{proof}
Recall that  $L_1(\mv{x}) = L_1(x_1,\mv{y}) = x_1$. For any $\lambda, \tau \in \Lambda_i$ and any
$\alpha \in [0, p^{k_i+1}-1]$, note that $(\lambda_1 + \alpha \pmod {p^{k_i+1}},
\lambda_2, \dots, \lambda_m, \tau_1 -\alpha \pmod {p^{k_i+1}}, \tau_2,
\dots, \tau_m) \in \Lambda_i'$. Hence,
$|\Lambda_i'| \geq |\Lambda_i|^2 \cdot p^{k_i+1}$.
To show
$|\Lambda_i'| \leq |\Lambda_i|^2 \cdot p^{k_i+1}$, we give a map from
$\Lambda_i'$ to $\Lambda_i \times \Lambda_i$ that is
$p^{k_i+1}$-to-$1$. Suppose $\sum_{j=1}^m \lambda_j
Q(L_j(x_1,x_2,\dots,x_\ell))+\sum_{j=1}^m \tau_j
Q(L_j(x_1,y_2,\dots,y_\ell))\equiv 0$ for every polynomial $Q$ of
degree $d_i$ and depth $k_i$. Setting $x_2=\ldots=x_\ell=0$ shows that 
$$\sum_{j=1}^m \tau_j Q(L_j(x_1,y_2,\dots,y_\ell)) = -\left(\sum_{j=1}^m \lambda_j \right) Q(x_1),$$ and similarly setting $y_2=\ldots=y_\ell=0$ shows 
$$\sum_{j=1}^m \lambda_j Q(L_j(x_1,x_2,\dots,x_\ell))=-\left(\sum_{j=1}^m \tau_j \right) Q(x_1).$$ 
In particular $\sum_{j=1}^m \lambda_j= -\sum_{j=1}^m \tau_j$. Consequently,
$$(\lambda,\tau)  \mapsto \left(\left(-\sum_{j=2}^m \lambda_j \pmod {p^{k_i+1}}, \lambda_2, \dots, \lambda_m\right),
\left(-\sum_{j=2}^m \tau_i \pmod {p^{k_i+1}}, \tau_2, \dots, \tau_m\right)\right)$$ is a map from $\Lambda_i'$ to $\Lambda_i
\times \Lambda_i$. To see that it is $p^{k_i+1}$-to-$1$, note that $$(\lambda_1+\tau_1-\gamma \pmod {p^{k_i+1}},\lambda_2,\ldots,\lambda_m,\gamma,\tau_2,\ldots,\tau_m) \in \Lambda'_i$$ for every $\gamma \in [0,p^{k_i+1}-1]$, and these elements are all mapped to the same element in $\Lambda_i
\times \Lambda_i$. 
\end{proof}

Applying \cref{dich} (just as in the proof of \cref{affequid}), we get
that 
$$(\ref{eqn:obj7}) \le \frac{1}{\|\cB'\|^{2m}} \left(\prod_{i=1}^{C'} |\Lambda_i|^2
  p^{k_i+1} + \|\cB'\|^{2m}\alpha(C')\right)
\leq \frac{\prod_{i=1}^{C'}
    |\Lambda_i|^2}{\|\cB'\|^{2m}} + \alpha(C').$$
Combining this with \cref{eq:firstTermafterCS} and  \cref{eq:afterCS}, we obtain
\begin{align}\label{eqn:obj8}
(\ref{eqn:obj6}) \le  2\Delta(C)\sqrt{\frac{\prod_{i=1}^{C'}
    |\Lambda_i|^2}{\|\cB'\|^{2m}} + \alpha(C')}.
\end{align}

Finally, we turn to  the main term in the expansion of \cref{eqn:obj3}. We know from \cref{subatomconsistent} that the subatoms $b_1',\dots, b_m'$ are consistent with $A$. Thus 
\begin{align}
&\E_{\mv{x}}\left[
f_1^{(\sigma_1)}(L_1(\mv{x}))\cdots
 f_1^{(\sigma_m)}(L_m(\mv{x}))\cdot
\prod_{j \in [m]}1_{[\cB'(L_j)=b_j']}
\right]  \nonumber\\
&= \Pr[\cB'(L_1(\mv{x})) = b'_1\wedge\cdots\wedge\cB'(L_m(\mv{x})) = b'_m] \cdot \nonumber\\
&\qquad\qquad \E_{\mv{x}}\left[
f_1^{(\sigma_1)}(L_1(\mv{x}))\cdots
 f_1^{(\sigma_m)}(L_m(\mv{x}))|
\forall j\in [m], ~ \cB'(L_j(\mv{x})) = b'_j
\right]\nonumber\\
&\geq \left(\frac{\prod_{i=1}^{C'}|\Lambda_i|}{\|\cB'\|^m}
  -\alpha(C')\right)
\zeta^m.\label{eqn:obj9}
\end{align}
Let us justify the last line. The first term is due to
 the lower bound on the probability from Theorem
\ref{affequid}. The second term in (\ref{eqn:obj9}) follows since  each $f_1^{(\sigma_j)}$ is constant on the atoms of $\cB'$,
and because by construction, the big picture function $F_{\cB}$ of the
cleanup function $F$, on which $(A,\sigma)$ was partially induced,
supports a value inside an atom $b$ of $\cB$ only if the original
function $f$ acquires the value on at least an $\zeta$ fraction of
the subatom $(c,s)$.

Setting $\beta =
(\nfrac{\prod_{i=1}^{C'}|\Lambda_i|}{\|\cB'\|})^m$ and
combining the bounds from (\ref{eqn:obj3}), (\ref{eqn:obj8}) and
(\ref{eqn:obj9}), we conclude
\begin{align*}
(\ref{eqn:obj}) &\ge (\beta-\alpha(C'))\cdot \left(\frac{\eps}{8R}\right)^m -
2^{m+1}\Delta(C)\sqrt{\beta^2 + \alpha(C')} - 3^m\cdot
\eta(C')\\
&>
\frac{\beta}{2}\cdot\left(\frac{\eps}{8R}\right)^{\Psi_\cA(C,d)} -
2^{\Psi_\cA(C,d)+1}\beta \cdot \Delta(C) - 3^{\Psi_\cA(C,d)}\cdot
\eta(C')
\end{align*}
Since $\|\cB'\|\leq p^{dC'}$, $\frac{1}{\|\cB'\|^{\Psi_\cA(C,d)}} \leq
\beta \leq 1$, $\Delta(C) =
\frac{1}{16}(\frac{\eps}{8R})^{\Psi_{\cA}(d,C)}$, $\eta(C') <
\frac{1}{8 \|\cB'\|^{\Psi_\cA(C,d)}}
\left(\frac{\eps}{24R}\right)^{\Psi_\cA(C,d)}$, and
 both $C$ and $C'$ are upper-bounded by
 $C_{\ref{thm:subatom2}}(\Delta, \eta,  \rho, \zeta, R)$, we can now
 define
\begin{equation}\label{eqn:delta}
\delta_{\cA}(\eps)=\frac14p^{-d\Psi_A(C_{\ref{thm:subatom2}}(\Delta,
  \eta, \rho, \zeta, R))C_{\ref{thm:subatom2}}(\Delta, \eta, \rho,
  \zeta,
  R)}\cdot\left(\frac{\eps}{8R}\right)^{\Psi_\cA(C_{\ref{thm:subatom2}}(\Delta,
  \eta, \rho, \zeta, R),d)}
\end{equation}
to conclude the proof.

\end{proofof}

\bibliographystyle{alpha}
\bibliography{testing}

\end{document}